\newtheorem{theorem}{Theorem}
\newtheorem{lemma}{Lemma}
\newtheorem{assumption}{Assumption}
\newtheorem{remark}{Remark}
\newtheorem{corollary}{Corollary}
\newtheorem{proposition}{Proposition}
\newcommand{\tabincell}[2]{\begin{tabular}{@{}#1@{}}#2\end{tabular}}
\begin{document}
	
	\title{ Non-Orthogonal Multiple Access for UAV-Aided Heterogeneous Networks: A Stochastic Geometry Model}
	
	\author{Cunzhuo Zhao, Yuanwei Liu, \IEEEmembership{Senior Member,~IEEE,} Yunlong Cai, \IEEEmembership{Senior Member,~IEEE,}  Minjian Zhao, \IEEEmembership{Member,~IEEE,} and Zhiguo Ding, \IEEEmembership{Fellow,~IEEE}
		
		\thanks{
			C. Zhao is with Pengcheng Laboratory, Shenzhen 518052, China (Email: zhaoczh@pcl.ac.cn)
			
			Y. Liu is with the School of Electronic Engineering and Computer Science, Queen Mary University of London, London E1 4NS, U.K. (Email: yuanwei.liu@qmul.ac.uk).
			
			Y. Cai and M. Zhao are with the College of Information Science and Electronic Engineering, Zhejiang University, Hangzhou 310027, China (Email: ylcai521@gmail.com; mjzhao@zju.edu.cn).

			Z. Ding is with the School of Electrical and Electronic Engineering, The University of Manchester, Manchester M13 9PL, U.K. (Email: zhiguo.ding@manchester.ac.uk).		
		}
		
	}
	\maketitle
	\vspace{-3.3em}
	
	\begin{abstract}
		In this work, we explore the potential benefits of deploying unmanned aerial vehicles (UAVs) as aerial base stations (ABSs) with sub-6GHz band and small cells terrestrial base stations (TBSs) with millimeter wave (mmWave) band in a hybrid heterogeneous networks (HetNets).
		A flexible non-orthogonal multiple access (NOMA) based user  association policy is proposed.  By using the tools from stochastic geometry, new  analytical expressions for association  probability, coverage probability and spectrum efficiency are derived for  characterizing the performance of UAV-aided HetNets under the realistic Air-to-Ground (A2G) channels and the Ground-to-Ground (G2G) channels with a LoS ball blockage model. Finally, we provide insights on the proposed hybrid  HetNets by numerical results. We confirm that i) the proposed NOMA enabled HetNets is  capable of achieving superior performance compared with the  OMA enabled ABSs by setting power allocation factors and targeted signal-to-interference-plus-noise ratio (SINR) threshold properly;  ii) there is a tradeoff between the association probabilities and  the spectrum efficiency in the NOMA enabled ABSs tier; iii) the coverage probability and spectrum efficiency of the NOMA enabled ABSs tier is largely affected by the imperfect successive interference cancellation (ipSIC) coefficient, power allocation factors and SINR threshold; iv)  compared with only sub-6GHz  ABSs, mmWave enabled TBSs are capable of enhancing the spectrum efficiency  of the HetNets when the mmWave line-of-sight (LoS) link is available.
	\end{abstract}
	\begin{IEEEkeywords}
		HetNets, mmWave, NOMA,  stochastic geometry, UAV.
	\end{IEEEkeywords}

	\IEEEpeerreviewmaketitle
	
	\section{Introduction}

	In recent years, unmanned aerial vehicles (UAVs), commonly known as drones, have become a hotspot  for wireless communications due to its unique attributes such as low-cost, mobility, and flexible reconfiguration\cite{UAV_1}. In the meantime, in the process of standardization for 5G/B5G networks, UAVs are gradually being considered as a critical candidate to support diverse applications, such as reconnaissance, remote sensing, or working as temporal base stations\cite{UAV_2}. The popularity of UAVs motivates the researchers to explore the opportunities for integrating UAVs into the existing wireless networks.
	
	In the case of unexpected and temporary events, which creating hard-to-predict inhomogeneous traffic demand, such as  traffic congestions, or event sports, it is difficult to achieve universal connectivity due to the huge number of users together with the severe path loss and excessive inter-cell interference \cite{new_2}. One efficient approach to improve the coverage in currently deployed terrestrial cellular networks  is to equip the UAVs as aerial base stations (ABS), augmented with the functionalities of terrestrial base stations (TBSs)\cite{UAV_3}. As compared to  terrestrial cellular networks, one distinct feature of UAV communications is that the Air-to-Ground (A2G) links are more likely to experience line-of-sight (LoS) propagation which offers lower attenuation\cite{ABS_channel}. To further exploit the spectrum efficiency of the UAV-aided networks, non-orthogonal multiple access (NOMA) has attracted  much attention for its capability of serving multiple user equipments (UEs) at different quality-of-service (QoS) requirements in the same resource block\cite{UAV_NOMA1,UAV_NOMA2}. The key idea of NOMA is to employ a superposition coding (SC) at the transmitter and successive interference cancellation  (SIC) at the receiver\cite{NOMA_def}, which provides a good trade-off between the throughput of the system and the UEs fairness. Therefore, by adopting NOMA techniques, the achievable spectrum efficiency of the networks can be improved.

	On the other hand,  using high-frequency band and densification  will be two key capacity-increasing techniques for cellular networks, such as millimeter wave (mmWave) communications\cite{mmWave_def1} and small cells. Deploying  terrestrial mmWave small cells will offer high capacity when a connection is available. Motivated by the benefits of UAV-aided networks, NOMA techniques, and mmWave transmissions,  in this work we consider a UAV-aided heterogeneous network (HetNets) where a mmWave terrestrial network  co-exists with a sub-6GHz NOMA enabled aerial network. Note that  the use of mmWave and microWave resources simultaneously is a feature of 5G/B5G networks, and their distinctive carrier frequencies avoid the inter-tier interference\cite{mmWave_def2}.

	\subsection{Related Work and Motivation}
	Modeling and analyzing cellular networks  with the aid of stochastic geometry has been widely adopted due to its accuracy and tractability. In the  studies of UAV-aided networks, the authors of \cite{HetNet_uav1} derived the coverage probability for a finite ABS network by modeling the locations of ABSs as a uniform binomial point process (BPP). The authors \cite{HetNet_uav3}  analyzed the  downlink coverage performance of UAV-aided cellular networks when the UEs are clustered around the projections of ABSs on the ground. A framework was  proposed in \cite{HetNet_uav2} to analyze the behaviors of a ABSs network under a realistic A2G channel model which incorporates the LoS and  non-line-of-sight  (NLoS) links.  This work was further extended in  \cite{HetNet_ref} where the network comprises both ABSs and TBSs. Besides, instead of considering the average probabilistic path loss in most of works, the authors of \cite{HetNet_uav2,HetNet_ref}  considered more realistic LoS and NLoS transmissions, respectively. Multi-tier UAV-aided networks were presented in \cite{UAV_3,HetNet_uav5,HetNet_uav6}.  Specifically, the authors of \cite{UAV_3} and \cite{HetNet_uav5} proposed the multi-tier drone architecture based on the standard terrestrial  path loss model.  Furthermore,  the multi-tier UAV-aided networks based on the transmitter-oriented or receiver-oriented rules under a realistic A2G channel model were studied in \cite{HetNet_uav6}.
	In \cite{new_4},  a network comprising ABSs and TBSs whose location follow BPP and PPP is analyzed. The authors of \cite{HetNet_uav7} advocated a pair of strategies in UAV-aided NOMA  networks, i.e., the UAV-centric strategy for offloading actions and the user-centric strategy  for providing emergency communications. In \cite{HetNet_uav8}, a  multiple-input multiple-output (MIMO)-NOMA enabled UAV network was proposed, where the outage probability and ergodic rate were evaluated in the downlink scenario.

	Regarding the literature of stochastic geometry based HetNets systems, a  system containing sub-6GHz macrocells and mmWave small cells was analyzed in \cite{HetNet_mmWave1}, where the  macrocells provide universal coverage and the small cells provide high data rate when the  mmWave LoS link is available. The authors of \cite{HetNet_mmWave2} studied the decoupled association in a sub-6GHz and mmWave deployment  from the resource allocation perspective. A meta distributions of the SIR/SNR and data rate of a typical device in a hybrid spectrum network and $\mu$wave-only network is characterized in \cite{new_5}. Building upon the above research contributions and the analytical tools of stochastic geometry, we propose an architecture of UAV-aided HetNets where mmWave terrestrial networks co-exist with a sub-6GHz aerial networks, which has not been well studied in the literature. In contrast to the previously reported designs of UAV-aided HetNets \cite{HetNet_ref,HetNet_uav6}, our proposed  architecture  poses three additional challenges: i) The NOMA techniques causes additional interference from the connected ABS to the served UE; ii) The channel ordering  needs to be determined  under the unique  characteristics of A2G channels; iii) The UE association policy needs to be carefully designed under the existence of mmWave TBSs and sub-6GHz ABSs.

	\subsection{Contributions and Organization}
	The primary contributions of this paper can be summarized as follows:
	\begin{itemize}
		\item 	
		By taking advantage of unique attributes of UAVs and high transmission rate of mmWave, we propose a new  model of  HetNets where the sub-6GHz NOMA enabled ABSs overlay the mmWave TBSs. We model the Ground-to-Ground (G2G) and A2G links incorporating the impact of LoS and NLoS path loss attenuations. We consider the LoS and NLoS transmissions separately, where  two independent non-homogeneous poisson point processes (PPPs)  are formulated.
		\item
		We develop a flexible association policy to address the co-existence of NOMA enabled ABSs and mmWave TBSs. Under this policy,  we first derive the analytical expressions for the distance distributions given that the typical UE is associated
		with a TBS, a NLoS ABS or a LoS ABS.
		\item
		We derive exact analytical expressions for the UAV-aided HetNets in terms of coverage probability and spectrum efficiency. Additionally, the effect of the imperfect successive interference cancellation (ipSIC)  is analyzed.
		\item
		We provide the basic  power allocation guidelines for the NOMA enabled networks, in which the targeted signal-to-interference-plus-noise ratio (SINR) threshold of the typical UE and the fixed UE both determine the coverage probability of a typical UE. We also provide insights on the HetNets design by numerical results,  which demonstrate that our proposed NOMA enabled HetNets is capable of achieving superior performance compared with the conventional OMA enabled HetNets.

	\end{itemize}

	The rest of this paper is organized as follows. In Section II, the HetNets model and the association strategy are introduced. In Section III, new analytical expressions for distance distributions and association probabilities are derived. Then the coverage probability and spectrum efficiency of the network are investigated in Section IV and Section V, respectively. Our numerical results are demonstrated in Section VI, which is followed by the conclusions in Section VII.
	Here we give a flow chart  with analytical steps shown in Fig. \ref{flow_chart}.

	\begin{figure*}[!t]
		\centering
		\scalebox{0.48}{\includegraphics{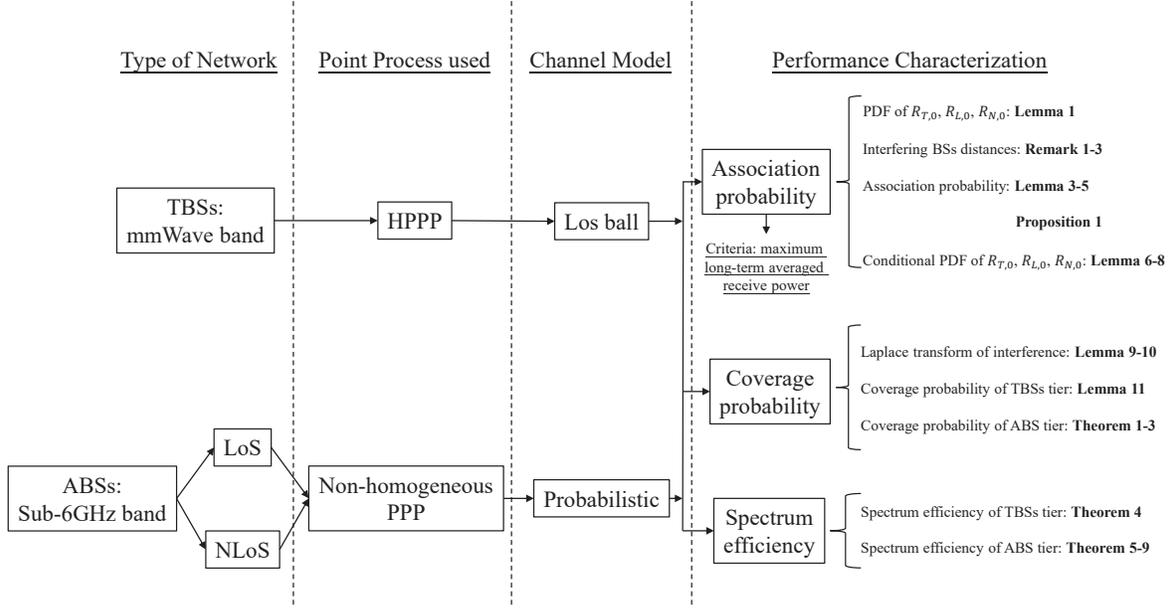}}
		\caption{Flow chart with analytical steps.}\label{flow_chart}
	\end{figure*}

	\section{Network model}
	\subsection{Network Description}
	In this work, we present a two tier downlink UAV-aided hybrid HetNets system shown in Fig. \ref{system_model}. In tier 1, the TBSs  provide wireless connectivity to the ground UEs, where the spatial distribution of TBSs is modeled as an HPPP $\Phi_T$ with density $\lambda_T$.  In tier 2, the ABSs are deployed to enhance the coverage or boost the capacity. We assume that all the ABSs hover at a height $h$ and their horizontal locations form an HPPP $\Phi_A$ with density $\lambda_A$. It is worth mentioning that the analysis in this network model  is also applicable to the ABSs with different altitudes\cite{HetNet_uav1}. Specially, in tier 1 the TBSs are equipped with multiple antennas and  the mmWave band is utilized to provide fast data rate in short-range small cells, while in tier 2 the ABSs adopt sub-6GHz and NOMA techniques in order to improve the coverage and  freedom to serve multiple UEs. All the UEs are assumed to be equipped with two antennas: one to communicate with the TBS on the mmWave channel and the other for the sub-6 GHz connection with the ABS. Without loss of generality, the analysis is conducted based on a  typical UE  positioned at the origin \cite{new_3}.
	All the symbol notations are list in Table \ref{Table of Notations}.
	
	\begin{figure}[!h]
		\centering
		\scalebox{0.3}{\includegraphics{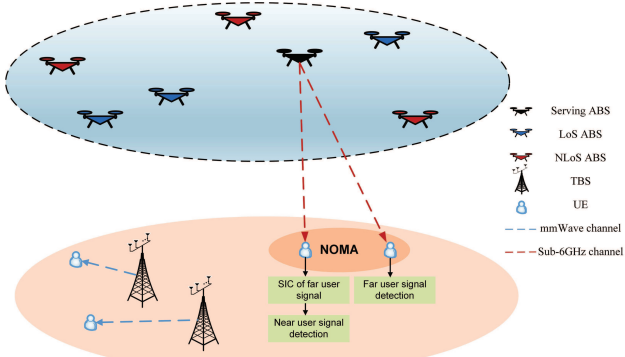}}
		\caption{Illustration of UAV-aided hybrid HetNets.}\label{system_model}
	\end{figure}

	\begin{table}[h!]
		\scriptsize
		\centering
		\caption{Table of Notations}\label{Table of Notations}
		\begin{tabular}{| l | l | }
			\hline
			Notation & Description\\
			\hline
			$\Phi_T$, $\Phi_A$ & HPPP of TBSs, HPPP of ABSs\\
			\hline
			$\Phi_L$, $\Phi_N$ & PPP of LoS ABSs, PPP of NLoS ABSs\\
			\hline
			$a$, $b$& S-curve parameters\\
			\hline
			$P_L(x)$, $P_N(x)$ & \makecell[l]{Probability of LoS/NLoS links under  the horizontal\\ distance $x$} \\
			\hline
			$R_B$, $h$, $R_f$ & mmWave LoS radius, ABS height, fixed UE distance  \\
			\hline
			$R_{N,0}$, $R_{L,0}$, $R_{T,0}$ & \makecell[l]{Minimum distance between the typical UE and the \\NLoS ABSs, LoS ABSs and TBSs}\\
			\hline
			$\alpha_N$, $\alpha_L$, $\alpha_T$  &\makecell[l]{Path loss exponent between the typical UE and the \\NLoS ABSs, LoS ABSs and TBSs}\\
			\hline
			$C_N$, $C_L$, $C_T$ &  \makecell[l]{Additive loss  exponent  between the typical UE and the\\ NLoS ABSs, LoS ABSs and TBSs} \\
			\hline
			$m_N$, $m_L$, $m_T$ & \makecell[l]{Nakagami-$m$ fading parameters  between the typical UE\\ and the NLoS ABSs, LoS ABSs and TBSs}   \\
			\hline
			$a_m$, $a_n$ & NOMA coefficients\\		
			\hline
			$P_T$, $P_A$, $\sigma_T^2$,$\sigma_A^2$  & TBS/ABS transmit power and noise power\\
			\hline
			{  $\beta$}  & {Imperfect SIC coefficient}\\
			\hline
		\end{tabular}
	\end{table}
	
	\subsection{Channel Characteristics and Directional Beamforming in mmWave tier}

	In the first tier of the networks, due to the deployment of mmWave,  the transmitted signals suffer from attenuation due to the obstacles, and the blockage effect can not be  neglected. Here we adopt a tractable equivalent LoS ball model \cite{blockage1} to characterize the blockage effect, which enables fast numerical computation and simplifies the analysis. We define a LoS radius $R_B$, which represents the maximum distance between a UE and its potential mmWave TBS, and the LoS probability is one within $R_B$ and zero outside this radius. It has  been shown that the LoS ball model can fit the real environment properly and provide enough analytical accuracy compared with other blockage models \cite{blockage1}. Regarding the  NLoS paths, it has been pointed out in \cite{blockage3} that the impact of  NLoS signals and NLoS interference can be ignored in mmWave networks. Hence, we will focus on the analysis where the typical UE is associated with a LoS TBS. Then the process  for the case the TBSs  located inside the LoS Ball $\mathcal{B}(0,R_B)$ can be expressed as $\Phi_T\cap\mathcal{B}(0,R_B)$. As a result, the path loss in the TBS tier can be expressed as
	\begin{equation}
		L_T(r)=\mathbf{1}(R_B-r)C_Tr^{-\alpha_T},
	\end{equation}
	where $r$ is the communication distance, $\mathbf{1}(x)$ is  the unit step function, $C_T$ and $\alpha_T$ are the additive loss and path loss exponents, respectively. We also characterize the small scale fading  with the  Nakagami-$m$ fading, where the channel gain follows the Gamma distribution with parameter  $\Gamma(m_T,\frac{1}{m_T})$.

	In this work,  multiple antennas are equipped at the TBSs to accomplish the directional beamforming, and we adopt a 3D sectorized model \cite{directional_beamforming1,directional_beamforming2}. The directivity gain is given by $G(\theta_a,\theta_d)$, where $\theta_a$ is the antenna  3-dB beamwidth for the azimuth orientation in the horizontal direction and $\theta_d$ is the antenna  3-dB beamwidth for the  elevation angles in the TBS, with main-lobe gain $G_M$ and side-lobe gain $G_m$. By adjusting the antenna direction toward the corresponding UE, the UE  benefits from the high main-lobe  gain $G_M$. Moreover, since we evaluate the average directivity gain in our systems, the effect of misalignment
	is ignored in the rest of this paper. Considering the interfering transmission, the beamforming gain and its association probability can be expressed as follows
	\begin{equation}
		G = \left\{ \begin{array}{ll}
			G_M, & p_M=\frac{\theta_a}{2\pi}\cdot\frac{\theta_d}{\pi}\\
			G_m, & p_m=1-\frac{\theta_a}{2\pi}\cdot\frac{\theta_d}{\pi},
		\end{array} \right.\label{gain_probability}
	\end{equation}
	where we assume that the angles deviating from the boresight direction in the azimuth plane $\psi$  is uniformly distributed in the range of $[-\pi,\pi]$ and the deviated  angles  in the depression plane $\phi$ is uniformly distributed in the range of $[-\frac{\pi}{2},\frac{\pi}{2}]$ for all interfering transmissions.
	
	
	\subsection{Channel Characteristics in the NOMA Enabled tier}
	In the second tier of the networks, the  channel between the ABS and the UE is highly affected by the density and altitude of obstacles in the environment, then the A2G channels contain both LoS and NLoS links.  In this paper, we adopt a measurement based on the probabilistic model for LoS/NLoS propagation \cite{ABS_channel}, which is suitable for  sub-6GHz  scenarios.  The probability expressions of  LoS/NLoS links are defined as  $P_L(x)$ and $P_N(x)$, respectively. The LoS link is shown as
	\begin{equation}
		P_L(x)=\frac{1}{1+a\exp(-b(\frac{180}{\pi})\tan^{-1}(\frac{h}{x}-a))},
	\end{equation}
	where $a$ and $b$ are  referred to as the S-curve parameters which are related to the transmission environment, and $x$ denotes the horizontal distance between the ABS and the  UE. Note that a higher ABS altitude results in a higher LoS probability  due to fewer obstacles. Accordingly, the NLoS link probability is given by  $P_N(x)=1-P_L(x).$
	
	Since each link between the ABS and the UE is either in a LoS or NLoS condition with probability $P_L(x)$ and $P_N(x)$, the set of ABSs can be divided into two independent non-homogeneous PPPs $\Phi_L$ and $\Phi_N$ with $\Phi_A=\Phi_L\cup\Phi_N$, which denote the LoS ABSs and the NLoS ABSs, respectively. The corresponding densities of  $\Phi_L$ and $\Phi_N$ with respect to the  horizontal distance $x$ from the typical UE are given by $\lambda_L(x)=2\pi\lambda_A P_L(x)$ and $\lambda_N(x)=2\pi\lambda_A P_N(x)$, respectively. In the following, we use $x_{L,i}$ and $x_{N,i}$ to denote the horizontal locations of the LoS and NLoS ABSs, respectively.
	
	We  consider different channel parameters for LoS/NLoS links in the A2G channels. The additive loss  and path loss exponents of the LoS link in the A2G channels are denoted by $C_L$ and $\alpha_L$, and accordingly we introduce $C_N$ and $\alpha_N$ in the NLoS state.    Therefore, the  channel gains are denoted as  $H_L$ and $H_N$, which   follow the Gamma distribution with parameter $\Gamma(m_L,\frac{1}{m_L})$ and  $\Gamma(m_N,\frac{1}{m_N})$, respectively.

	\subsection{UE Association}
	In this UAV-aided HetNets, a UE is allowed to access the mmWave tier or the NOMA enabled tier in order to provide the best coverage. The flexible UE association is based on the maximum long-term averaged received power at the UE of each tier, which is commonly adopted in other works \cite{HetNet_ref,HetNet_uav6,HetNet_mmWave1}. Intuitively, the typical UE will choose to connect to the BS which has the minimum distance to the UE for both of  the tiers.
	
	\subsubsection{Average Received Power in mmWave  Tier}
	Denoting $R_{T,0}$ as the minimum horizontal distance between the typical UE and the TBSs. We assume that the transmit power of all the TBSs is $P_T$. Thus,  the average received power at the UE connected to the TBS  is given by
	\begin{equation}
		\begin{split}
			P_{r,T}&=G_M P_TC_TR_{T,0}^{-\alpha_T}\mathbf{1}(R_B-R_{T,0})\\
			&\triangleq \eta_T R_{T,0}^{-\alpha_T}\mathbf{1}(R_B-R_{T,0}),
		\end{split}
	\end{equation}
	where  $G_M$ is the directional beamforming gain.
	
	\subsubsection{Average Received Power in the NOMA Enabled Tier }
	
	In the NOMA enabled tier, we adopt UE pairing to implement NOMA  in order to reduce the complexity\cite{UE_pairing}. Compared with the UE association in the OMA scheme, NOMA allocates different power levels to multiple UEs by exploiting power sparsity. The locations of the UEs are also not pre-determined due to the random spatial topology of the stochastic model. As such, we always assume that a near UE is chosen as the typical one first no matter it lies in a LoS/NLoS state\cite{NOMA_reference}, and we denote $R_{L,0}$ and $R_{N,0}$ as the minimum distance between the typical UE and the  LoS/NLoS ABSs. Then  the average received power at the UE connected to the LoS/NLoS ABS can be expressed as
	\begin{equation}
		P_{r,L}=a_nP_AC_LR_{L,0}^{-\alpha_L}\triangleq\eta_L R_{L,0}^{-\alpha_L}\label{power_L},
	\end{equation}
	and
	\begin{equation}
		P_{r,N}=a_nP_AC_NR_{N,0}^{-\alpha_N}\triangleq\eta_N R_{N,0}^{-\alpha_N}\label{power_N},
	\end{equation}
	respectively, where  $P_A$ denotes the transmit power of all the ABSs, and $a_n$ denotes the power allocation factor for the near UE.
	
	
	\subsection{SINR Analysis}
	Due to the fact that the TBS tier and ABS tier  utilize distinctive carrier frequencies,  the signals in these two tiers do not affect each other.
	
	\subsubsection{mmWave Tier Transmission}
	The  SINR at the typical UE when it is connected to a TBS at a distance $R_{T,0}$  can be expressed as
	\begin{equation}
		\gamma_{T}=\frac{G_M P_TC_TR_{T,0}^{-\alpha_T}H_{T,0}}{I_T+\sigma^2_T},
	\end{equation}
	where $R_{T,0}\leq R_B$, and $I_T=\sum_{x_{T,i}\in\Phi_T\cap\mathcal{B}(0,R_B)\backslash x_{T,0}} G_i P_TC_TR_{T,i}^{-\alpha_T}H_{T,i} $ is the interference from the TBS tier. $H_{T,0}$ is the channel gain between the typical UE and the serving TBS, $H_{T,i}$ and $R_{T,i}$ refer to the channel gain and the distance between typical UE and the TBS $i$ (except for the serving BS), respectively. The value and probability of $G_i$ can be obtained through (\ref{gain_probability}). $\sigma^2_T$ denotes the noise power. Both $H_{T,0}$ and $H_{T,i}$ follow the distribution of  $\Gamma(m_T,\frac{1}{m_T})$.
	
	\subsubsection{NOMA Enabled Tier Transmission}
	In the ABS tier, without loss of generality, we consider that each ABS is associated with one UE in the previous round of the UE association process. For simplicity, we follow the assumption in \cite{NOMA_reference} where the distance between the associated UEs and the connected ABSs are the same,  which is an arbitrary value and  denoted as $R_f\geq h$. Since the path loss is more  dominant and stable compared with the small scale fading, we apply the SIC operation at the near UE side. However, it is not pre-determined that the typical UE is the near UE or the far UE, we have the following near UE case and far UE case. We first assume that the typical UE is in a LoS state.
	\paragraph{Near UE in a LoS state case}
	We define ``near UE" as the scenario where the typical UE has smaller distance to the ABS than $R_f$. Then when the typical UE is the near UE in a LoS state, i.e., $R_{L,0}\leq R_f$, the typical UE will first decode the information of the fixed UE to the same LoS ABS with the following SINR
	\begin{equation}
		\gamma_{t\rightarrow f,near}^L=\frac{a_mP_AC_LR_{L,0}^{-\alpha_L}H_{L,0}}{a_nP_AC_LR_{L,0}^{-\alpha_L}H_{L,0}+I_{L}+I_{N}+  \sigma^2_A},\label{t_f_near_LoS}
	\end{equation}
	where $a_m$ denotes the power allocation factor for the far UE which satisfies $a_m>a_n$ and $a_m+a_n=1$. $I_{L}=\sum_{x_{L,i}\in\Phi_L\backslash x_{L,0}} P_AC_L R_{L,i}^{-\alpha_L}H_{L,i}$ denotes the interference from the LoS ABSs and
	$I_{N}=\sum_{x_{N,i}\in\Phi_N} P_AC_NR_{N,i}^{-\alpha_N}H_{N,i}$ denotes the interference from the NLoS ABSs. $H_{L,0}$ is the channel gain between the typical UE and the associated LoS ABS. $H_{L,i}$ and $R_{L,i}$ refer to the channel gain and the  distance between the typical UE and LoS ABS $i$ (except for the serving ABS), respectively. $H_{N,i}$ and $R_{N,i}$ refer to the channel gain and the distance between the typical UE and NLoS ABS $i$, respectively.  $H_{L,0}$ and $H_{L,i}$ follow the distribution of  $\Gamma(m_L,\frac{1}{m_L})$.  $H_{N,i}$ follows the distribution of  $\Gamma(m_N,\frac{1}{m_N})$. $\sigma^2_A$ denotes the noise power of the ABS tier.
	
	If the information of the fixed UE can be decoded successfully, the typical UE will decode its own message with ipSIC coefficient, and the SINR can be expressed as
	\begin{equation}
		\gamma_{t,near}^L=\frac{a_nP_AC_LR_{L,0}^{-\alpha_L}H_{L,0}}{I_{L}+I_{N}+\sigma^2_A+\beta a_mP_AC_LR_{L,0}^{-\alpha_L}H_{L,0}}.\label{t_near_LoS}
	\end{equation}
	where $\beta$ denotes the ipSIC coefficient, which satisfies $0<\beta<1$. When $\beta=0$, the perfect SIC occurs.
	
	For the fixed UE (far UE) served by the same ABS, the signal can be decoded by treating the message of the typical UE as interference, then the SINR for the fixed UE can be expressed as
	\begin{equation}
		\gamma_{ f,near}^L=\frac{a_mP_AC_L R_f^{-\alpha_L} H_{L,f}}{a_nP_AC_L R_f^{-\alpha_L} H_{L,f}+I_{L}+I_{N}+\sigma^2_A},\label{f_near_LoS}
	\end{equation}
	where $H_{L,f}$  refers to the channel gain  between the fixed UE and the serving LoS ABS.
	
	\paragraph{Far UE in a LoS state case}
	On the other hand, when the typical UE has a larger distance to the serving LoS ABS than the fixed UE, i.e., $R_{L,0}> R_f$, the fixed UE will first decode the information of the typical UE with the following SINR
	\begin{equation}
		\gamma_{f\rightarrow t,far}^L=\frac{a_mP_AC_LR_{f}^{-\alpha_L}H_{L,f}}{a_nP_AC_LR_{f}^{-\alpha_L}H_{L,f}+I_{L}+I_{N}+ \sigma^2_A}.
	\end{equation}
	
	Once the information of the typical UE can be decoded successfully, and by applying the SIC technique, the SINR to decode its own message at the fixed UE is given by
	\begin{equation}
		\gamma_{f,far}^L=\frac{a_nP_AC_LR_{f}^{-\alpha_L}H_{L,f}}{I_{L}+I_{N}+\sigma^2_A+\beta a_mP_AC_LR_{f}^{-\alpha_L}H_{L,f}}.
	\end{equation}
	
	For the typical UE  (far UE) that connects to the same LoS ABS, the SINR can be expressed as
	\begin{equation}
		\gamma_{ t,far}^L=\frac{a_mP_AC_L R_{L,0}^{-\alpha_L} H_{L,0}}{a_nP_AC_L R_{L,0}^{-\alpha_L} H_{L,0}+I_{L}+I_{N}+\sigma^2_A}.
	\end{equation}
	
	\paragraph{NLoS state case}
	When the typical UE is associated to a NLoS ABS, the typical UE can be a near UE or a far UE as well.  The SINR analysis in NLoS state is similar to that of the LoS state case,  and we omit the details here.

	\section{Relevant distance distributions and Association Analysis}
	In this section,  we focus on providing the distribution of the distances between the typical UE and the serving TBS, NLoS ABS and LoS ABS, respectively, in the HetNets system.  Furthermore, we derive the expressions for the association probabilities. At last, the distance distributions are characterized given that the typical UE is associated with a TBS, a NLoS ABS or a LoS ABS.
	\subsection{Distance Distributions of the nearest BSs}
	\begin{lemma}\label{lemma_1}
		The probability density function (PDF) of $R_{L,0}$, $R_{N,0}$ and the conditional PDF of $R_{T,0}$ are given by
		\begin{small}
			\begin{align}
				f_{R_{L,0}}(r)=2\pi\lambda_A rP_L(\sqrt{r^2-h^2})&\exp\left(-2\pi\lambda_A\int_{0}^{\sqrt{r^2-h^2}} xP_L(x)dx\right),\nonumber\\&r\geq h
			\end{align}
		\end{small}
		\vspace{-2em}
		\begin{small}
			\begin{align}
				f_{R_{N,0}}(r)=2\pi\lambda_A rP_N(\sqrt{r^2-h^2})&\exp\left(-2\pi\lambda_A\int_{0}^{\sqrt{r^2-h^2}}xP_N(x)dx\right),\nonumber\\ &r\geq h
			\end{align}
		\end{small}
		\vspace{-1.5em}
		\begin{equation}
			f_{R_{T,0}}(r)= \left\{ \begin{array}{ll}
				2\pi\lambda_T r\exp\left(-\pi\lambda_Tr^2\right), & r\leq R_B\\
				0. & r>R_B
			\end{array} \right.
		\end{equation}
	\end{lemma}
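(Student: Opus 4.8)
The plan is to obtain all three densities from the void probabilities of the relevant point processes, combined with a change of variables from horizontal distance to the actual (slant) link distance for the two ABS tiers. First I would treat the LoS and NLoS ABS cases, which are structurally identical. Since an ABS at horizontal distance $x$ is independently declared LoS with probability $P_L(x)$ and NLoS otherwise, the independent-thinning property of PPPs guarantees that $\Phi_L$ and $\Phi_N$ are independent non-homogeneous PPPs with planar intensities $\lambda_A P_L(x)$ and $\lambda_A P_N(x)$, exactly as asserted in the network model. Writing $X_{L,0}$ for the horizontal distance to the nearest LoS ABS, the void probability of $\Phi_L$ over the disk of radius $x$ gives the complementary CDF
\[
\Pr(X_{L,0} > x) = \exp\Bigl(-2\pi\lambda_A \int_0^x u\,P_L(u)\,du\Bigr),
\]
where the angular integration over $[0,2\pi)$ produces the factor $2\pi u$. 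Differentiating in $x$ yields the horizontal-distance density $f_{X_{L,0}}(x) = 2\pi\lambda_A x\,P_L(x)\exp\bigl(-2\pi\lambda_A\int_0^x u P_L(u)\,du\bigr)$.

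Next I would pass from the horizontal distance to the link distance through $R_{L,0} = \sqrt{X_{L,0}^2 + h^2}$, whose inverse $x = \sqrt{r^2 - h^2}$ has Jacobian $dx/dr = r/\sqrt{r^2-h^2}$. Substituting $x=\sqrt{r^2-h^2}$ into $f_{X_{L,0}}$ and multiplying by the Jacobian, the factor $\sqrt{r^2-h^2}$ coming from the intensity cancels against the Jacobian denominator, leaving precisely the claimed $f_{R_{L,0}}(r)$ on $r\geq h$. The identical computation with $P_N$ in place of $P_L$ delivers $f_{R_{N,0}}(r)$.

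For the TBS tier, $\Phi_T$ is a homogeneous PPP of density $\lambda_T$, so the classical nearest-neighbour identity $\Pr(R_{T,0}>r)=\exp(-\pi\lambda_T r^2)$ holds, and differentiation gives $2\pi\lambda_T r\exp(-\pi\lambda_T r^2)$. Because only TBSs inside $\mathcal{B}(0,R_B)$ are LoS and hence eligible for association, this density is supported on $r\leq R_B$ and set to zero beyond; the residual mass $\exp(-\pi\lambda_T R_B^2)$ corresponds to the event that no LoS TBS exists within the ball, which is exactly why the expression is stated as a (defective) conditional density rather than a proper one.

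The main obstacle is the bookkeeping in the change of variables: one must simultaneously track the $\sqrt{r^2-h^2}$ appearing inside the argument of $P_L$, the $\sqrt{r^2-h^2}$ appearing linearly in the intensity, and the Jacobian, and then verify the cancellation that turns $2\pi\lambda_A\sqrt{r^2-h^2}\,P_L(\sqrt{r^2-h^2})\cdot r/\sqrt{r^2-h^2}$ into $2\pi\lambda_A r\,P_L(\sqrt{r^2-h^2})$. The remaining subtlety is interpretive rather than computational, namely recognising that the truncation at $R_B$ encodes LoS eligibility, so that the stated $R_{T,0}$ density integrates to $1-\exp(-\pi\lambda_T R_B^2)$ and the complementary probability accounts for the case in which the typical UE must fall back to an ABS.
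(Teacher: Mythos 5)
Your proof is correct and follows exactly the route the paper invokes: the paper's own ``proof'' is merely a citation to Lemma~1 of the VHetNets work and the finite-network distance-distribution paper, both of which rest on the same ingredients you spell out --- independent thinning of $\Phi_A$ into $\Phi_L$ and $\Phi_N$, void probabilities of the resulting non-homogeneous PPPs, the change of variables $x=\sqrt{r^2-h^2}$ with the Jacobian cancellation you verify, and the classical nearest-neighbour law truncated at $R_B$ for the TBS tier. Your reading of the $R_{T,0}$ expression as a defective density with total mass $1-\exp(-\pi\lambda_T R_B^2)=Q_T$ is also the interpretation the paper relies on later (e.g., in its Lemma~2 and in the conditional PDF of $\hat{R}_{T,0}$), so nothing is missing.
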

	
	\begin{proof}
		Using a similar method to Lemma 1 of \cite{HetNet_ref} and \cite{new_1}, the above expressions can be obtained.
	\end{proof}

	\subsection{Distance of the nearest interfering BSs }
	Due to the deployment of mmWave and the special channel characteristics of A2G channels, the distance of the nearest interfering BSs is not easy to observe. The following remarks show clear insights on the locations of the nearest interfering BSs, which will be useful to derive the main results of this paper.
	
	\subsubsection{The typical UE is associated with a NLoS ABS}
	When the typical UE is associated with a NLoS ABS, we have the following lemma and assumption, which help us to derive the  minimum distance of the interfering LoS ABS.
	
	\begin{lemma}
		The probability that the typical UE has at least one TBS in $\mathcal{B}(0,R_B)$ can be calculated by $Q_T\triangleq F_{R_{T,0}}(R_B)=1-\exp(-\pi\lambda_TR_B^2)$.
	\end{lemma}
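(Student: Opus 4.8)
The plan is to recognize that the event ``at least one TBS lies in $\mathcal{B}(0,R_B)$'' is precisely the event $\{R_{T,0}\le R_B\}$, since $R_{T,0}$ is by definition the distance from the typical UE at the origin to its \emph{nearest} TBS; there is a TBS inside the ball if and only if the nearest one is. Hence $Q_T$ is simply the CDF of $R_{T,0}$ evaluated at $R_B$, which is exactly why the statement writes $Q_T\triangleq F_{R_{T,0}}(R_B)$.

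First I would invoke Lemma \ref{lemma_1}, which already supplies the (truncated) density $f_{R_{T,0}}(r)=2\pi\lambda_T r\exp(-\pi\lambda_T r^2)$ for $r\le R_B$. Then $F_{R_{T,0}}(R_B)=\int_0^{R_B}2\pi\lambda_T r\exp(-\pi\lambda_T r^2)\,dr$, and the substitution $u=\pi\lambda_T r^2$, with $du=2\pi\lambda_T r\,dr$, collapses this to $\int_0^{\pi\lambda_T R_B^2}e^{-u}\,du=1-\exp(-\pi\lambda_T R_B^2)$, giving the claimed expression.

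Alternatively, and perhaps more transparently, I would argue directly from the defining property of the homogeneous PPP $\Phi_T$: the number of TBSs falling in any Borel set $A$ is Poisson with mean $\lambda_T|A|$, so the void probability over the disk $\mathcal{B}(0,R_B)$ of area $|A|=\pi R_B^2$ is $\exp(-\lambda_T\pi R_B^2)$. Taking the complement yields $Q_T=1-\exp(-\pi\lambda_T R_B^2)$ immediately, in agreement with the integral computation. This route has the virtue of not relying on Lemma \ref{lemma_1} at all, and it makes clear that the formula is nothing more than the Poisson void probability specialized to a disk.

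Since the result is essentially a one-line consequence of the void probability, there is no genuine analytical difficulty here. The only point deserving care is the logical equivalence between ``at least one TBS inside $\mathcal{B}(0,R_B)$'' and $\{R_{T,0}\le R_B\}$, together with the observation that the density in Lemma \ref{lemma_1} is truncated at $R_B$ and therefore integrates to $Q_T$ rather than to $1$; this truncation is a direct reflection of the LoS-ball blockage model, under which TBSs outside radius $R_B$ are invisible to the typical UE and hence do not contribute to the association event.
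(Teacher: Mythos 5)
Your proof is correct, and it matches the paper's (implicit) reasoning: the paper states this lemma without proof precisely because it is the standard void probability of the HPPP $\Phi_T$ over the disk $\mathcal{B}(0,R_B)$, which is exactly your second argument, and your integration of the truncated density from Lemma \ref{lemma_1} is a consistent cross-check. Your observation that $f_{R_{T,0}}$ is a defective density integrating to $Q_T$ rather than $1$ is also accurate and worth noting.
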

	\begin{assumption}
		When there  exists a TBS in $\mathcal{B}(0,R_B)$, the typical UE never associates with a NLoS ABS.
	\end{assumption}
	If there exists a TBS in $\mathcal{B}(0,R_B)$, and  the typical UE is associated with  a NLoS ABS. Then the  average received power at a height $h$ connecting to a NLoS ABS should not be smaller than the average received power at a distance $R_B$ connecting to a TBS, resulting in the condition $h\leq (\frac{\eta_N}{\eta_T})^{\frac{1}{\alpha_N}} R_B^{\frac{\alpha_T}{\alpha_N}}$  satisfied. With a huge path loss and additive loss  in NLoS channels, this condition is not satisfied normally.
	
	\begin{remark}
		Given that the typical UE is associated with a NLoS ABS at a distance $r$, the minimum distance of the interfering LoS ABS  $\tau_{L|N}(r)$ is given by
		\begin{equation}
			\tau_{L|N}(r)=\left(\frac{\eta_L}{\eta_N}\right)^{\frac{1}{\alpha_L}}r^{\frac{\alpha_N}{\alpha_L}}, \quad r\geq h
		\end{equation}
	\end{remark}
	\begin{proof}
		According to (\ref{power_L}) and (\ref{power_N}), the average received power of the UE associated with a NLoS ABS at a distance $r$ is given by $\eta_Nr^{-\alpha_N}$. The minimum distance of the interfering LoS ABS  $\tau_{L|N}(r)$ can be obtained by solving $\eta_Nr^{-\alpha_N}=\eta_L\tau_{L|N}^{-\alpha_L}(r)$.
	\end{proof}
	

	\subsubsection{The typical UE is associated with a LoS ABS}
	
	When the typical UE is associated with a LoS ABS at a distance $r$, depending on whether there exists a TBS in $\mathcal{B}(0,R_B)$, we have two cases. We first denote
	$l_{L,h}\triangleq (\frac{\eta_L}{\eta_N})^{\frac{1}{\alpha_L}}h^{\frac{\alpha_N}{\alpha_L}}$, which is the  distance between the typical UE and a LoS ABS when its average received power is the same as that the average received power from a NLoS ABS  at a distance of $h$,  i.e., $\tau_{L|N}(h)$. We then denote $l_{L,T}\triangleq (\frac{\eta_L}{\eta_T})^{\frac{1}{\alpha_L}} R_B^{\frac{\alpha_T}{\alpha_L}}$ as the \textit{maximum} association distance between the typical UE and a LoS ABS when there exists a TBS in $\mathcal{B}(0,R_B)$. As such, if there exists a TBS in  $\mathcal{B}(0,R_B)$, then only the LoS ABSs which lie in the range of $h\leq r\leq  l_{L,T}$ can be associated to the typical UE.
	
	\begin{remark}
		Given that the typical UE is associated with a LoS ABS at a distance $r$, the minimum distances of the interfering TBS $\tau_{T|L}(r)$ and NLoS ABS $\tau_{N|L}(r)$ are given by
		\begin{equation}
			\tau_{T|L}(r)=\left(\frac{\eta_T}{\eta_L}\right)^{\frac{1}{\alpha_T}}r^{\frac{\alpha_L}{\alpha_T}}, \quad h\leq r\leq
			l_{L,T}.
		\end{equation}
		If there exists a TBS in  $\mathcal{B}(0,R_B)$, and $l_{L,T}\geq l_{L,h}$ is satisfied, we have
		\begin{equation}
			\tau_{N|L}(r)= \left\{ \begin{array}{ll}
				h, & h\leq r\leq l_{L,h}\\
				\left(\frac{\eta_N}{\eta_L}\right)^{\frac{1}{\alpha_N}}r^{\frac{\alpha_L}{\alpha_N}}, & l_{L,h}< r\leq l_{L,T}.
			\end{array} \right.
		\end{equation}
		
		Otherwise when $l_{L,h}> l_{L,T}$, we have
		\begin{equation}
			\tau_{N|L}(r)=h, \quad h\leq r\leq l_{L,T}.
		\end{equation}
		
		If there does not exist a TBS in $\mathcal{B}(0,R_B)$,  $\tau_{N|L}(r)$ can be expressed as
		\begin{equation}
			\tau_{N|L}(r)= \left\{ \begin{array}{ll}
				h, & h\leq r<l_{L,h}\\
				\left(\frac{\eta_N}{\eta_L}\right)^{\frac{1}{\alpha_N}}r^{\frac{\alpha_L}{\alpha_N}}. &  r\geq l_{L,h}
			\end{array} \right.
		\end{equation}
	\end{remark}
	\begin{proof}
		The proof is similar to the proof in Remark 1, therefore  it is omitted here.
	\end{proof}

	\subsubsection{The typical UE is associated with a TBS}
	When the typical UE is associated with a TBS, the condition $0\leq r\leq R_B$ should be satisfied.  We first denote $l_{T,L}\triangleq (\frac{\eta_T}{\eta_L})^{\frac{1}{\alpha_T}} h^{\frac{\alpha_L}{\alpha_T}}$ as the distance between a UE and a TBS when its average received power is the same as  the average received power of  a LoS ABS at a distance of $h$, and similarly we denote $l_{T,N}\triangleq (\frac{\eta_T}{\eta_N})^{\frac{1}{\alpha_T}}h^{\frac{\alpha_N}{\alpha_T}}$.
	
	\begin{remark}
		Given that the typical UE is associated with a TBS at a distance $r$, the minimum distances of the interfering LoS ABS $\tau_{L|T}(r)$ and NLoS ABS $\tau_{N|T}(r)$ are shown below
		
		\begin{table}[h!]
			\centering
			\begin{tabular}{| l | l | l |}
				\hline
				Condition & $\tau_{L|T}(r)$ & $\tau_{N|T}(r)$ \\
				\hline
				$R_B\leq l_{T,L}\leq l_{T,N}$ & $h,\quad 0\leq r\leq R_B$ & $h,\quad 0\leq r\leq R_B$ \\
				\hline
				$l_{T,L} \leq R_B\leq l_{T,N}$ & \tabincell{l}{$h,\quad 0\leq r<l_{T,L}$\\ $\left(\frac{\eta_L}{\eta_T}\right)^{\frac{1}{\alpha_L}}r^{\frac{\alpha_T}{\alpha_L}}$,\\ \quad $l_{T,L}\leq r\leq R_B$} &  $h,\quad 0\leq r\leq R_B$ \\
				\hline
				$R_B\geq l_{T,N} \geq l_{T,L}$ & \tabincell{l}{$h,\quad 0\leq r<l_{T,L}$\\ $\left(\frac{\eta_L}{\eta_T}\right)^{\frac{1}{\alpha_L}}r^{\frac{\alpha_T}{\alpha_L}}$,\\ \quad $l_{T,L}\leq r\leq R_B$} & \tabincell{l}{$h,\quad 0\leq r<l_{T,N}$\\ $\left(\frac{\eta_N}{\eta_T}\right)^{\frac{1}{\alpha_N}}r^{\frac{\alpha_T}{\alpha_N}}$,\\ \quad $l_{T,N}\leq r\leq R_B$}\\
				\hline
			\end{tabular}
		\end{table}
		%
		%
		%
		%
		%
	\end{remark}
	
	\begin{proof}
		The proof is similar to the proof in \textbf{Remark 1} and \textbf{Remark 2}, therefore  it is omitted here.
	\end{proof}
	
	\subsection{Association Probability}
	We first study the probability that  typical UE is associated with a NLoS ABS.
	\begin{lemma}\label{association_N}
		The  probability that a typical UE connects to a NLoS ABS can be calculated as
		\begin{equation}
			\footnotesize
			A_{N}=(1-Q_T)\int_{h}^{\infty}  \exp\left(-2\pi\lambda_A\int_{0}^{\sqrt{\tau^2_{L|N}(r)-h^2}}xP_L(x)dx\right) f_{R_{N,0}}(r) dr.
		\end{equation}
	\end{lemma}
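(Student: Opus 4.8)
The plan is to write the NLoS-association event as an intersection of independent sub-events and to factor its probability using the mutual independence of the three point processes. Because $\Phi_L$ and $\Phi_N$ are obtained by independently marking each point of $\Phi_A$ as LoS/NLoS, the independent-marking (thinning) theorem makes them independent non-homogeneous PPPs, and both are independent of the mmWave process $\Phi_T$; this is what lets the factors separate. I would first dispose of the TBS comparison: by \textbf{Assumption 1} the typical UE can associate with a NLoS ABS only when $\mathcal{B}(0,R_B)$ contains no TBS, and by \textbf{Lemma 2} this void event has probability $1-Q_T=\exp(-\pi\lambda_T R_B^2)$. Since $\Phi_T$ is independent of the ABS processes, this contributes exactly the prefactor $(1-Q_T)$.

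Next I would condition on the nearest NLoS ABS lying at $3$D distance $r\ge h$, whose density is $f_{R_{N,0}}(r)$ from \textbf{Lemma 1}, so that its averaged received power is $\eta_N r^{-\alpha_N}$. For this NLoS ABS to out-power every LoS ABS, no LoS ABS may be closer than the critical distance $\tau_{L|N}(r)=\left(\frac{\eta_L}{\eta_N}\right)^{1/\alpha_L} r^{\alpha_N/\alpha_L}$ from \textbf{Remark 1}. As all ABSs sit at height $h$, the requirement ``no LoS ABS within $3$D distance $\tau_{L|N}(r)$'' is equivalent to ``$\Phi_L$ has no point of horizontal distance below $\sqrt{\tau_{L|N}^2(r)-h^2}$.'' The remaining condition, that the chosen NLoS ABS beats all other NLoS ABSs, holds automatically once it is the nearest one.

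It then remains to evaluate the void probability of $\Phi_L$, whose intensity over the horizontal disk of radius $\rho$ is $2\pi\lambda_A\int_0^{\rho} xP_L(x)\,dx$. The standard PPP void formula gives $\exp\left(-2\pi\lambda_A\int_0^{\sqrt{\tau_{L|N}^2(r)-h^2}} xP_L(x)\,dx\right)$. By the independence of $\Phi_L$ and $\Phi_N$, this factor is independent of the conditioning event $R_{N,0}=r$, so I can multiply it by $f_{R_{N,0}}(r)$, integrate over $r\in[h,\infty)$, and restore the prefactor $(1-Q_T)$ to recover the claimed expression for $A_N$.

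The main obstacle I anticipate is bookkeeping rather than conceptual depth: cleanly justifying that the LoS void event factorizes out of the NLoS nearest-distance density (which rests on the independent-marking theorem and the independence of the tiers' frequency bands), and handling the distance conversion $\sqrt{\tau_{L|N}^2(r)-h^2}$ correctly. In particular one should check that $\tau_{L|N}(r)\ge h$ over the range of integration so the horizontal radius is real; when $\eta_L>\eta_N$ this holds for all $r\ge h$, and otherwise the void radius is taken to be zero so that the exponential factor is simply $1$.
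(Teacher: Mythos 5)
Your proof is correct and takes essentially the same route as the paper, whose proof is a one-line appeal to Lemma 2 of \cite{HetNet_ref} together with the no-TBS condition: condition on the nearest NLoS ABS at distance $r$ with density $f_{R_{N,0}}(r)$, multiply by the void probability of $\Phi_L$ over the horizontal disk of radius $\sqrt{\tau_{L|N}^2(r)-h^2}$, and factor out the prefactor $(1-Q_T)$ by the independence of $\Phi_T$ from the ABS processes. Your explicit invocations of the independent-thinning theorem (for the independence of $\Phi_L$ and $\Phi_N$) and the sanity check that $\tau_{L|N}(r)\geq h$ on the integration range simply supply details the paper delegates to the citation.
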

	\begin{proof}
		Using a similar method to Lemma 2 of \cite{HetNet_ref}, and considering  there  does not exist a TBS in $\mathcal{B}(0,R_B)$, the above expressions can be  obtained.
	\end{proof}
	
	
	We then study the probability that the typical UE is associated with a LoS ABS.
	
	\begin{lemma}
		The probability that the typical UE has at least one LoS ABS when $h\leq R_{L,0}\leq l_{L,T}$ can be calculated by $Q_L\triangleq F_{R_{L,0}}(l_{L,T})=1-\exp\left(-2\pi\lambda_A\int_{0}^{\sqrt{l_{L,T}^2-h^2}}xP_L(x)dx\right)$.
	\end{lemma}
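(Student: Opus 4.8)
The plan is to recognize that $Q_L$ is, by definition, the CDF of $R_{L,0}$ evaluated at $l_{L,T}$, and to compute it either as a void probability of the non-homogeneous point process $\Phi_L$ or, equivalently, by integrating the density $f_{R_{L,0}}$ already established in Lemma \ref{lemma_1}. First I would note that since every ABS hovers at height $h$, the distance $R_{L,0}$ from the typical UE to the nearest LoS ABS always satisfies $R_{L,0}\ge h$; hence the event $\{h\le R_{L,0}\le l_{L,T}\}$ coincides with $\{R_{L,0}\le l_{L,T}\}$, whose probability is exactly $F_{R_{L,0}}(l_{L,T})$. The event that there is at least one LoS ABS within distance $l_{L,T}$ is precisely the complement of the event that $\Phi_L$ places no point inside the corresponding region.

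Next I would convert the 3D distance bound into a horizontal one: a LoS ABS at horizontal distance $x$ sits at 3D distance $\sqrt{x^2+h^2}$, so $R_{L,0}\le l_{L,T}$ is equivalent to the nearest LoS ABS having horizontal distance at most $\rho\triangleq\sqrt{l_{L,T}^2-h^2}$. Applying the void probability of the non-homogeneous PPP $\Phi_L$, whose intensity over the disk of horizontal radius $\rho$ is $\int_{0}^{\rho}2\pi\lambda_A x P_L(x)\,dx$, gives $\Pr(\text{no LoS ABS})=\exp(-2\pi\lambda_A\int_{0}^{\rho}x P_L(x)\,dx)$, and taking the complement yields the claimed form of $Q_L$.

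Alternatively — and this is the cleanest route, since it reuses Lemma \ref{lemma_1} directly — I would integrate the density. Writing $g(r)\triangleq 2\pi\lambda_A\int_{0}^{\sqrt{r^2-h^2}}x P_L(x)\,dx$, the chain rule (using $\frac{d}{dr}\sqrt{r^2-h^2}=r/\sqrt{r^2-h^2}$) shows that the prefactor of $f_{R_{L,0}}(r)$ is exactly $g'(r)$, so that $f_{R_{L,0}}(r)=-\frac{d}{dr}\exp(-g(r))$. Integrating from $h$ to $l_{L,T}$ then telescopes to $F_{R_{L,0}}(l_{L,T})=\exp(-g(h))-\exp(-g(l_{L,T}))$; since $g(h)=0$ the first term equals $1$, recovering the stated expression for $Q_L$.

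The only points that warrant care are the lower-limit bookkeeping — confirming $R_{L,0}\ge h$ so the integral begins at $h$, and that $g(h)=0$ forces the boundary term to be one — together with the horizontal-versus-3D distance substitution $x=\sqrt{r^2-h^2}$. Beyond these, the argument is a routine void-probability / fundamental-theorem-of-calculus computation, entirely parallel to the earlier TBS result $Q_T=F_{R_{T,0}}(R_B)$.
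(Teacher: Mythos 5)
Your proposal is correct and matches the paper's (implicit) route: the paper states this lemma without an explicit proof, treating it as an immediate consequence of the void probability of the thinned PPP $\Phi_L$ underlying the CDF of $R_{L,0}$ in Lemma~1, which is precisely your first computation. Your second route via $f_{R_{L,0}}(r)=-\frac{d}{dr}\exp(-g(r))$ and the boundary value $g(h)=0$ is just the same fact verified by the fundamental theorem of calculus, so both are sound and nothing is missing.
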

	
	\begin{lemma}\label{association_L}
		The  probability that a typical UE connects to a LoS ABS can be calculated as
		\begin{equation}
			A_L = A_{L,1}+A_{L,2},
		\end{equation}
		where $ A_{L,1}$ and $A_{L,2}$ are the association probability related to the cases when there does not exist a TBS and there  exists a TBS in $\mathcal{B}(0,R_B)$, respectively. $ A_{L,1}$ and $A_{L,2}$ are given by
		\begin{small}
			\begin{equation}
				\begin{split}
					&A_{L,1}=(1-Q_T)\Bigg[\int_{h}^{l_{L,h}}f_{R_{L,0}}(r)dr+\int_{l_{L,h}}^{\infty}\\
					&\exp\left(-2\pi\lambda_A\int_{0}^{\sqrt{\tau^2_{N|L}(r)-h^2}}xP_N(x)dx\right) f_{R_{L,0}}(r) dr\Bigg],
				\end{split}
			\end{equation}
		\end{small}
		and
		\begin{equation}
			A_{L,2}= Q_TQ_L\mathbb{P}(A_{L,2}^{N})\mathbb{P}(A_{L,2}^{T}),
		\end{equation}
		respectively, where $\mathbb{P}(A_{L,2}^{N})=1$, if $ l_{L,h}> l_{L,T}$, or
		\begin{small}
			\begin{equation}
				\begin{split}
					\mathbb{P}(A_{L,2}^{N})&=\Bigg[\int_{h}^{l_{L,h}}f_{R_{L,0}}(r)dr+\int_{l_{L,h}}^{l_{L,T}}  \exp\Bigg(-2\pi\lambda_A\\
					&\times\int_{0}^{\sqrt{\tau^2_{N|L}(r)-h^2}}xP_N(x)dx\Bigg) f_{R_{L,0}}(r) dr\Bigg]\ Q_L \label{A_L2_N},
				\end{split}
			\end{equation}
		\end{small}
		and
		\begin{equation}
			\mathbb{P}(A_{L,2}^{T})=\frac{\int_{h}^{l_{L,T}}  \left[\exp\left(-\pi\lambda_T \tau^2_{T|L}(r) \right)-{(1-Q_T)}\right]{f_{R_{L,0}}(r)}}{Q_LQ_T}.
		\end{equation}
	\end{lemma}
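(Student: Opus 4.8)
The plan is to read the LoS-association event as ``the nearest LoS ABS delivers the largest long-term average received power,'' i.e.\ $\eta_L R_{L,0}^{-\alpha_L}$ must simultaneously exceed the nearest-NLoS-ABS power $\eta_N R_{N,0}^{-\alpha_N}$ and the nearest-in-ball-TBS power $\eta_T R_{T,0}^{-\alpha_T}\mathbf{1}(R_B-R_{T,0})$. Since the TBS contribution is identically zero whenever $\mathcal{B}(0,R_B)$ contains no TBS, I would first condition on that ball being empty (probability $1-Q_T$) or nonempty (probability $Q_T$); because $\Phi_T$ is independent of $\Phi_A=\Phi_L\cup\Phi_N$, this yields precisely $A_L=A_{L,1}+A_{L,2}$, the two summands being the empty- and nonempty-ball contributions.

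For $A_{L,1}$, only the NLoS-dominance condition survives. Conditioning on $R_{L,0}=r$, the required probability is $\mathbb{P}(R_{N,0}>\tau_{N|L}(r))$, which is the void probability of the non-homogeneous PPP $\Phi_N$ over the horizontal disk of radius $\sqrt{\tau_{N|L}^2(r)-h^2}$, namely $\exp(-2\pi\lambda_A\int_0^{\sqrt{\tau_{N|L}^2(r)-h^2}}xP_N(x)\,dx)$. The piecewise $\tau_{N|L}(r)$ from the no-TBS case of Remark~2 forces the split at $l_{L,h}$: for $h\le r<l_{L,h}$ one has $\tau_{N|L}(r)=h$, so the void probability equals $1$ (no ABS can lie below altitude $h$), leaving the bare term $\int_h^{l_{L,h}}f_{R_{L,0}}(r)\,dr$; integrating the void factor against $f_{R_{L,0}}$ over $[l_{L,h},\infty)$ and multiplying by $1-Q_T$ reproduces $A_{L,1}$.

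For $A_{L,2}$ I would restrict to the associable range $h\le r\le l_{L,T}$, since by the definition of $l_{L,T}$ (and Remark~2) any LoS ABS farther than $l_{L,T}$ is beaten by a TBS sitting at the ball edge; note $\int_h^{l_{L,T}}f_{R_{L,0}}(r)\,dr=F_{R_{L,0}}(l_{L,T})=Q_L$. Two dominance conditions must hold given $R_{L,0}=r$: the NLoS condition $R_{N,0}>\tau_{N|L}(r)$, and the TBS condition $\tau_{T|L}(r)<R_{T,0}\le R_B$, where the upper bound encodes that a TBS actually witnesses a nonempty ball. Using the homogeneous void law $\mathbb{P}(R_{T,0}>t)=\exp(-\pi\lambda_T t^2)$ together with $1-Q_T=\exp(-\pi\lambda_T R_B^2)$, the TBS probability collapses to $\exp(-\pi\lambda_T\tau_{T|L}^2(r))-(1-Q_T)$, exactly the bracket of $\mathbb{P}(A_{L,2}^T)$; the degenerate $\mathbb{P}(A_{L,2}^N)=1$ appears when $l_{L,h}>l_{L,T}$, since then $\tau_{N|L}\equiv h$ on the whole associable range.

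The delicate step is the final factorization $A_{L,2}=Q_TQ_L\,\mathbb{P}(A_{L,2}^N)\,\mathbb{P}(A_{L,2}^T)$. Conditioned on $r$, the two dominance events involve the disjoint independent processes $\Phi_N$ and $\Phi_T$, so the \emph{exact} joint probability is the single integral $\int_h^{l_{L,T}} g_N(r)\,g_T(r)\,f_{R_{L,0}}(r)\,dr$ with $g_N(r)=\mathbb{P}(R_{N,0}>\tau_{N|L}(r))$ and $g_T(r)=\exp(-\pi\lambda_T\tau_{T|L}^2(r))-(1-Q_T)$. The stated expression instead writes this as the product of the two separately integrated factors, each renormalized (by $Q_L$, and by the extra $Q_T$ for the TBS factor that absorbs the ball-nonempty conditioning), so that $\mathbb{P}(A_{L,2}^N)$ and $\mathbb{P}(A_{L,2}^T)$ are genuine conditional probabilities. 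I therefore expect the main obstacle to be justifying this separation as a decoupling (conditional-independence-over-$R_{L,0}$) approximation and verifying the normalizations match; the remaining manipulations are routine void-probability evaluations paralleling Lemma~\ref{association_N} and Lemma~2 of \cite{HetNet_ref}.
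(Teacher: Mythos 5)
Your proposal follows essentially the same route as the paper's Appendix~A proof: conditioning on whether $\mathcal{B}(0,R_B)$ contains a TBS to get $A_L=A_{L,1}+A_{L,2}$, evaluating void probabilities of $\Phi_N$ and $\Phi_T$ with the split at $l_{L,h}$ forced by the piecewise $\tau_{N|L}(r)$, and renormalizing by $Q_L$ and $Q_LQ_T$ so that $\mathbb{P}(A_{L,2}^{N})$ and $\mathbb{P}(A_{L,2}^{T})$ are conditional probabilities. Your caution about the final factorization $A_{L,2}=Q_TQ_L\,\mathbb{P}(A_{L,2}^{N})\,\mathbb{P}(A_{L,2}^{T})$ is well placed: the paper asserts it without comment, even though both dominance events depend on the common random variable $R_{L,0}$ (both void factors are decreasing in $r$, hence positively correlated), so the exact single integral of the product that you derive is replaced in the lemma by a product of separately integrated marginals --- a decoupling step the paper's own proof likewise leaves unjustified.
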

	
	\begin{proof}
		See Appendix \ref{Appendix_A}.
	\end{proof}
	
	Finally, we  study the probability that the typical UE is associated with a TBS.
	\begin{proposition}
		The probability that a typical UE connects to a TBS can be calculated as
		\begin{equation}
			A_T = 1-A_N-A_L.
		\end{equation}
	\end{proposition}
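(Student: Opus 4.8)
The plan is to establish this identity by a law-of-total-probability (partition) argument rather than by direct computation, since $A_N$ and $A_L$ have already been derived in Lemma~\ref{association_N} and Lemma~\ref{association_L}. First I would observe that, under the flexible association policy, the typical UE is served by the single BS that delivers the maximum long-term averaged received power among all candidate BSs. Because the received power in each category is monotonically decreasing in distance (cf. (\ref{power_L}), (\ref{power_N}), and the definition of $\eta_T$), only the nearest representative of each category can be the maximizer. Hence the serving BS is necessarily one of three candidates: the nearest in-range TBS, the nearest LoS ABS, or the nearest NLoS ABS, whose association probabilities are exactly $A_T$, $A_L$, and $A_N$.

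Next I would argue that these three events form a partition of the probability space. They are mutually exclusive because the maximum-power rule selects a unique serving BS, with ties occurring only on a set of measure zero. For collective exhaustiveness, note that $\Phi_A$ is an HPPP on the plane with positive density $\lambda_A$, so it almost surely contains at least one point; consequently a nearest LoS ABS (at distance $R_{L,0}$) or a nearest NLoS ABS (at distance $R_{N,0}$) always exists, and the typical UE is almost surely associated with some BS even when no TBS lies in $\mathcal{B}(0,R_B)$. Applying the law of total probability then gives $A_T+A_L+A_N=1$, from which the claimed identity $A_T=1-A_N-A_L$ follows immediately.

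The only delicate point, and thus the step I would verify most carefully, is exhaustiveness: one must confirm that the typical UE can never be left unassociated. This is guaranteed by the almost-sure existence of points of the infinite PPP $\Phi_A=\Phi_L\cup\Phi_N$, which always furnishes a valid aerial serving candidate regardless of whether the LoS ball $\mathcal{B}(0,R_B)$ happens to be empty of TBSs. Given this, no separate closed-form integral for $A_T$ is required; complementation against the already-computed $A_N$ and $A_L$ suffices, which is precisely why the statement is phrased as a corollary-style identity rather than a fresh derivation.
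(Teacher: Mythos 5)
Your proposal is correct and coincides with the paper's own (implicit) reasoning: the paper states this proposition without proof, treating it as immediate complementation because the three association events---nearest in-range TBS, nearest LoS ABS, nearest NLoS ABS---are mutually exclusive and collectively exhaustive under the max-received-power policy, which is exactly the partition argument you spell out. Your attention to exhaustiveness via the almost-sure non-emptiness of the HPPP $\Phi_A$ is a valid and welcome elaboration of what the paper leaves unstated.
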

	
	\subsection{Conditional distance distribution of the serving BS}
	Now, denoting $\hat{R}_{N,0}$, $\hat{R}_{L,0}$ and $\hat{R}_{T,0}$ as the minimum distance between the typical UE and the serving BS given that it is associated with  the NLoS ABS, LoS ABS and TBS, respectively. The following lemmas characterize the distributions of $\hat{R}_{N,0}$, $\hat{R}_{L,0}$ and $\hat{R}_{T,0}$.
	
	\begin{lemma}\label{conditional_N0}
		The probability density function (PDF) of $\hat{R}_{N,0}$ is given by	
		\begin{equation}\label{conditional_N0_expression}
			\begin{split}
				f_{\hat{R}_{N,0}}(r)&=\frac{f_{R_{N,0}}({r})}{A_N}\exp\Bigg(-2\pi\lambda_A\\
				&\times \int_{0}^{\sqrt{\tau^2_{L|N}(r)-h^2}}xP_L(x)dx\Bigg) (1-Q_T).
			\end{split}
		\end{equation}
	\end{lemma}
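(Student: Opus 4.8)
The plan is to derive the conditional density by a Bayes-type decomposition and then normalize by the unconditional association probability $A_N$ established in Lemma~\ref{association_N}. Writing $\mathcal{A}_N$ for the event that the typical UE is served by a NLoS ABS, I would begin from the identity
\begin{equation}
	f_{\hat{R}_{N,0}}(r)=\frac{f_{R_{N,0}}(r)\,\mathbb{P}\!\left(\mathcal{A}_N\mid R_{N,0}=r\right)}{A_N},
\end{equation}
where $f_{R_{N,0}}(r)$ is the unconditional nearest-NLoS-ABS density from Lemma~\ref{lemma_1}. This identity is justified by noting that the NLoS serving candidate is always the \emph{nearest} NLoS ABS, so fixing $R_{N,0}=r$ pins down the candidate NLoS received power $\eta_N r^{-\alpha_N}$ and reduces the association event to a comparison against the remaining tiers.

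The heart of the proof is evaluating $\mathbb{P}(\mathcal{A}_N\mid R_{N,0}=r)$, which splits into two independent requirements. First, by \textbf{Assumption 1}, a NLoS association is possible only when no TBS lies in $\mathcal{B}(0,R_B)$; since the TBS process is an HPPP, this occurs with the $r$-independent probability $1-Q_T=\exp(-\pi\lambda_T R_B^2)$. It is precisely \textbf{Assumption 1} that makes this factor constant in $r$, decoupling the TBS comparison from the serving distance. Second, every LoS ABS must deliver a smaller average received power than $\eta_N r^{-\alpha_N}$, which by \textbf{Remark 1} is equivalent to the absence of any LoS ABS within the equivalent distance $\tau_{L|N}(r)$, i.e.\ within horizontal distance $\sqrt{\tau^2_{L|N}(r)-h^2}$.

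Since the LoS ABSs form a non-homogeneous PPP $\Phi_L$ whose linear intensity in the horizontal distance $x$ is $2\pi\lambda_A\,x\,P_L(x)$, the void probability of having no LoS ABS inside that disc is
\begin{equation}
	\exp\!\left(-2\pi\lambda_A\int_{0}^{\sqrt{\tau^2_{L|N}(r)-h^2}}xP_L(x)\,dx\right).
\end{equation}
Because $\Phi_T$, $\Phi_L$ and $\Phi_N$ are mutually independent, the two requirements factorize, so $\mathbb{P}(\mathcal{A}_N\mid R_{N,0}=r)$ equals $(1-Q_T)$ times the above void probability; substituting this into the Bayes identity reproduces (\ref{conditional_N0_expression}) verbatim. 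The step I would handle most carefully is this factorization: one must argue that conditioning on the nearest point of $\Phi_N$ leaves the law of $\Phi_L$ unchanged. This holds because $\Phi_L$ and $\Phi_N$ arise from an \emph{independent}, position-dependent thinning of $\Phi_A$ and are therefore independent PPPs, and because the TBS void event is independent of both ABS sub-processes. Everything beyond this point is a direct substitution of the closed forms from Lemma~\ref{lemma_1}, \textbf{Remark 1}, and the definition of $Q_T$.
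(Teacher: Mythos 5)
Your proposal is correct and follows essentially the same route as the paper: the paper computes $\mathbb{P}(R_{N,0}\leq r\cap B_N)/A_N$ as an integral of $f_{R_{N,0}}(\tilde{r})$ against the LoS-ABS void probability over $[0,\sqrt{\tau^2_{L|N}(\tilde{r})-h^2}]$ times the factor $(1-Q_T)$, and then differentiates, which is exactly the integrated form of your density-level Bayes identity. Your additional remarks on the independence of $\Phi_L$ and $\Phi_N$ under position-dependent thinning and on \textbf{Assumption 1} making the TBS factor constant in $r$ merely make explicit what the paper leaves implicit, so there is no substantive difference in approach.
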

	
	\begin{proof}
		Denoting $B_N$ as the event that the typical UE is associated with a NLoS ABS. Then we have
		\begin{equation}
			\begin{split}
				f_{\hat{R}_{N,0}}(r)&=\frac{d}{dr}\mathbb{P}(\hat{R}_{N,0}\leq r)=\frac{d}{dr}\mathbb{P}({R}_{N,0}\leq r|B_N)\\
				&=\frac{d}{dr}\frac{\mathbb{P}({R}_{N,0}\leq r\cap B_N)}{A_N},
			\end{split}
		\end{equation}
		where $\mathbb{P}({R}_{N,0}\leq r\cap B_N)$ can be derived as
		\begin{small}
			\begin{equation}
				\begin{split}
					&\quad \ \mathbb{P}({R}_{N,0}\leq r\cap B_N)\\
					&=\mathbb{P}\left(X_{L,0}>\sqrt{\tau^2_{L|N}(R_{N,0})-h^2}\cap {R}_{N,0}\leq r \right)(1-Q_T)\\
					&=\int_{h}^{r}\exp\left(-2\pi\lambda_A\int_{0}^{\sqrt{\tau^2_{L|N}(\tilde{r})-h^2}}xP_L(x)dx\right) \\
					&\quad \ \times f_{R_{N,0}}(\tilde{r}) d\tilde{r} (1-Q_T).
				\end{split}
			\end{equation}
		\end{small}
		Then taking the first order derivative, we can obtain the expressions in  (\ref{conditional_N0_expression}).
	\end{proof}
	Given that the typical UE is associated with a LoS ABS, depending on whether  there exists a TBS in $\mathcal{B}(0,R_B)$, and utilizing the results in \textbf{Remark 2}, we have the following lemma.
	\begin{lemma}\label{conditional_L0}
		The PDF of $\hat{R}_{L,0}$ is given by	$f_{\hat{R}_{L,0},1}(r)$ or $f_{\hat{R}_{L,0},2}(r)$, where $f_{\hat{R}_{L,0},1}(r)$  is the PDF of  $\hat{R}_{L,0}$  when there does not exist a TBS in $\mathcal{B}(0,R_B)$, and $f_{\hat{R}_{L,0},2}(r)$ is the PDF of  $\hat{R}_{L,0}$  when there exists a TBS in $\mathcal{B}(0,R_B)$. $f_{\hat{R}_{L,0},1}(r)$ and $f_{\hat{R}_{L,0},2}(r)$ are given by
		\begin{equation}
			f_{\hat{R}_{L,0},1}(r)=\left\{ \begin{array}{ll}
				\frac{f_{R_{L,0}}(r)}{ A_{L}} (1-Q_T),\quad h\leq r\leq l_{L,h}\\
				\frac{f_{R_{L,0}}({r})}{ A_{L}}\exp\Bigg(-2\pi\lambda_A\int_{0}^{\sqrt{\tau^2_{N|L}(r)-h^2}}\\
				\times xP_N(x)dx\Bigg)  (1-Q_T), \quad r>l_{L,h}.
			\end{array} \right.
		\end{equation}

		When the condition $l_{L,h}> l_{L,T}$ is satisfied,
		\begin{equation}
			\begin{split}
				f_{\hat{R}_{L,0},2}(r)&=\frac{f_{R_{L,0}}(r)}{ A_{L}}\Big[\exp\Big(-\pi\lambda_T \tau^2_{T|L}(r) \Big)\\
				&-(1-Q_T)\Big],\quad h\leq r\leq l_{L,T}.
			\end{split}
		\end{equation}
		
		When the condition $l_{L,T}\geq l_{L,h}$ is satisfied,
		\begin{equation}
			\small
			f_{\hat{R}_{L,0},2}(r)=\left\{ \begin{array}{ll}
				\frac{f_{R_{L,0}}(r)}{ A_{L}}\left[\exp\left(-\pi\lambda_T \tau^2_{T|L}(r) \right)-(1-Q_T)\right],\\
				\quad\quad\quad\quad\quad\quad\quad\quad\quad\quad\quad\quad\quad h\leq r\leq l_{L,h}\\
				\frac{f_{R_{L,0}}(r)}{A_{L}}\left[\exp\left(-\pi\lambda_T \tau^2_{T|L}(r) \right)-(1-Q_T)\right]\\
				\times\exp\left(-2\pi\lambda_A\int_{0}^{\sqrt{\tau^2_{N|L}(r)-h^2}}xP_N(x)dx\right),\\
				\quad\quad\quad\quad\quad\quad\quad\quad\quad\quad\quad\quad l_{L,h}<r\leq l_{L,T}.
			\end{array} \right.
		\end{equation}
	\end{lemma}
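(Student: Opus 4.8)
The plan is to reproduce the conditioning argument from the proof of Lemma~\ref{conditional_N0}, now carrying the additional competition from the TBS tier. Writing $B_L$ for the event that the typical UE associates with a LoS ABS, I would begin from
\begin{equation}
	f_{\hat{R}_{L,0}}(r)=\frac{d}{dr}\mathbb{P}(R_{L,0}\leq r\mid B_L)=\frac{d}{dr}\frac{\mathbb{P}(R_{L,0}\leq r\cap B_L)}{A_L},
\end{equation}
and then split $B_L$ according to the two disjoint sub-events ``no TBS lies in $\mathcal{B}(0,R_B)$'' (probability $1-Q_T$) and ``at least one TBS lies in $\mathcal{B}(0,R_B)$'' (probability $Q_T$). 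Differentiating the contribution of each sub-event yields the two branches $f_{\hat{R}_{L,0},1}$ and $f_{\hat{R}_{L,0},2}$, respectively.

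To evaluate $\mathbb{P}(R_{L,0}\leq r\cap B_L)$ I would exploit the mutual independence of the thinned processes $\Phi_L$, $\Phi_N$ and $\Phi_T$. Conditioned on the serving LoS ABS sitting at distance $\tilde{r}$, association additionally requires that neither any NLoS ABS nor any TBS deliver a larger average received power; by the definitions in \textbf{Remark 2} these requirements become the void events $\{X_{N,0}>\sqrt{\tau^2_{N|L}(\tilde{r})-h^2}\}$ and ``the nearest TBS is farther than $\tau_{T|L}(\tilde{r})$''. The void probability of the non-homogeneous PPP $\Phi_N$ then supplies the factor $\exp(-2\pi\lambda_A\int_{0}^{\sqrt{\tau^2_{N|L}(\tilde{r})-h^2}}xP_N(x)dx)$, while $\Phi_T$ contributes either $1$ (in the ``no TBS'' branch) or the annulus factor described below.

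The remaining work is the piecewise bookkeeping dictated by the thresholds $l_{L,h}$ and $l_{L,T}$. Whenever $\tau_{N|L}(\tilde{r})=h$ — which by \textbf{Remark 2} holds for $\tilde{r}\leq l_{L,h}$, and for all admissible $\tilde{r}$ when $l_{L,h}>l_{L,T}$ — every NLoS ABS is automatically dominated (all lie at distance $\geq h$), so the NLoS void factor collapses to $1$; otherwise it retains the integral form. For the TBS tier in the $Q_T$ branch, I would note that association caps the LoS distance at $l_{L,T}$, where $\tau_{T|L}(l_{L,T})=R_B$, so $\tau_{T|L}(\tilde{r})\leq R_B$ throughout the admissible range. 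Hence the event ``no TBS within $\tau_{T|L}(\tilde{r})$ yet at least one TBS in $\mathcal{B}(0,R_B)$'' has probability $\exp(-\pi\lambda_T\tau^2_{T|L}(\tilde{r}))-\exp(-\pi\lambda_TR_B^2)=\exp(-\pi\lambda_T\tau^2_{T|L}(\tilde{r}))-(1-Q_T)$, which is exactly the bracketed factor. Assembling these factors, integrating from $h$ to $r$, and differentiating recovers each stated expression.

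The main obstacle I anticipate is identifying the TBS contribution as this \emph{annulus} probability rather than a plain void probability: one must recognize that in the $Q_T$ branch the LoS ABS can beat the TBS tier only when the nearest TBS falls in $(\tau_{T|L}(r),R_B)$, which is precisely why $(1-Q_T)$ is subtracted. Keeping the $l_{L,h}$ versus $l_{L,T}$ case split aligned with the piecewise definition of $\tau_{N|L}$ from \textbf{Remark 2} is the second point requiring care, though the differentiation itself is routine.
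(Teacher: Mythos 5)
Your proposal is correct and follows exactly the route the paper intends: the paper omits this proof, stating only that it mirrors \textbf{Lemma \ref{conditional_N0}}, and your argument is that conditioning computation carried out in full, with the TBS annulus factor $\exp\left(-\pi\lambda_T \tau^2_{T|L}(r)\right)-(1-Q_T)$ matching the identical quantity derived for $\mathbb{P}(A_{L,2}^{T})$ in Appendix~\ref{Appendix_A}. Your handling of the degenerate NLoS void factor when $\tau_{N|L}(r)=h$ and the case split at $l_{L,h}$ versus $l_{L,T}$ also aligns precisely with \textbf{Remark 2}, so nothing is missing.
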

	
	\begin{proof}
		The proof is similar to that in \textbf{Lemma \ref{conditional_N0}}, therefore it is omitted here.
	\end{proof}

	Given that the typical UE is associated with a TBS, and we consider a typical condition  when $l_{T,L} \leq R_B\leq l_{T,N}$. Utilizing the results in \textbf{Remark 3}, we have the following lemma.
	
	\begin{lemma}
		When the condition  $l_{T,L} \leq R_B\leq l_{T,N}$ is satisfied, the PDF of $\hat{R}_{T,0}$ is given by	
	\end{lemma}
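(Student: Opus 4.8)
The plan is to mirror the argument used for \textbf{Lemma \ref{conditional_N0}}, since the conditional distance density is again obtained by differentiating a joint probability and normalizing by the association probability. First I would let $B_T$ denote the event that the typical UE is associated with a TBS, so that
\begin{equation}
f_{\hat{R}_{T,0}}(r)=\frac{d}{dr}\mathbb{P}(\hat{R}_{T,0}\leq r)=\frac{1}{A_T}\frac{d}{dr}\mathbb{P}(R_{T,0}\leq r\cap B_T),\nonumber
\end{equation}
where $A_T$ is the TBS association probability from the Proposition. The key reduction is that, conditioned on the serving TBS being at distance $\tilde r\leq R_B$, the event $B_T$ holds exactly when no LoS ABS lies within $\tau_{L|T}(\tilde r)$ and no NLoS ABS lies within $\tau_{N|T}(\tilde r)$; because $\Phi_L$ and $\Phi_N$ are independent thinnings of $\Phi_A$, these two void events are independent and their probabilities factor as the product of the corresponding exponential (void) probabilities of the non-homogeneous PPPs.

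Next I would invoke \textbf{Remark 3} in the regime $l_{T,L}\leq R_B\leq l_{T,N}$. Here $\tau_{N|T}(\tilde r)=h$ for all $0\leq \tilde r\leq R_B$, so the NLoS void probability equals $\exp(-2\pi\lambda_A\int_{0}^{0}xP_N(x)dx)=1$ and the NLoS ABSs impose no constraint. The LoS threshold is piecewise: $\tau_{L|T}(\tilde r)=h$ for $\tilde r<l_{T,L}$ (again giving void probability $1$), and $\tau_{L|T}(\tilde r)=(\eta_L/\eta_T)^{1/\alpha_L}\tilde r^{\alpha_T/\alpha_L}$ for $l_{T,L}\leq \tilde r\leq R_B$. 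I would therefore write
\begin{equation}
\begin{split}
\mathbb{P}(R_{T,0}\leq r\cap B_T)&=\int_{0}^{r}\exp\left(-2\pi\lambda_A\int_{0}^{\sqrt{\tau_{L|T}^2(\tilde r)-h^2}}xP_L(x)dx\right)\\
&\quad\times f_{R_{T,0}}(\tilde r)\,d\tilde r,
\end{split}\nonumber
\end{equation}
with the understanding that on $[0,l_{T,L})$ the inner exponential equals $1$ and that $f_{R_{T,0}}$ is supported on $[0,R_B]$ by \textbf{Lemma \ref{lemma_1}}.

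Finally, differentiating with respect to $r$ by the fundamental theorem of calculus yields the expected two-piece density: $f_{\hat{R}_{T,0}}(r)=f_{R_{T,0}}(r)/A_T$ on $0\leq r< l_{T,L}$, and $f_{\hat{R}_{T,0}}(r)=\frac{f_{R_{T,0}}(r)}{A_T}\exp\left(-2\pi\lambda_A\int_{0}^{\sqrt{\tau_{L|T}^2(r)-h^2}}xP_L(x)dx\right)$ on $l_{T,L}\leq r\leq R_B$. I expect the only real subtlety to be the bookkeeping that makes the NLoS contribution vanish and correctly locates the crossover at $l_{T,L}$ where $\tau_{L|T}$ transitions from the floor $h$ to the power-law form; in particular, verifying that the regime hypothesis $l_{T,L}\leq R_B\leq l_{T,N}$ forces $\tau_{N|T}\equiv h$ across the entire support (so that no third, NLoS-governed piece appears, unlike the case $R_B\geq l_{T,N}$ in \textbf{Remark 3}) is the crux of the argument.
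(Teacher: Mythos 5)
Your proposal is correct and follows essentially the same route as the paper: the paper omits this proof, stating only that it parallels \textbf{Lemma \ref{conditional_N0}} and \textbf{Lemma \ref{conditional_L0}}, and your argument is precisely that method instantiated for the TBS case --- differentiate $\mathbb{P}(R_{T,0}\leq r\cap B_T)/A_T$, factor the independent void probabilities of $\Phi_L$ and $\Phi_N$, and use \textbf{Remark 3} to show $\tau_{N|T}\equiv h$ (so the NLoS factor is unity) with the LoS crossover at $l_{T,L}$. Your explicit verification that the regime hypothesis $l_{T,L}\leq R_B\leq l_{T,N}$ eliminates any NLoS-governed piece is exactly the bookkeeping the paper leaves implicit.
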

	
	\begin{equation}
		\small
		f_{\hat{R}_{T,0}}(r)=\left\{ \begin{array}{ll}
			\frac{f_{R_{T,0}}(r)}{A_{T}},\quad 0\leq r\leq l_{T,L}\\
			\frac{f_{R_{T,0}}({r})}{A_{T}}\exp\left(-2\pi\lambda_A\int_{0}^{\sqrt{\tau^2_{L|T}(r)-h^2}}xP_L(x)dx\right),\\
			\quad\quad\quad\quad\quad\quad\quad\quad\quad\quad\quad\quad\quad \l_{T,L}<r\leq R_B
		\end{array} \right.
	\end{equation}
	\begin{proof}
		The proof is similar to that in \textbf{Lemma \ref{conditional_N0}} and \textbf{Lemma \ref{conditional_L0}}, therefore  it is omitted here.
	\end{proof}

	\section{Coverage Probability Analysis}
	The coverage probability is defined as that a typical UE can successfully transmit signals with a targeted SINR. To begin with, we derive the Laplace transform of the interference.
	
	\subsection{Laplace Transform of Interference }
	Since the ABSs and the TBSs utilize different frequencies, the inter-tier interference can be avoided. We first consider the case that the  interference signals are from the TBSs.
	
	\begin{lemma}\label{laplace_T}
		The Laplace transform of interference from TBSs to a typical UE is given by
		\begin{equation}
			\begin{split}
				L_{I_T}(s)&=\exp\Bigg[\frac{-2\pi\lambda_T}{\alpha_T}\sum_{j\in\{M,m\}}\sum_{i=1}^{m_T}p_j\binom{m_T}{i}\Bigg(\frac{sP_TC_TG_j}{m_T}\Bigg)^{\frac{2}{\alpha_T}}\\
				&\times(-1)^{\frac{2}{\alpha_T}-i+1}
				\times\Bigg(B\Bigg(t_{j,u};i-\frac{2}{\alpha_T},1-m_T\Bigg)\\
				&-B\Bigg(t_{j,l};i-\frac{2}{\alpha_T},1-m_T\Bigg)\Bigg)\Bigg],\label{laplace_TBS}
			\end{split}
		\end{equation}
		where $t_{j,l}=-\frac{sP_TC_TG_j}{m_TR_{T,0}^{\alpha_T}}$,  $t_{j,u}=-\frac{sP_TC_TG_j}{m_TR_{B}^{\alpha_T}}$, and $B(\cdot;\cdot,\cdot)$ is the incomplete Beta function.
	\end{lemma}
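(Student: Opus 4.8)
The plan is to compute $L_{I_T}(s)=\mathbb{E}[e^{-sI_T}]$ with the standard probability generating functional (PGFL) machinery for a PPP, and then to reduce the resulting annulus integral to incomplete Beta functions through a single substitution combined with a finite binomial expansion.

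First I would write $I_T=\sum_{x_{T,i}}G_iP_TC_TR_{T,i}^{-\alpha_T}H_{T,i}$ as a sum over the points of $\Phi_T\cap\mathcal{B}(0,R_B)\backslash x_{T,0}$, where the marks $(G_i,H_{T,i})$ are i.i.d. across points and independent of the locations. Conditioning on the point process and using this independence factorizes the expectation into a product over points of $\mathbb{E}_{G,H}[\exp(-sGP_TC_Tr^{-\alpha_T}H)]$. I would evaluate the inner expectation in two stages: averaging over $H\sim\Gamma(m_T,1/m_T)$ via the Gamma moment generating function $\mathbb{E}[e^{-tH}]=(1+t/m_T)^{-m_T}$ with $t=sGP_TC_Tr^{-\alpha_T}$, and then averaging over the discrete beamforming gain $G\in\{G_M,G_m\}$ with probabilities $p_M,p_m$ from \eqref{gain_probability}. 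This gives $\sum_{j\in\{M,m\}}p_j(1+c_jr^{-\alpha_T})^{-m_T}$ with $c_j\triangleq sP_TC_TG_j/m_T$.

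Next I would invoke the PGFL. Since $x_{T,0}$ is the nearest TBS at distance $R_{T,0}$, conditioning on it leaves a PPP of interferers on the annulus $R_{T,0}<r<R_B$, so in polar coordinates
\begin{equation}
L_{I_T}(s)=\exp\left(-2\pi\lambda_T\sum_{j}p_j\int_{R_{T,0}}^{R_B}\left(1-(1+c_jr^{-\alpha_T})^{-m_T}\right)r\,dr\right),
\end{equation}
where I used $\sum_j p_j=1$ to pull the ``$1-$'' inside the sum. It remains to evaluate the one-dimensional integral for each $j$.

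The crux --- and the step I expect to be the main obstacle --- is turning $\int_{R_{T,0}}^{R_B}(1-(1+c_jr^{-\alpha_T})^{-m_T})r\,dr$ into the stated difference of incomplete Beta functions. The plan is to substitute $u=-c_jr^{-\alpha_T}$, which maps $r=R_{T,0}$ to $u=t_{j,l}$ and $r=R_B$ to $u=t_{j,u}$ monotonically (with $t_{j,l}<t_{j,u}<0$), and gives $r\,dr=\frac{c_j^{2/\alpha_T}}{\alpha_T}(-u)^{-2/\alpha_T-1}\,du$ together with $1+c_jr^{-\alpha_T}=1-u$. The integrand then becomes $1-(1-u)^{-m_T}$, which I would expand using the finite binomial identity
\begin{equation}
1-(1-u)^{-m_T}=\frac{(1-u)^{m_T}-1}{(1-u)^{m_T}}=\sum_{i=1}^{m_T}\binom{m_T}{i}(-1)^iu^i(1-u)^{-m_T},
\end{equation}
valid because $m_T$ is a positive integer. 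Each summand then contributes $\int_{t_{j,l}}^{t_{j,u}}u^{i-2/\alpha_T-1}(1-u)^{-m_T}\,du=B(t_{j,u};i-\frac{2}{\alpha_T},1-m_T)-B(t_{j,l};i-\frac{2}{\alpha_T},1-m_T)$, i.e. the incomplete Beta function with parameters $a=i-2/\alpha_T$ and $b=1-m_T$. Collecting the prefactor $c_j^{2/\alpha_T}/\alpha_T$ and the sign $(-1)^i\cdot(-1)^{-2/\alpha_T-1}=(-1)^{2/\alpha_T-i+1}$ (treating $(-1)^{-x}=(-1)^x$ formally) reproduces the claimed expression. The delicate points to handle carefully are the bookkeeping of the formal fractional powers of $-1$ coming from $(-u)^{-2/\alpha_T-1}$ with $u<0$ --- which is precisely what generates the non-integer exponent $(-1)^{2/\alpha_T-i+1}$ --- and the fact that the incomplete Beta functions are being evaluated at negative arguments, so they must be read as the analytically-continued integrals $\int_0^{t}u^{a-1}(1-u)^{b-1}\,du$ rather than their usual $[0,1]$ definition.
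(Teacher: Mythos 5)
Your proposal is correct and follows essentially the same route as the paper's Appendix C: marked-PPP factorization and PGFL over the annulus $(R_{T,0},R_B)$, the Gamma moment generating function, a finite binomial expansion of $1-(1+c_jr^{-\alpha_T})^{-m_T}$, and the substitution $t_j=-c_jr^{-\alpha_T}$ yielding the incomplete Beta functions at the negative endpoints $t_{j,l},t_{j,u}$. The only (immaterial) differences are that you substitute before expanding while the paper expands before substituting, and you explicitly flag the formal treatment of the fractional powers of $-1$ and the analytic continuation of $B(\cdot;\cdot,\cdot)$, which the paper handles implicitly by citing Gradshteyn--Ryzhik.
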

	
	\begin{proof}
		See Appendix \ref{Appendix_C}.
	\end{proof}
	
	Then we  consider the scenario where the interference signals are from ABSs.
	
	\begin{lemma}
		The Laplace transform of interference from ABSs to a typical UE is given in (\ref{laplace_IA}) at the top of next page,
		\begin{figure*}[!t]
			\begin{equation}\label{laplace_IA}
				\begin{split}
					L_{I_A}(s)&=\exp\left(-2\pi\lambda_A\int_{w_{L,l}(r)}^{w_{L,u}(r)}\left(1-
					\left(1+\frac{sP_AC_L}{m_L(x^2+h^2)^{\frac{\alpha_L}{2}}}\right)^{-m_L}\right)P_L(x)xdx\right)\\
					&\times \exp\left(-2\pi\lambda_A\int_{w_{N,l}(r)}^{w_{N,u}(r)}\left(1-
					\left(1+\frac{sP_AC_N}{m_N(x^2+h^2)^{\frac{\alpha_N}{2}}}\right)^{-m_N}\right)P_N(x)xdx\right),
				\end{split}
			\end{equation}
			\hrulefill
		\end{figure*}
		where $w_{L,l}(r)$, $w_{L,u}(r)$, $w_{N,l}(r)$, and $w_{N,u}(r)$ are given as follows.
		
		
		\begin{center}
			\small
			\renewcommand\tabcolsep{2.0pt}
			\begin{tabular}{ccccc}
				\hline
				Condition&$w_{L,l}(r)$&$w_{L,u}(r)$&$w_{N,l}(r)$&$w_{N,u}(r)$\\
				\hline
				$B_N$&$\sqrt{\tau_{L|N}^2(r)-h^2}$& $\infty$&$\sqrt{r^2-h^2}$&$\infty$\\
				$B_L$&$\sqrt{r^2-h^2}$&$\infty$&$\sqrt{\tau_{N|L}^2(r)-h^2}$&$\infty$\\
				\hline
			\end{tabular}
		\end{center}
		
	\end{lemma}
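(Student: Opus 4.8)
The plan is to exploit the independence of the two non-homogeneous PPPs $\Phi_L$ and $\Phi_N$, so that the aggregate interference $I_A = I_L + I_N$ decomposes into two statistically independent contributions and the Laplace transform factorizes as $L_{I_A}(s) = \mathbb{E}[e^{-sI_A}] = L_{I_L}(s)\,L_{I_N}(s)$. This already accounts for the product structure of (\ref{laplace_IA}); what remains is to evaluate each factor by the standard probability generating functional (PGFL) of a Poisson process, while correctly identifying the exclusion region of the interferers.

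First I would write $L_{I_L}(s) = \mathbb{E}[\exp(-s\sum_{x_{L,i}} P_AC_L R_{L,i}^{-\alpha_L} H_{L,i})]$ and condition on $\Phi_L$. Since the fading gains $H_{L,i}$ are i.i.d. $\Gamma(m_L,1/m_L)$ and independent of the point locations, the conditional expectation factorizes over the points, and the single-point average is the Laplace transform of a Gamma variable, $\mathbb{E}_H[e^{-tH}] = (1 + t/m_L)^{-m_L}$, evaluated at $t = sP_AC_L R_{L,i}^{-\alpha_L}$. Writing the true distance in terms of the horizontal distance $x$ as $R_{L,i} = \sqrt{x^2 + h^2}$, the per-point factor becomes $(1 + sP_AC_L/(m_L(x^2+h^2)^{\alpha_L/2}))^{-m_L}$. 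Applying the PGFL with intensity $\lambda_L(x) = 2\pi\lambda_A P_L(x)$ — the $2\pi$ absorbing the angular integration and $x\,dx$ serving as the radial element — yields precisely the first exponential in (\ref{laplace_IA}). The NLoS factor $L_{I_N}(s)$ is obtained identically, with $(C_N,\alpha_N,m_N,P_N)$ replacing their LoS counterparts, and its independence from the LoS term justifies the product.

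The step demanding the most care, and the only content beyond the routine PGFL computation, is fixing the integration limits $w_{L,l},w_{L,u},w_{N,l},w_{N,u}$, which encode the exclusion zones dictated by the association rule; here I would invoke the minimum interfering distances established in \textbf{Remark 1} and \textbf{Remark 2}. Under the event $B_N$ that the typical UE is served by a NLoS ABS at distance $r$, the serving link is the nearest NLoS ABS, so every interfering NLoS ABS has horizontal distance at least $\sqrt{r^2-h^2}$, while by \textbf{Remark 1} every interfering LoS ABS lies beyond $\tau_{L|N}(r)$, i.e. at horizontal distance at least $\sqrt{\tau_{L|N}^2(r)-h^2}$. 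Symmetrically, under $B_L$ the LoS interferers are excluded out to $\sqrt{r^2-h^2}$ and, by \textbf{Remark 2}, the NLoS interferers out to $\sqrt{\tau_{N|L}^2(r)-h^2}$; in every case the interference extends arbitrarily far, so the upper limits are $\infty$. Substituting these bounds reproduces the tabulated values of $w_{L,l},w_{L,u},w_{N,l},w_{N,u}$ and completes the proof.
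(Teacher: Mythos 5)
Your proposal is correct and follows essentially the same route as the paper, which likewise factorizes $L_{I_A}(s)=L_{I_L}(s)L_{I_N}(s)$ via the independence of $\Phi_L$ and $\Phi_N$ and then invokes the PGFL/Gamma-fading computation of \textbf{Lemma \ref{laplace_T}} with the exclusion distances from \textbf{Remarks 1--2}. You merely spell out the integration limits and the per-point Gamma Laplace factor that the paper leaves implicit, which is a faithful completion rather than a different argument.
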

	
	\begin{proof}
		The interference from ABSs $I_A$ contains both the interference from LoS ABSs $I_L$ and interference from NLoS ABSs $I_N$, then the Laplace transform of $I_A$ equals to $L_{I_A}(s)=L_{I_L}(s)L_{I_N}(s)$. Following the same steps in \textbf{Lemma \ref{laplace_T}}, we can obtain the above expressions.
	\end{proof}
	
	\subsection{Coverage Probability in mmWave Tier}
	Given that the typical UE is associated with a TBS, the coverage probability is defined as
	\begin{equation}
		P_T^C=\mathbb{P}\left(\frac{G_M P_TC_TR_{T,0}^{-\alpha_T}H_{T,0}}{I_T+\sigma^2}\geq\nu_T\right),
	\end{equation}
	where $\nu_T$ is the targeted SINR of the typical UE.
	\begin{lemma}\label{coverage_T}
		The exact and approximated expression of  $P_T^C$ can be expressed as
		\begin{align}
			&P_T^C=\int_{0}^{R_B}\Bigg(\sum_{k=0}^{m_T-1}\sum_{p=0}^{k}\binom{k}{p}\frac{r^{\alpha_Tk}(m_T\varepsilon_T)^k}{k!}(\sigma^2_T)^p\times\nonumber\\
			&\exp(-m_Tr^{\alpha_T}\varepsilon_T\sigma^2_T)\Bigg[(-1)^{k-p}\frac{\partial^{k-p}}{\partial s^{k-p}}L_{I_T}(s)\Bigg]\Bigg)f_{\hat{R}_{T,0}}(r)dr\nonumber\\
			&\quad\quad\approx\int_{0}^{R_B}\Bigg(\sum_{k=1}^{m_T}(-1)^{k+1}\binom{m_T}{k}e^{-kb_Tr^{\alpha_T}\varepsilon_T\sigma^2_T}\nonumber\\
			&\quad\quad\quad\times L_{I_T}(kb_Tr^{\alpha_T}\varepsilon_T)\Bigg)f_{\hat{R}_{T,0}}(r)dr.\label{T_coverage}
		\end{align}
		where $\varepsilon_T\triangleq\frac{\nu_T}{G_MP_TC_T}$, and ${s=m_T\varepsilon_Tr^{\alpha_T}}$.
	\end{lemma}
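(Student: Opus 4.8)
The plan is to start from the definition of $P_T^C$, convert the SINR-coverage event into a tail event for the Nakagami-$m$ fading gain $H_{T,0}$, and exploit that $H_{T,0}\sim\Gamma(m_T,\tfrac{1}{m_T})$ with integer $m_T$ so its complementary CDF is a finite sum. Conditioning on the serving distance $R_{T,0}=r$, the event $\gamma_T\geq\nu_T$ is equivalent to $H_{T,0}\geq\varepsilon_T r^{\alpha_T}(I_T+\sigma_T^2)$ with $\varepsilon_T=\nu_T/(G_M P_T C_T)$. Applying the Erlang tail identity $\mathbb{P}(H_{T,0}\geq x)=\sum_{k=0}^{m_T-1}\frac{(m_T x)^k}{k!}e^{-m_T x}$ with $x=\varepsilon_T r^{\alpha_T}(I_T+\sigma_T^2)$, and writing $s=m_T\varepsilon_T r^{\alpha_T}$, yields a finite sum whose $k$-th term carries the factor $(I_T+\sigma_T^2)^k e^{-s(I_T+\sigma_T^2)}$, where $I_T$ is still random given $r$.

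Next I would take the expectation over the interference $I_T$. The key identity is $\mathbb{E}[I_T^{\,n} e^{-s I_T}]=(-1)^{n}\frac{\partial^{n}}{\partial s^{n}}L_{I_T}(s)$, which converts powers of $I_T$ into derivatives of the Laplace transform obtained in Lemma \ref{laplace_T}. To isolate $I_T$ from the deterministic noise, I would binomially expand $(I_T+\sigma_T^2)^k=\sum_{p=0}^{k}\binom{k}{p}I_T^{\,k-p}(\sigma_T^2)^p$, pull out the deterministic factor $e^{-s\sigma_T^2}=e^{-m_T\varepsilon_T r^{\alpha_T}\sigma_T^2}$, and apply the derivative identity with $n=k-p$. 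Collecting terms reproduces precisely the double sum in the exact expression; integrating this conditional coverage probability against $f_{\hat{R}_{T,0}}(r)$ over $[0,R_B]$ then gives the first line of the claim.

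For the approximate expression, I would replace the exact Gamma tail by the Alzer approximation $\mathbb{P}(H_{T,0}\geq x)\approx 1-(1-e^{-b_T x})^{m_T}$, expand the power by the binomial theorem as $\sum_{k=1}^{m_T}(-1)^{k+1}\binom{m_T}{k}e^{-k b_T x}$, and substitute $x=\varepsilon_T r^{\alpha_T}(I_T+\sigma_T^2)$. Since the exponent is now linear in $I_T$, the expectation over $I_T$ collapses to a single evaluation of $L_{I_T}$ at $s=k b_T \varepsilon_T r^{\alpha_T}$, while the noise supplies the deterministic factor $e^{-k b_T \varepsilon_T r^{\alpha_T}\sigma_T^2}$. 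A final integration against $f_{\hat{R}_{T,0}}(r)$ produces the second line.

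The main obstacle I anticipate is the bookkeeping in the exact branch: correctly separating the random interference from the deterministic noise inside $(I_T+\sigma_T^2)^k$ and justifying differentiation under the expectation, so that the higher-order derivatives $\frac{\partial^{k-p}}{\partial s^{k-p}}L_{I_T}(s)$ are well defined and the signs $(-1)^{k-p}$ line up. The approximate branch is markedly simpler because the Alzer form keeps the interference in a single exponential, requiring only a rescaled argument of $L_{I_T}$ rather than any derivatives.
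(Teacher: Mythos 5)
Your proposal is correct and follows essentially the same route as the paper's proof in Appendix C: the Erlang tail of the Gamma CCDF, binomial expansion of $(I_T+\sigma_T^2)^k$, the identity $\mathbb{E}_{I_T}[e^{-sI_T}I_T^{\,n}]=(-1)^n\frac{\partial^n}{\partial s^n}L_{I_T}(s)$ evaluated at $s=m_T\varepsilon_T r^{\alpha_T}$ for the exact branch, and Alzer's Lemma plus binomial expansion for the approximation. Your explicit remark that the Erlang-tail step needs integer $m_T$ is a point the paper leaves implicit, but otherwise the two arguments coincide step for step.
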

	\begin{proof}
		See Appendix \ref{Appendix_D}.
	\end{proof}
	
	It is challenging to obtain exact closed-form expressions for $ P_T^C$. Accordingly, we adopt the Gaussian-Chebyshev quadrature\cite{G_C} to find an approximation of (\ref{T_coverage}) shown in (\ref{cov_T_app}) at the top of next page.
	
	\begin{corollary}
		When the condition  $l_{T,L} \leq R_B\leq l_{T,N}$ is satisfied, (\ref{T_coverage})  can be approximated by
		\begin{figure*}[!t]
			\begin{equation}\label{cov_T_app}
				\begin{split}
					&P_T^C\approx\frac{1}{2A_{T}}\sum_{i=1}^{N}\sum_{k=1}^{m_T}\omega_N(-1)^{k+1}\binom{m_T}{k}\sqrt{1-r_i^2}\Bigg\{l_{T,L}e^{-\xi_kc_i^{\alpha_T} \sigma^2_T}L_{I_T}(\xi_kc_i^{\alpha_T})f_{R_{T,0}}(c_i)+\\
					&(R_B-l_{T,L})e^{-\xi_kd_i^{\alpha_T}  \sigma^2_T}L_{I_T}(\xi_kd_i^{\alpha_T})f_{R_{T,0}}(d_i)\exp\left(-2\pi\lambda_A\int_{0}^{\sqrt{\tau^2_{L|T}(d_i)-h^2}}xP_L(x)dx\right)\Bigg\},
				\end{split}
			\end{equation}
			\hrulefill
		\end{figure*}
		where $\xi_k=kb_T\varepsilon_T$, and $N$ is a parameter to ensure a complexity-accuracy tradeoff, $\omega_N=\frac{\pi}{N}$, $r_i=\cos\left(\frac{2i-1}{2n}\pi\right)$, $c_i=\frac{l_{T,L}}{2}r_i+\frac{l_{T,L}}{2}$, and $d_i=\frac{R_B-l_{T,L}}{2}r_i+\frac{R_B-l_{T,L}}{2}$.
	\end{corollary}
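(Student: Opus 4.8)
The plan is to specialize the integral approximation already proved in \textbf{Lemma \ref{coverage_T}} to the regime $l_{T,L}\le R_B\le l_{T,N}$, and then to discretize the resulting single integral with the Gaussian--Chebyshev quadrature rule. I would start from the alternating-sum form of $P_T^C$ in the second line of (\ref{T_coverage}), abbreviating its integrand kernel as
\[
g(r)\triangleq\sum_{k=1}^{m_T}(-1)^{k+1}\binom{m_T}{k}e^{-kb_Tr^{\alpha_T}\varepsilon_T\sigma_T^2}L_{I_T}\!\left(kb_Tr^{\alpha_T}\varepsilon_T\right),
\]
so that $P_T^C\approx\int_0^{R_B}g(r)\,f_{\hat R_{T,0}}(r)\,dr$. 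The only regime-specific ingredient is the conditional density $f_{\hat R_{T,0}}(r)$, which for $l_{T,L}\le R_B\le l_{T,N}$ is the two-piece function supplied by the preceding lemma (whose expression follows directly from \textbf{Remark 3}).

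Next I would substitute that two-piece density and split the integral at $r=l_{T,L}$. On $[0,l_{T,L}]$ the density equals $f_{R_{T,0}}(r)/A_T$, while on $[l_{T,L},R_B]$ it carries the additional blockage-type factor $\exp\!\big(-2\pi\lambda_A\int_0^{\sqrt{\tau_{L|T}^2(r)-h^2}}xP_L(x)\,dx\big)$; factoring the common $1/A_T$ out front leaves two definite integrals with smooth, bounded integrands over finite intervals. This is exactly the setting in which Gaussian--Chebyshev quadrature \cite{G_C} applies.

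Finally, for each subinterval $[a,b]\in\{[0,l_{T,L}],[l_{T,L},R_B]\}$ I would use the affine change of variables $r=\tfrac{b-a}{2}t+\tfrac{b+a}{2}$ to pull the integral onto $[-1,1]$, insert the factor $\sqrt{1-t^2}/\sqrt{1-t^2}$ so that the integrand matches the Chebyshev weight, and invoke $\int_{-1}^{1}\psi(t)(1-t^2)^{-1/2}\,dt\approx\sum_{i=1}^N\omega_N\psi(r_i)$ with $\psi(t)=\big[g\sqrt{1-t^2}\big]$ evaluated after the substitution. This produces the evaluation nodes $c_i,d_i$ (the images of the Chebyshev abscissae $r_i$ on the two subintervals), the $\sqrt{1-r_i^2}$ factors, and the half-length weights $\tfrac{l_{T,L}}{2}$ and $\tfrac{R_B-l_{T,L}}{2}$; absorbing the common $\tfrac12$ with $1/A_T$ yields the prefactor $\tfrac{1}{2A_T}$ and recovers (\ref{cov_T_app}). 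There is no deep obstacle here: the substance was already carried by \textbf{Lemma \ref{coverage_T}}, and the remaining work is purely the bookkeeping of the change of variables and the correct placement of the quadrature nodes and weights. The one point needing care is that the approximation error now has two distinct sources---the Gamma-tail alternating-sum bound inherited from \textbf{Lemma \ref{coverage_T}}, and the quadrature truncation error, the latter being controlled by the free parameter $N$ that trades accuracy against complexity.
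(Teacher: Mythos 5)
Your proposal is correct and matches the paper's (unstated) derivation exactly: the paper offers no explicit proof of this corollary, merely invoking Gaussian--Chebyshev quadrature on the alternating-sum approximation in (\ref{T_coverage}) together with the two-piece conditional density $f_{\hat{R}_{T,0}}(r)$ from the preceding lemma, which is precisely the split at $r=l_{T,L}$ and the node/weight bookkeeping you carry out. As a side benefit, your careful affine map $r=\tfrac{b-a}{2}t+\tfrac{b+a}{2}$ shows the second-interval nodes should be $d_i=\tfrac{R_B-l_{T,L}}{2}r_i+\tfrac{R_B+l_{T,L}}{2}$ (midpoint $\tfrac{R_B+l_{T,L}}{2}$), so the paper's stated $d_i=\tfrac{R_B-l_{T,L}}{2}r_i+\tfrac{R_B-l_{T,L}}{2}$ --- which would place the evaluation points in $[0,R_B-l_{T,L}]$ rather than $[l_{T,L},R_B]$ --- is evidently a typo, as is the $n$ in place of $N$ in the definition of $r_i$.
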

	
	\subsection{Coverage Probability in the NOMA Enabled Tier}
	According to the NOMA decoding strategy, two cases will be considered, i.e. the near UE case and the far UE case. We first consider that the typical UE is in the LoS state.
	\subsubsection{LoS Near UE Case}
	In this case, i.e,  $R_{L,0}\leq R_f$, when the following conditions hold, successful decoding will occur:
	
	$\bullet$ The typical UE can decode the information of the fixed UE served by the same BS.
	
	$\bullet$ After the SIC process, the typical UE can decode its own information.
	
	Thus, the coverage probability in this case can be expressed as
	\begin{equation}
		\small
		P_{cov,near|LoS}(R_{L,0})=\mathbb{P}(\gamma_{t\rightarrow f,near}^L>\epsilon_f, \gamma_{t,near}^L>\epsilon_t),\label{cov_near_LoS}
	\end{equation}
	where $\epsilon_f$ and $\epsilon_t$ are the targeted SINR of the fixed UE and the typical UE, respectively.
	
	Based on (\ref{cov_near_LoS}), the  coverage probability of a typical UE for the near UE case in a LoS state is given in the following Proposition.
	\begin{proposition}\label{coverage_near_L}
		If $a_m-a_n\epsilon_f\geq 0$ and $a_n-\beta a_m\epsilon_t\geq 0$ hold, the exact and approximated  coverage probability of a typical UE for the LoS UE case can be expressed as
		{\small
			\begin{align}\label{cov_near_LoS_expression}
				&P_{cov,near|LoS}(R_{L,0})\nonumber\\
				&=\sum_{k=0}^{m_L-1}\sum_{p=0}^k\binom{k}{p}\frac{R_{L,0}^{\alpha_Lk}(m_L\epsilon^*)^k}{k!(P_AC_L)^k}\exp\left(-m_L\frac{\epsilon^*R_{L,0}^{\alpha_L}}{P_AC_L}\sigma^2_A\right)\nonumber\\
				&\quad\quad\times(  \sigma^2_A)^p\mathbb{E}_{I_A}\left[\exp\left(-m_L\frac{\epsilon^*R_{L,0}^{\alpha_L}}{P_AC_L}I_A\right)(I_A)^{k-p}\right]\nonumber\\
				&\approx\sum_{k=1}^{m_L}(-1)^{k+1}\binom{m_L}{k}e^{-kb_L\frac{\epsilon^*R_{L,0}^{\alpha_L} \sigma^2_A}{P_AC_L}}L_{I_A}\left(kb_L\frac{\epsilon^*R_{L,0}^{\alpha_L}}{P_AC_L}\right),
		\end{align}}
		where  $\epsilon^*=\max\left\{\frac{\epsilon_f}{a_m-a_n\epsilon_f},\frac{\epsilon_t}{a_n-\beta a_m \epsilon_t}\right\}$ and  $b_L=m_L(m_L!)^{-\frac{1}{m_L}}.$
		Otherwise $P_{cov,near|LoS}(R_{L,0})=0$.
	\end{proposition}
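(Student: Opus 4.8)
The plan is to first collapse the joint decoding event into a single tail event on the serving-link fading $H_{L,0}$, and then reuse the Gamma-CCDF machinery already employed for $P_T^C$ in \textbf{Lemma \ref{coverage_T}}. Writing $I_A = I_L + I_N$ for the aggregate ABS interference, the first condition $\gamma_{t\rightarrow f,near}^L > \epsilon_f$ in (\ref{t_f_near_LoS}) rearranges to $(a_m - a_n\epsilon_f)P_AC_LR_{L,0}^{-\alpha_L}H_{L,0} > \epsilon_f(I_A + \sigma^2_A)$, and the second condition $\gamma_{t,near}^L > \epsilon_t$ in (\ref{t_near_LoS}) rearranges to $(a_n - \beta a_m\epsilon_t)P_AC_LR_{L,0}^{-\alpha_L}H_{L,0} > \epsilon_t(I_A + \sigma^2_A)$. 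When either $a_m - a_n\epsilon_f < 0$ or $a_n - \beta a_m\epsilon_t < 0$, the corresponding left-hand side is negative while the right-hand side is strictly positive, so the event is empty and $P_{cov,near|LoS} = 0$; this justifies the ``otherwise'' branch. Under $a_m - a_n\epsilon_f \geq 0$ and $a_n - \beta a_m\epsilon_t \geq 0$, both inequalities become lower bounds on $H_{L,0}$ of the common form $H_{L,0} > \frac{c\, R_{L,0}^{\alpha_L}(I_A+\sigma^2_A)}{P_AC_L}$, where the constant $c$ equals $\frac{\epsilon_f}{a_m - a_n\epsilon_f}$ for the first condition and $\frac{\epsilon_t}{a_n - \beta a_m\epsilon_t}$ for the second.

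The crucial observation is that both thresholds are the \emph{same} positive multiple of $(I_A + \sigma^2_A)$, so their intersection is simply the larger one; since $I_A + \sigma^2_A > 0$, the larger threshold corresponds to the larger coefficient $c$, giving the single event $\{H_{L,0} > \frac{\epsilon^* R_{L,0}^{\alpha_L}(I_A+\sigma^2_A)}{P_AC_L}\}$ with $\epsilon^* = \max\{\frac{\epsilon_f}{a_m - a_n\epsilon_f}, \frac{\epsilon_t}{a_n - \beta a_m\epsilon_t}\}$ exactly as defined. Next, since $H_{L,0} \sim \Gamma(m_L, \frac{1}{m_L})$ with integer $m_L$, I would condition on $I_A$ and use the finite-sum CCDF $\mathbb{P}(H_{L,0} > \theta\mid I_A) = e^{-m_L\theta}\sum_{k=0}^{m_L-1}\frac{(m_L\theta)^k}{k!}$ with $\theta = \frac{\epsilon^* R_{L,0}^{\alpha_L}(I_A+\sigma^2_A)}{P_AC_L}$, then expand $(I_A+\sigma^2_A)^k$ by the binomial theorem and take $\mathbb{E}_{I_A}$. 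The $\sigma^2_A$ part factors out of the expectation while the $I_A$ part leaves the moment-weighted terms $\mathbb{E}_{I_A}[e^{-m_L\frac{\epsilon^* R_{L,0}^{\alpha_L}}{P_AC_L}I_A}(I_A)^{k-p}]$, which reproduces the exact expression verbatim.

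For the approximation, I would replace the exact Gamma CCDF by the Alzer-type estimate $\mathbb{P}(H_{L,0} > \theta) \approx 1 - (1 - e^{-b_L\theta})^{m_L}$ with $b_L = m_L(m_L!)^{-1/m_L}$; expanding the $m_L$-th power by the binomial theorem and cancelling the $k=0$ term yields the alternating sum $\sum_{k=1}^{m_L}(-1)^{k+1}\binom{m_L}{k}e^{-kb_L\theta}$. Substituting $\theta$, separating the deterministic $\sigma^2_A$ factor, and recognising $\mathbb{E}_{I_A}[e^{-kb_L\frac{\epsilon^* R_{L,0}^{\alpha_L}}{P_AC_L}I_A}] = L_{I_A}(kb_L\frac{\epsilon^* R_{L,0}^{\alpha_L}}{P_AC_L})$ produces the stated approximate form. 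The main obstacle is the first step: verifying that the two SINR constraints reduce to lower bounds on $H_{L,0}$ sharing an identical $(I_A + \sigma^2_A)$ dependence, so that their conjunction genuinely collapses to a single $\max$ threshold; the sign conditions on $a_m - a_n\epsilon_f$ and $a_n - \beta a_m\epsilon_t$ are precisely what keep these reductions direction-preserving and identify the degenerate zero-coverage case.
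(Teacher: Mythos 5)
Your proposal is correct and follows essentially the same route as the paper: the paper's proof substitutes the two SINR expressions into the joint coverage event and invokes the steps of \textbf{Appendix \ref{Appendix_D}} (Gamma CCDF expansion and binomial theorem for the exact form, Alzer's lemma plus the Laplace transform $L_{I_A}$ for the approximation), which is exactly what you carry out. Your only addition is to make explicit the step the paper leaves implicit—that both SINR constraints reduce to lower bounds on $H_{L,0}$ with a common factor $(I_A+\sigma^2_A)$, so the conjunction collapses to the single threshold $\epsilon^*=\max\left\{\frac{\epsilon_f}{a_m-a_n\epsilon_f},\frac{\epsilon_t}{a_n-\beta a_m\epsilon_t}\right\}$, with the sign conditions guaranteeing the reduction is direction-preserving and yielding the zero-coverage case otherwise.
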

	
	\begin{proof}
		Substituting (\ref{t_f_near_LoS}) and (\ref{t_near_LoS}) into (\ref{cov_near_LoS}), and following the same steps in \textbf{Appendix \ref{Appendix_D}}, we can obtain the exact expression of $P_{cov,near|LoS}(R_{L,0})$. Similarly, by adopting the Alzer's Lemma\cite{gamma_bound}, we obtain the results in (\ref{cov_near_LoS_expression}).
	\end{proof}
	
	\subsubsection{LoS Far UE Case}
	For this case, i.e,  $R_{L,0}> R_f$,  successful decoding will occur if the typical UE can decode its own information by treating the fixed UE as noise.  The  coverage probability of a typical UE for the far UE case in a LoS state is given in the following proposition.
	
	\begin{proposition}\label{coverage_far_L}
		If $a_m-a_n\epsilon_t\geq 0$ holds, the coverage probability of a typical UE for the LoS UE case can be expressed as
		\begin{align}\label{cov_far_LoS_expression}
			P_{cov,far|LoS}(R_{L,0})&\approx\sum_{k=1}^{m_L}(-1)^{k+1}\binom{m_L}{k}e^{-kb_L\frac{\epsilon_t^fR_{L,0}^{\alpha_L}\sigma^2_A}{P_AC_L}}\nonumber\\
			&\quad\times L_{I_A}\left(kb_L\frac{\epsilon_t^fR_{L,0}^{\alpha_L}}{P_AC_L}\right),
		\end{align}
		otherwise $P_{cov,far|LoS}(R_{L,0})=0$,
		where  $\epsilon_t^f=\frac{\epsilon_t}{a_m-a_n\epsilon_t}$.
	\end{proposition}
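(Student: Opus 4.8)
The plan is to mirror the proof of Proposition~\ref{coverage_near_L}, which simplifies here because the far typical user performs only a single decoding step (it decodes its own message while treating the fixed user as noise), rather than the two chained decodings of the near case. First I would write the coverage probability as the single event $P_{cov,far|LoS}(R_{L,0})=\mathbb{P}(\gamma_{t,far}^L>\epsilon_t)$ and substitute the expression for $\gamma_{t,far}^L$. Abbreviating the aggregate ABS interference as $I_A=I_L+I_N$ and clearing the denominator, the condition $\gamma_{t,far}^L>\epsilon_t$ is equivalent to
\begin{equation}
(a_m-a_n\epsilon_t)\,P_AC_LR_{L,0}^{-\alpha_L}H_{L,0}>\epsilon_t\left(I_A+\sigma^2_A\right).
\end{equation}

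Next I would split on the sign of $a_m-a_n\epsilon_t$. When $a_m-a_n\epsilon_t<0$ the left-hand side is nonpositive while the right-hand side is strictly positive, so the event is impossible and $P_{cov,far|LoS}(R_{L,0})=0$; this accounts for the stated feasibility condition. When $a_m-a_n\epsilon_t\geq0$, dividing through isolates the desired-link fading as $H_{L,0}>\tfrac{\epsilon_t^f R_{L,0}^{\alpha_L}}{P_AC_L}(I_A+\sigma^2_A)$ with $\epsilon_t^f=\tfrac{\epsilon_t}{a_m-a_n\epsilon_t}$, exactly the threshold appearing in the statement.

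The core step is then identical to Appendix~\ref{Appendix_D}. Since $H_{L,0}\sim\Gamma(m_L,\tfrac{1}{m_L})$ is a normalized Gamma variable, its complementary CDF is tightly approximated by Alzer's lemma \cite{gamma_bound}, $\mathbb{P}(H_{L,0}>x)\approx\sum_{k=1}^{m_L}(-1)^{k+1}\binom{m_L}{k}e^{-kb_Lx}$ with $b_L=m_L(m_L!)^{-1/m_L}$. Applying this with $x=\tfrac{\epsilon_t^f R_{L,0}^{\alpha_L}}{P_AC_L}(I_A+\sigma^2_A)$ and then taking the expectation over $I_A$, the deterministic noise factor separates out as $e^{-kb_L\epsilon_t^fR_{L,0}^{\alpha_L}\sigma^2_A/(P_AC_L)}$, while the residual $\mathbb{E}_{I_A}[e^{-sI_A}]$ evaluated at $s=kb_L\epsilon_t^fR_{L,0}^{\alpha_L}/(P_AC_L)$ is precisely the Laplace transform $L_{I_A}(s)$ supplied by the preceding lemma. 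This delivers (\ref{cov_far_LoS_expression}).

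The calculation is essentially routine given the earlier machinery; the only point demanding care is the cancellation in the first step. Because $H_{L,0}$ appears in both the numerator and the self-interference term $a_nP_AC_LR_{L,0}^{-\alpha_L}H_{L,0}$ of $\gamma_{t,far}^L$, the two fading-proportional terms must be combined exactly to produce the single coefficient $a_m-a_n\epsilon_t$. It is this single combined threshold---in contrast to the maximum of two thresholds in the near-UE case---that explains why only one feasibility constraint $a_m-a_n\epsilon_t\geq0$ survives and why the far-UE result collapses to a single Gamma-tail evaluation.
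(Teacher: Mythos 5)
Your proposal is correct and follows essentially the same route as the paper, whose proof of this proposition simply says to repeat the procedure of Proposition~1 (i.e., the Appendix~D derivation: rearrange the SINR event into a Gamma-tail condition, apply Alzer's lemma, and identify the expectation over $I_A$ as the Laplace transform $L_{I_A}$) with $\epsilon^*$ replaced by $\epsilon_t^f$. You merely make explicit the details the paper leaves implicit—the cancellation of $H_{L,0}$ yielding the single coefficient $a_m-a_n\epsilon_t$ and the infeasibility when that coefficient is negative—so there is nothing to correct.
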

	\begin{proof}
		By following the similar procedure, with interchanging $\epsilon^*$ with $\epsilon_t^f$, we obtain the desired results.
	\end{proof}
	\begin{remark}
		\textbf{Proposition \ref{coverage_near_L}} and \textbf{Proposition \ref{coverage_far_L}} provide  the basic power allocation guidelines for
		the NOMA enabled ABSs networks. The targeted SINR threshold of the typical UE and  the fixed UE both determine the coverage probability of a typical UE when it is associated with an ABS. Furthermore, inappropriate power allocation such as $a_m-a_n\epsilon_f<0$, $a_m-a_n\epsilon_t<0$ and  $a_n-\beta a_m\epsilon_t<0$ cause the coverage probability in  the NOMA enabled tier always being zero.
	\end{remark}
	\begin{theorem}\label{theorem1}
		The coverage probability of a typical UE associated with a LoS ABS is given by
		\begin{align}\label{cov_LoS_final}
			&P_{cov,LoS}(\epsilon_f,\epsilon_t)=\sum_{i=1}^{2}\Bigg[\int_{h}^{R_f}P_{cov,near|LoS}(r) f_{\hat{R}_{L,0},i}(r)dr\nonumber\\
			&\quad\quad\quad\quad+\int_{R_f}^{\infty}P_{cov,far|LoS}(r) f_{\hat{R}_{L,0},i}(r)dr\Bigg].
		\end{align}
	\end{theorem}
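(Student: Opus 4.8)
The plan is to obtain $P_{cov,LoS}(\epsilon_f,\epsilon_t)$ by deconditioning the per-distance coverage probabilities of \textbf{Proposition \ref{coverage_near_L}} and \textbf{Proposition \ref{coverage_far_L}} over the conditional law of the serving distance $\hat{R}_{L,0}$ established in \textbf{Lemma \ref{conditional_L0}}. Writing $B_L$ for the event that the typical UE is associated with a LoS ABS, the quantity of interest is $P_{cov,LoS}=\mathbb{P}(\text{typical UE is covered}|B_L)$, and the NOMA decoding rule dictates that the near-UE decoding criterion applies when $\hat{R}_{L,0}\leq R_f$ and the far-UE criterion applies when $\hat{R}_{L,0}>R_f$.

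First I would invoke the law of total probability, conditioning on the realisation $\hat{R}_{L,0}=r$, so that
\begin{equation}
    P_{cov,LoS}=\int_{h}^{\infty} P_{cov|LoS}(r)\,f_{\hat{R}_{L,0}}(r)\,dr,
\end{equation}
where $P_{cov|LoS}(r)$ is the coverage probability evaluated at a fixed serving distance $r$. Splitting the domain at $R_f$ and substituting $P_{cov,near|LoS}(r)$ from \textbf{Proposition \ref{coverage_near_L}} on $[h,R_f]$ and $P_{cov,far|LoS}(r)$ from \textbf{Proposition \ref{coverage_far_L}} on $(R_f,\infty)$ turns this single integral into the two pieces appearing in (\ref{cov_LoS_final}).

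Next I would partition the density of $\hat{R}_{L,0}$ according to whether $\mathcal{B}(0,R_B)$ contains a TBS. These two sub-events are mutually exclusive and exhaustive, and \textbf{Lemma \ref{conditional_L0}} already supplies the corresponding sub-densities $f_{\hat{R}_{L,0},1}(r)$ (no TBS present) and $f_{\hat{R}_{L,0},2}(r)$ (a TBS present), each carrying its own event-probability weighting through the factors $(1-Q_T)$ and $[\exp(-\pi\lambda_T\tau^2_{T|L}(r))-(1-Q_T)]$. Hence $f_{\hat{R}_{L,0}}(r)=\sum_{i=1}^{2}f_{\hat{R}_{L,0},i}(r)$, and inserting this decomposition and exchanging the finite sum with the integrals yields exactly (\ref{cov_LoS_final}).

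The step I expect to require the most care is the coupling between the TBS-presence sub-event and the interference statistics. The per-distance coverage probabilities of the two propositions are expressed through the Laplace transform $L_{I_A}$ of (\ref{laplace_IA}), whose lower NLoS limit $w_{N,l}(r)=\sqrt{\tau^2_{N|L}(r)-h^2}$ inherits the case distinction of \textbf{Remark 2}: $\tau_{N|L}(r)$ is not the same function when a TBS lies in $\mathcal{B}(0,R_B)$ as when none does. Consequently $P_{cov,near|LoS}$ and $P_{cov,far|LoS}$ must be understood as evaluated with the interference limits matching the branch $i$ against which they are integrated, so that each summand of (\ref{cov_LoS_final}) is internally consistent. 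Once this branch-matching is made explicit, the remaining manipulations are the routine interchange of summation and integration already used above.
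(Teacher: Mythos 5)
Your proposal is correct and follows essentially the same route as the paper's own (one-line) proof: decondition the per-distance near/far coverage probabilities over the conditional serving-distance densities $f_{\hat{R}_{L,0},1}$ and $f_{\hat{R}_{L,0},2}$, which already carry the TBS-present/TBS-absent event weights, and split the integration domain at $R_f$. Your additional observation that $\tau_{N|L}(r)$ inside $L_{I_A}$ must be taken from the branch of \textbf{Remark 2} matching the index $i$ is a valid clarification of a point the paper leaves implicit.
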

	
	\begin{proof}
		Based on (\ref{cov_near_LoS_expression}) and (\ref{cov_far_LoS_expression}), and  by considering the distance distribution of a typical UE associated with a LoS ABS, the result in (\ref{cov_LoS_final}) can be easily obtained.
	\end{proof}
	
	\subsubsection{NLoS Case}
	Then we consider the case that the typical UE is associated with a NLoS ABS. By following  the proof in \textbf{Proposition \ref{coverage_near_L}}, we can obtain
	\begin{align}
		P_{cov,near|NLoS}(R_{N,0})&\approx\sum_{k=1}^{m_N}(-1)^{k+1}\binom{m_N}{k}e^{-kb_N\frac{\epsilon^*R_{N,0}^{\alpha_N}\sigma^2_A}{P_AC_N}}\nonumber\\
		&\quad\times L_{I_A}\left(kb_N\frac{\epsilon^*R_{N,0}^{\alpha_N}}{P_AC_N}\right),
	\end{align}
	where $b_N=m_N(m_N!)^{-\frac{1}{m_N}}$, and
	\begin{align}
		P_{cov,far|NLoS}(R_{N,0})&\approx\sum_{k=1}^{m_N}(-1)^{k+1}\binom{m_N}{k}e^{-kb_N\frac{\epsilon_t^fR_{N,0}^{\alpha_N}\sigma^2_A}{P_AC_N}}\nonumber\\
		&\quad\times L_{I_A}\left(kb_N\frac{\epsilon_t^fR_{N,0}^{\alpha_N}}{P_AC_N}\right).
	\end{align}
	\begin{theorem}
		The  coverage probability of a typical UE associated with a NLoS ABS is given by
		\begin{align}\label{cov_NLoS_final}
			P_{cov,NLoS}(\epsilon_f,\epsilon_t)&=\int_{h}^{R_f}P_{cov,near|NLoS}(r) f_{\hat{R}_{N,0}}(r)dr\nonumber\\
			&+\int_{R_f}^{\infty}P_{cov,far|NLoS}(r) f_{\hat{R}_{N,0}}(r)dr.
		\end{align}
	\end{theorem}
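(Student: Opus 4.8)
The plan is to obtain $P_{cov,NLoS}$ by deconditioning the two distance-dependent coverage expressions over the serving-distance distribution, exactly paralleling the proof of \textbf{Theorem \ref{theorem1}}. Since $P_{cov,NLoS}$ is understood to be conditioned on the event $B_N$ that the typical UE is served by a NLoS ABS, the relevant serving distance is $\hat{R}_{N,0}$, whose conditional PDF $f_{\hat{R}_{N,0}}(r)$ is already supplied by \textbf{Lemma \ref{conditional_N0}}. First I would apply the law of total probability with respect to $\hat{R}_{N,0}$, writing the coverage probability as $\mathbb{E}_{\hat{R}_{N,0}}[\,\cdot\,]$, i.e. an integral of the conditional coverage probability against $f_{\hat{R}_{N,0}}(r)$ over $r\in[h,\infty)$, where the lower limit $h$ enforces the geometric constraint $R_{N,0}\geq h$.

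Next I would resolve the conditional coverage probability at a fixed serving distance $r$ into its near-UE and far-UE branches according to the NOMA decoding rule introduced in the SINR analysis: the typical UE plays the role of the near UE precisely when $R_{N,0}\leq R_f$ and that of the far UE when $R_{N,0}>R_f$. This dichotomy is what splits the single integral into the two terms of (\ref{cov_NLoS_final}), with the break point at $R_f$; on $[h,R_f]$ the conditional coverage equals $P_{cov,near|NLoS}(r)$ and on $(R_f,\infty)$ it equals $P_{cov,far|NLoS}(r)$, both already derived above. Note that, in contrast to the LoS case of \textbf{Theorem \ref{theorem1}} where the result carried a sum over $i\in\{1,2\}$ for the two branches of $f_{\hat{R}_{L,0},i}$, here \textbf{Assumption 1} forces NLoS association to occur only when no TBS lies in $\mathcal{B}(0,R_B)$, so a single distance PDF $f_{\hat{R}_{N,0}}$ (which already absorbs the factor $1-Q_T$) suffices and no such sum appears.

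The only point requiring care — rather than a genuine obstacle — is the correlation between the serving link and the interference field: the aggregate interference $I_A$ seen at serving distance $r$ depends on $r$ through the exclusion zones of the interfering LoS and NLoS ABSs (the $B_N$ row of the $w$-limit table, with $\tau_{L|N}(r)$ and $\sqrt{r^2-h^2}$). I would stress that this dependence is already fully captured inside $P_{cov,near|NLoS}(r)$ and $P_{cov,far|NLoS}(r)$, since their $L_{I_A}$ factors are evaluated with these $r$-dependent limits. Consequently, conditioning on $r$ decouples the serving-link fading from the interference, each branch is a legitimate conditional probability, and the final deaveraging reduces to the single integration against $f_{\hat{R}_{N,0}}(r)$, yielding (\ref{cov_NLoS_final}).
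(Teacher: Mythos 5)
Your proposal is correct and follows essentially the same route as the paper, which proves this theorem by direct analogy with \textbf{Theorem \ref{theorem1}}: decondition over the serving distance via $f_{\hat{R}_{N,0}}(r)$ from \textbf{Lemma \ref{conditional_N0}} and split the integral at $R_f$ according to the near-UE/far-UE dichotomy, with the $r$-dependent exclusion zones already absorbed into $L_{I_A}$ inside $P_{cov,near|NLoS}(r)$ and $P_{cov,far|NLoS}(r)$. Your added observations — that \textbf{Assumption 1} removes the sum over the two TBS-existence branches present in the LoS case, and that conditioning on $r$ decouples the serving-link fading from the interference field — are accurate refinements of the paper's terse argument rather than deviations from it.
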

	
	\begin{proof}
		The proof is similar  to the proof in \textbf{Theorem \ref{theorem1}}.
	\end{proof}
	\begin{theorem}
		The coverage probability of a typical UE associated to the NOMA enabled ABS tier is given by
		\begin{equation} \label{cov_ABS_final}
			P_A^C(\epsilon_f,\epsilon_t)=\frac{A_L}{A_A}P_{cov,LoS}(\epsilon_f,\epsilon_t)+\frac{A_N}{A_A}P_{cov,NLoS}(\epsilon_f,\epsilon_t).
		\end{equation}
	\end{theorem}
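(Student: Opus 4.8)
The plan is to invoke the law of total probability, decomposing the event that the typical UE is served by the NOMA enabled tier into the two mutually exclusive and exhaustive sub-events of being served by a LoS ABS or by a NLoS ABS. Since the coverage event for an ABS-associated UE is entirely determined, given the serving link type and serving distance, by the per-link expressions already derived, the whole argument reduces to correctly weighting the two conditional coverage probabilities.

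First I would let $B_A$ denote the event that the typical UE associates with the ABS tier, and recall that this event is the disjoint union $B_A=B_L\cup B_N$, where $B_L$ and $B_N$ are the events of associating with a LoS and a NLoS ABS, respectively. By \textbf{Lemma \ref{association_L}} and \textbf{Lemma \ref{association_N}} we have $\mathbb{P}(B_L)=A_L$ and $\mathbb{P}(B_N)=A_N$, so that $A_A\triangleq\mathbb{P}(B_A)=A_L+A_N$. Writing $\mathcal{C}$ for the coverage event, I would then condition on $B_A$ and split according to the two disjoint sub-events,
\begin{equation}
P_A^C=\mathbb{P}(\mathcal{C}\mid B_A)=\frac{\mathbb{P}(\mathcal{C}\cap B_L)+\mathbb{P}(\mathcal{C}\cap B_N)}{A_A},
\end{equation}
and rewrite each joint probability as $\mathbb{P}(\mathcal{C}\cap B_L)=\mathbb{P}(\mathcal{C}\mid B_L)\,A_L$ and $\mathbb{P}(\mathcal{C}\cap B_N)=\mathbb{P}(\mathcal{C}\mid B_N)\,A_N$.

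The only point requiring care — and the step I expect to be the main obstacle — is to verify that $P_{cov,LoS}(\epsilon_f,\epsilon_t)$ and $P_{cov,NLoS}(\epsilon_f,\epsilon_t)$, as derived in \textbf{Theorem \ref{theorem1}} and the preceding NLoS theorem, are \emph{exactly} the conditional coverage probabilities $\mathbb{P}(\mathcal{C}\mid B_L)$ and $\mathbb{P}(\mathcal{C}\mid B_N)$. This holds because those expressions integrate the per-link coverage probabilities against the conditional serving-distance PDFs $f_{\hat{R}_{L,0},i}$ and $f_{\hat{R}_{N,0}}$ from \textbf{Lemma \ref{conditional_L0}} and \textbf{Lemma \ref{conditional_N0}}, which already carry the conditioning on the respective association event — including the near/far split at the threshold $R_f$ and, via the sum over $i$, the presence or absence of a TBS in $\mathcal{B}(0,R_B)$.

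Substituting these identifications into the displayed decomposition gives
\begin{equation}
P_A^C=\frac{A_L}{A_A}P_{cov,LoS}(\epsilon_f,\epsilon_t)+\frac{A_N}{A_A}P_{cov,NLoS}(\epsilon_f,\epsilon_t),
\end{equation}
which is exactly the claimed result in (\ref{cov_ABS_final}). Because the TBS tier and ABS tier use distinct carrier frequencies, no cross-tier terms enter $\mathcal{C}$ on $B_A$, so the decomposition above is exhaustive and the proof is complete.
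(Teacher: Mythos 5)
Your proposal is correct and follows essentially the same route as the paper, whose one-line proof likewise combines $P_{cov,LoS}$ and $P_{cov,NLoS}$ via the law of total probability conditioned on ABS association; you simply make explicit the (valid) observation that the conditional serving-distance PDFs $f_{\hat{R}_{L,0},i}$ and $f_{\hat{R}_{N,0}}$ already embed the conditioning on $B_L$ and $B_N$, so the weights $A_L/A_A$ and $A_N/A_A$ are exactly the conditional association probabilities.
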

	\begin{proof}
		Based on (\ref{cov_LoS_final}) and (\ref{cov_NLoS_final}), by considering that the typical UE is associated with an ABS, the result in (\ref{cov_ABS_final}) can be  obtained.
	\end{proof}
	
	\section{Spectrum efficiency}
	In this section, we evaluate the spectrum efficiency of the proposed HetNets.
	\subsection{Ergodic Rate of mmWave Tier}
	\begin{theorem}
		The achievable ergodic rate of the mmWave tier can be expressed as follows
		\begin{equation}
			R_T=\frac{1}{\ln 2}\int_{0}^{\infty}\frac{\bar{F}_{\gamma_T}(z)}{1+z}dz,
		\end{equation}
		where $\bar{F}_{\gamma_T}(z)$ is approximated by
		\begin{align}
			\bar{F}_{\gamma_T}(z)&{\approx}\int_{0}^{R_B}\Bigg(\sum_{k=1}^{m_T}(-1)^{k+1}\binom{m_T}{k}e^{-kb_T\frac{zr^{\alpha_T}\sigma^2_T}{G_MP_TC_T}}\nonumber\\
			&\quad \times L_{I_T}\left(kb_T\frac{zr^{\alpha_T}}{G_MP_TC_T}\right)\Bigg)f_{\hat{R}_{T,0}}(r)dr.
		\end{align}
	\end{theorem}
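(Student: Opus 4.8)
The plan is to start from the operational definition of the ergodic rate as the average spectral efficiency, $R_T=\expect\left[\log_2\left(1+\gamma_T\right)\right]$, where the expectation is taken over the small-scale fading $H_{T,0}$, the serving distance $\hat{R}_{T,0}$, and the interference field generating $I_T$, and then to convert this expectation into the claimed tail integral by means of a standard layer-cake identity. Dividing by $\ln 2$ at the end simply converts the base-two logarithm to the natural logarithm that appears in the prefactor.

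First I would establish the tail-integral representation. For any nonnegative random variable $X$ one has $\ln(1+X)=\int_{0}^{X}\frac{1}{1+t}\,dt=\int_{0}^{\infty}\frac{\mathbf{1}(t<X)}{1+t}\,dt$, obtained by rewriting the logarithm as an integral and extending the integrand with an indicator. Taking expectations and exchanging $\expect$ with the $t$-integral via Tonelli's theorem, which is legitimate because the integrand is nonnegative, gives $\expect\left[\ln(1+\gamma_T)\right]=\int_{0}^{\infty}\frac{\mathbb{P}(\gamma_T>z)}{1+z}\,dz=\int_{0}^{\infty}\frac{\bar{F}_{\gamma_T}(z)}{1+z}\,dz$. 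Multiplying by $\frac{1}{\ln 2}$ yields precisely the stated expression for $R_T$.

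Next I would identify the complementary CDF $\bar{F}_{\gamma_T}(z)=\mathbb{P}(\gamma_T>z)$ with a coverage-probability computation that has already been carried out. Comparing the event $\{\gamma_T>z\}$ with the coverage event in \textbf{Lemma \ref{coverage_T}}, the two coincide once the target SINR $\nu_T$ is replaced by the dummy variable $z$; that is, $\bar{F}_{\gamma_T}(z)=P_T^C\big|_{\nu_T=z}$. Consequently the approximation derived there transfers verbatim: setting $\varepsilon_T=\nu_T/(G_MP_TC_T)$ to $z/(G_MP_TC_T)$, the Alzer-lemma bound on the Gamma tail together with the interference Laplace transform $L_{I_T}(\cdot)$ from \textbf{Lemma \ref{laplace_T}}, averaged against the conditional density $f_{\hat{R}_{T,0}}(r)$ over $[0,R_B]$, produces exactly the displayed approximation for $\bar{F}_{\gamma_T}(z)$.

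The key point to watch is not the derivation itself, since each ingredient is either standard (the tail identity) or already established (the coverage approximation and the Laplace transform), but rather the legitimacy of carrying the approximation inside the outer $z$-integral. Because the Alzer bound and $L_{I_T}(\cdot)$ are continuous in the SINR argument and the resulting integrand is uniformly dominated on the bounded support $[0,R_B]$, the substitution $\nu_T\mapsto z$ holds uniformly in $z$ and the interchange is justified, so I expect the main effort to be bookkeeping rather than any substantive technical obstacle.
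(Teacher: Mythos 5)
Your proposal is correct and follows essentially the same route as the paper: the paper likewise writes $R_T=\mathbb{E}\left[\log_2(1+\gamma_T)\right]$, converts it to the tail integral $\frac{1}{\ln 2}\int_{0}^{\infty}\frac{\bar{F}_{\gamma_T}(z)}{1+z}\,dz$, and then obtains $\bar{F}_{\gamma_T}(z)$ by repeating the coverage-probability derivation of Appendix~\ref{Appendix_D} with the threshold $\nu_T$ replaced by $z$, i.e., applying the Alzer approximation of the Gamma tail and the Laplace transform $L_{I_T}(\cdot)$ averaged against $f_{\hat{R}_{T,0}}(r)$ on $[0,R_B]$. Your added justifications (Tonelli for the layer-cake exchange and the uniform-in-$z$ validity of the substitution) are more explicit than the paper's, but they do not change the argument.
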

	\begin{proof}
		For the UE connected to a TBS, the achievable ergodic rate can be expressed as
		\begin{align}
			R_T&=\mathbb{E}\left[\log_2(1+\gamma_{T})\right]\nonumber\\
			&=\int_{0}^{\infty}\mathbb{P}\left(\gamma_{T}>z\right)d\log_2\left(1+z\right)=\frac{1}{\ln 2}\int_{0}^{\infty}\frac{\bar{F}_{\gamma_T}(z)}{1+z}dz.
		\end{align}
		Then, given the typical UE is associated with a TBS, and following the same procedure in \textbf{Appendix \ref{Appendix_D}}, the expression for $\bar{F}_{\gamma_T}(z)$ is given by
		\begin{align}
			\bar{F}_{\gamma_T}(z)&=\mathbb{P}\left(\frac{G_M P_TC_TR_{T,0}^{-\alpha_T}H_{T,0}}{I_T+\sigma_T^2}>z\right)\nonumber\\
			&{\approx}\int_{0}^{R_B}\Bigg(\sum_{k=1}^{m_T}(-1)^{k+1}\binom{m_T}{k}e^{-kb_T\frac{zr^{\alpha_T}{  \sigma^2_T}}{G_MP_TC_T}}\nonumber\\
			&\quad \times L_{I_T}\left(kb_T\frac{zr^{\alpha_T}}{G_MP_TC_T}\right)\Bigg)f_{\hat{R}_{T,0}}(r)dr.
		\end{align}
	\end{proof}
	\subsection{Ergodic Rate of the NOMA Enabled Tier}
	Different from the derivations of ergodic rate in mmWave tier, the achievable ergodic rate for the NOMA enabled tier is determined by the channel conditions of UEs. If the far UE can decode its own message, the near UE  can decode the message of the far UE since it has a better channel condition.  Denote $f_1(z)\triangleq=\frac{z}{a_n-\beta a_m z}$ and $f_2(z)=\frac{z}{a_m-a_nz}$, then the following theorems show the ergodic rate of the NOMA enabled tier.
	
	\begin{theorem}
		When the typical UE is associated with an ABS, and	for the LoS near UE case, the achievable ergodic rate can be expressed as
		\begin{equation}
			\small
			R_{near|LoS}=\frac{1}{\ln 2}\int_{0}^{\frac{a_n}{\beta a_m}}\frac{\bar{F}_{\gamma_{t,near}^L}(z)}{1+z}dz+\frac{1}{\ln 2}\int_{0}^{\frac{a_m}{a_n}}\frac{\bar{F}_{\gamma_{f,near}^L}(z)}{1+z}dz,
		\end{equation}
		where
		\begin{align}
			\bar{F}_{\gamma_{t,near}^L}(z)&\approx\sum_{i=1}^{2}\int_{h}^{R_f}\Bigg[\sum_{k=1}^{m_L}(-1)^{k+1}\binom{m_L}{k}e^{-f_1(z)\frac{kb_Lr^{\alpha_L}\sigma^2_A}{P_AC_L}}\nonumber\\
			&\quad\times L_{I_A}\left( f_1(z)\frac{kb_Lr^{\alpha_L}}{P_AC_L}\right)\Bigg]f_{\hat{R}_{L,0},i}(r)dr,
		\end{align}
		and
		\begin{align}
			\bar{F}_{\gamma_{f,near}^L}(z)&{\approx}\sum_{i=1}^{2}\int_{h}^{R_f}\Bigg[ \sum_{k=1}^{m_L}(-1)^{k+1}\binom{m_L}{k}e^{-f_2(z)\frac{kb_LR_{f}^{\alpha_L} \sigma^2_A}{P_AC_L}}\nonumber\\
			&\quad\times L_{I_A}\left( f_2(z)\frac{kb_LR_{f}^{\alpha_L}}{P_AC_L}\right)\Bigg]f_{\hat{R}_{L,0},i}(r)dr.
		\end{align}
	\end{theorem}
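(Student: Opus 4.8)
The plan is to view $R_{near|LoS}$ as the sum rate of the NOMA pair in the LoS near-UE case, i.e. the typical (near) UE's own-message rate plus the fixed (far) UE's rate, and to evaluate each summand through the complementary-CDF representation of the ergodic rate. First I would write $R_{near|LoS}=\mathbb{E}[\log_2(1+\gamma_{t,near}^L)]+\mathbb{E}[\log_2(1+\gamma_{f,near}^L)]$ and invoke the identity $\mathbb{E}[\log_2(1+\gamma)]=\frac{1}{\ln 2}\int_0^\infty \frac{\bar{F}_\gamma(z)}{1+z}dz$, which follows by writing $\log_2(1+\gamma)=\frac{1}{\ln 2}\int_0^\gamma (1+z)^{-1}dz=\frac{1}{\ln 2}\int_0^\infty \frac{\mathbf{1}(\gamma>z)}{1+z}dz$ and interchanging expectation and integration. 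The next step is to justify the finite upper limits. Dividing the numerator and denominator of (\ref{t_near_LoS}) by $P_AC_LR_{L,0}^{-\alpha_L}H_{L,0}$ gives $\gamma_{t,near}^L=a_n/\big(\beta a_m+(I_A+\sigma^2_A)/(P_AC_LR_{L,0}^{-\alpha_L}H_{L,0})\big)<a_n/(\beta a_m)$, so $\bar{F}_{\gamma_{t,near}^L}(z)=0$ for $z\geq a_n/(\beta a_m)$; the analogous manipulation on (\ref{f_near_LoS}) yields the ceiling $a_m/a_n$ for $\gamma_{f,near}^L$. These ceilings, caused respectively by the residual-SIC term and the intra-pair interference, collapse the $[0,\infty)$ integrals to the stated limits $a_n/(\beta a_m)$ and $a_m/a_n$.

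To obtain the two complementary CDFs I would transcribe the coverage-probability argument of Appendix \ref{Appendix_D}. For $z<a_n/(\beta a_m)$, rearranging $\gamma_{t,near}^L>z$ isolates the serving channel gain as $H_{L,0}>f_1(z)R_{L,0}^{\alpha_L}(I_A+\sigma^2_A)/(P_AC_L)$, where the factor $f_1(z)=z/(a_n-\beta a_m z)$ is exactly what the rearrangement produces. Since $H_{L,0}\sim\Gamma(m_L,\frac{1}{m_L})$, applying Alzer's bound \cite{gamma_bound} approximates $\mathbb{P}(H_{L,0}>x)$ by $\sum_{k=1}^{m_L}(-1)^{k+1}\binom{m_L}{k}e^{-kb_Lx}$ with $b_L=m_L(m_L!)^{-1/m_L}$; substituting $x$, splitting off the noise factor, and taking the expectation over $I_A$ turns the interference exponential into the Laplace transform $L_{I_A}(\cdot)$. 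Averaging over the serving distance with the conditional PDFs $f_{\hat{R}_{L,0},i}(r)$ of Lemma \ref{conditional_L0} on the near-UE range $[h,R_f]$, and summing the two cases $i\in\{1,2\}$ for whether a TBS lies in $\mathcal{B}(0,R_B)$, produces the claimed $\bar{F}_{\gamma_{t,near}^L}(z)$. The fixed-UE CCDF is identical after replacing $f_1(z)$ with $f_2(z)=z/(a_m-a_nz)$ and $R_{L,0}^{\alpha_L}$ with $R_f^{\alpha_L}$ in the signal terms, since the fixed UE sits at the deterministic distance $R_f$ while the interference geometry, hence $L_{I_A}$, is still controlled by the typical UE's location $r$.

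The main obstacle is the bookkeeping around the two SINR ceilings combined with the two-layer averaging. One must recognize that the imperfect-SIC residual $\beta a_mP_AC_LR_{L,0}^{-\alpha_L}H_{L,0}$ and the intra-pair interference $a_nP_AC_LR_f^{-\alpha_L}H_{L,f}$ are precisely what cap $\gamma_{t,near}^L$ and $\gamma_{f,near}^L$, making the outer integrals finite and the rate well defined; this is also why the positivity conditions $a_n-\beta a_m z>0$ and $a_m-a_nz>0$ over the integration ranges are what keep $f_1$ and $f_2$ positive, so the Gamma-tail isolation is legitimate. The remaining steps—isolating the Gamma gain, invoking Alzer's bound, and recovering $L_{I_A}$—are a direct reuse of Appendix \ref{Appendix_D} and involve no new difficulty.
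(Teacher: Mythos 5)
Your proposal is correct and follows essentially the same route as the paper's proof: the sum-rate decomposition, the CCDF integral identity, the rearrangements yielding $f_1(z)$ and $f_2(z)$, Alzer's approximation of the Gamma tail, recovery of $L_{I_A}$, and the averaging over $f_{\hat{R}_{L,0},i}(r)$ split by whether a TBS lies in $\mathcal{B}(0,R_B)$. Your explicit derivation of the SINR ceilings $a_n/(\beta a_m)$ and $a_m/a_n$ simply spells out what the paper dismisses with ``it is easy to observe,'' so there is no substantive difference.
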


\begin{proof}
	When the typical UE is associated with an ABS, and	for the LoS near UE case, the achievable ergodic rate can be expressed as
	\begin{align}
		&R_{near|LoS}=\mathbb{E}\left[\log_2(1+\gamma_{t,near}^L)+\log_2(1+\gamma_{f,near}^L)\right]\nonumber\\
		&=\frac{1}{\ln 2}\int_{0}^{\infty}\frac{\bar{F}_{\gamma_{t,near}^L}(z)}{1+z}dz+\frac{1}{\ln 2}\int_{0}^{\infty}\frac{\bar{F}_{\gamma_{f,near}^L}(z)}{1+z}dz.
	\end{align}
	Based on (\ref{t_near_LoS}), we can obtain the expressions for $\bar{F}_{\gamma_{t,near}^L}(z)$  as follows
	\begin{align}
		\bar{F}_{\gamma_{t,near}^L}(z)&=\mathbb{P}\left(\frac{P_AC_LR_{L,0}^{-\alpha_L}H_{L,0}}{I_{L}+I_{N}+\sigma^2}>{  f_1(z)}\right)\nonumber\\
		&{\approx}\ \mathbb{E}_{R_{L,0}}\Bigg[\sum_{k=1}^{m_L}(-1)^{k+1}\binom{m_L}{k}e^{{  -f_1(z)}\frac{kb_LR_{L,0}^{\alpha_L}{  \sigma^2_A}}{P_AC_L}}\nonumber\\
		&\quad\times L_{I_A}\left({  f_1(z)}\frac{kb_LR_{L,0}^{\alpha_L}}{P_AC_L}\right)\Bigg].
	\end{align}
	Note that for the case $z\geq\frac{a_n}{\beta a_m}$, it is easy to observe that $\bar{F}_{\gamma_{t,near}^L}=0$.	
	
	Then according to whether there exists a TBS in $\mathcal{B}(0,R_B)$, we can obtain the above expressions.

	Similarly, 	based on (\ref{f_near_LoS}), we can obtain the expressions for $\bar{F}_{\gamma_{f,near}^L}$  as follows
	\begin{align}
		\bar{F}_{\gamma_{f,near}^L}(z)&=\mathbb{P}\left(H_{L,f}>{  f_2(z)}\frac{R_f^{\alpha_L}(I_A+\sigma^2)}{P_AC_L}\right)\nonumber\\
		&{\approx}\ \mathbb{E}_{R_{L,0}}\Bigg[ \sum_{k=1}^{m_L}(-1)^{k+1}\binom{m_L}{k}e^{{  -f_2(z)}\frac{kb_LR_{f}^{\alpha_L}{  \sigma^2_A}}{P_AC_L}}\nonumber\\
		&\quad\times L_{I_A}\left({  f_2(z)}\frac{kb_LR_{f}^{\alpha_L}}{P_AC_L}\right)\Bigg].
	\end{align}
	
	Note that for the case $z\geq\frac{a_m}{a_n}$, it is easy to observe that $\bar{F}_{\gamma_{f,near}^L}=0$.
\end{proof}

By following the same procedures, we can obtain the theorems below.

\begin{theorem}
	When the typical UE is associated with an ABS, and	for the LoS far UE case, the achievable ergodic rate can be expressed as
	\begin{equation}
		R_{far|LoS}=\frac{1}{\ln 2}\int_{0}^{{  \frac{a_n}{\beta a_m}}}\frac{\bar{F}_{\gamma_{f,far}^L}(z)}{1+z}dz+\frac{1}{\ln 2}\int_{0}^{\frac{a_m}{a_n}}\frac{\bar{F}_{\gamma_{t,far}^L}(z)}{1+z}dz,
	\end{equation}
	where 
	\begin{align}
		\bar{F}_{\gamma_{t,far}^L}(z)&\approx\sum_{i=1}^{2}\int_{R_f}^{\infty}\Bigg[\sum_{k=1}^{m_L}(-1)^{k+1}\binom{m_L}{k}e^{{  -f_2(z)}\frac{kb_Lr^{\alpha_L}{  \sigma^2_A}}{P_AC_L}}\nonumber\\
		&\quad\times 
		L_{I_A}\left({  f_2(z)}\frac{kb_Lr^{\alpha_L}}{P_AC_L}\right)\Bigg]f_{\hat{R}_{L,0},i}(r)dr,
	\end{align}
	and
	\begin{align}
		\bar{F}_{\gamma_{f,far}^L}(z)&\approx\sum_{i=1}^{2}\int_{R_f}^{\infty}\Bigg[\sum_{k=1}^{m_L}(-1)^{k+1}\binom{m_L}{k}e^{{  -f_1(z)}\frac{kb_LR_f^{\alpha_L}{  \sigma^2_A}}{P_AC_L}}\nonumber\\
		&\quad\times L_{I_A}\left({  f_1(z)}\frac{kb_LR_f^{\alpha_L}}{P_AC_L}\right)\Bigg]f_{\hat{R}_{L,0},i}(r)dr.
	\end{align}
\end{theorem}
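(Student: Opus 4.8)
The plan is to mirror the derivation of the preceding near-UE theorem, now decomposing the far-UE sum rate into the rate the typical (far) UE obtains from $\gamma_{t,far}^L$ (decoding its own message while treating the fixed UE as interference) and the rate the fixed (near) UE obtains after SIC from $\gamma_{f,far}^L$. The justification for using only these two own-message terms is the usual NOMA ordering argument: whenever the far UE can decode its own message, the better-positioned near UE can also decode it, so the cross-decoding event $\gamma_{f\rightarrow t,far}^L>\epsilon$ imposes no extra constraint on the rate. Concretely I would start from
\[
R_{far|LoS}=\mathbb{E}\left[\log_2(1+\gamma_{t,far}^L)+\log_2(1+\gamma_{f,far}^L)\right]
\]
and apply the tail-integral identity $\mathbb{E}[\log_2(1+X)]=\frac{1}{\ln 2}\int_0^\infty\frac{\bar F_X(z)}{1+z}\,dz$ to each summand, reducing the problem to evaluating the two complementary CDFs $\bar F_{\gamma_{t,far}^L}(z)$ and $\bar F_{\gamma_{f,far}^L}(z)$.

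For each CCDF I would substitute the corresponding SINR and rearrange the event $\{\gamma>z\}$ to isolate the Nakagami channel gain. For $\gamma_{t,far}^L$ this yields $H_{L,0}>f_2(z)\frac{R_{L,0}^{\alpha_L}(I_A+\sigma_A^2)}{P_AC_L}$, while for $\gamma_{f,far}^L$ it yields $H_{L,f}>f_1(z)\frac{R_f^{\alpha_L}(I_A+\sigma_A^2)}{P_AC_L}$; note that the first event carries the variable serving distance $R_{L,0}$ whereas the second carries the fixed distance $R_f$. Since $H_{L,0},H_{L,f}\sim\Gamma(m_L,\tfrac{1}{m_L})$, I would then invoke Alzer's lemma to write each Gamma tail as the alternating sum $\sum_{k=1}^{m_L}(-1)^{k+1}\binom{m_L}{k}e^{-kb_L(\cdot)}$ with $b_L=m_L(m_L!)^{-1/m_L}$, factor out the deterministic noise contribution, and identify the remaining expectation over $I_A=I_L+I_N$ as the Laplace transform $L_{I_A}$ already derived earlier. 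Taking the expectation over the serving distance against the conditional density $f_{\hat R_{L,0},i}$, summing over the two cases $i\in\{1,2\}$ (no TBS versus a TBS inside $\mathcal B(0,R_B)$), and restricting the integration to the far-UE region $r>R_f$ then reproduces the two displayed expressions verbatim.

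The step I expect to be the main obstacle — essentially the only nonroutine point — is tracking the positivity conditions that appear when clearing denominators, since they pin down the finite upper limits of the two rate integrals. For $\gamma_{t,far}^L$ the coefficient multiplying $H_{L,0}$ is $a_m-a_nz$, which forces $z<\frac{a_m}{a_n}$ and makes $\bar F_{\gamma_{t,far}^L}(z)=0$ beyond that threshold; for $\gamma_{f,far}^L$ the ipSIC residual $\beta a_mP_AC_LR_f^{-\alpha_L}H_{L,f}$ in the denominator contributes a $-\beta a_m z$ term, leaving the coefficient $a_n-\beta a_m z$ and hence the cutoff $z<\frac{a_n}{\beta a_m}$. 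Matching each threshold to its correct summand, and to the correct distance ($R_{L,0}$ for the typical far UE versus $R_f$ for the fixed UE), is exactly what distinguishes the far-UE bookkeeping from the near-UE case; once these are fixed, the remaining manipulations are identical to those carried out in Appendix \ref{Appendix_D}.
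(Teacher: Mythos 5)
Your proposal is correct and follows essentially the same route as the paper, which proves the near-UE theorem in detail (tail-integral identity, SINR rearrangement isolating the Gamma channel gain, Alzer's lemma, the Laplace transform $L_{I_A}$, and averaging over $f_{\hat{R}_{L,0},i}$) and then obtains the far-UE case "by following the same procedures." Your bookkeeping of the two cutoffs ($z<\frac{a_m}{a_n}$ from $a_m-a_nz$ for $\gamma_{t,far}^L$ and $z<\frac{a_n}{\beta a_m}$ from the ipSIC term $a_n-\beta a_m z$ for $\gamma_{f,far}^L$), of the distance arguments ($R_{L,0}$ versus the fixed $R_f$), and of the NOMA ordering assumption dismissing the cross-decoding event all match the paper exactly.
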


\begin{theorem}
	When the typical UE is associated with an ABS, and	for the NLoS near UE case, the achievable ergodic rate can be expressed as
	\begin{equation}
		\small
		R_{near|NLoS}=\frac{1}{\ln 2}\int_{0}^{{  \frac{a_n}{\beta a_m}}}\frac{\bar{F}_{\gamma_{t,near}^N}(z)}{1+z}dz+\frac{1}{\ln 2}\int_{0}^{\frac{a_m}{a_n}}\frac{\bar{F}_{\gamma_{f,near}^N}(z)}{1+z}dz,
	\end{equation}
	where 
	\begin{align}
		\bar{F}_{\gamma_{t,near}^N}(z)&\approx \int_{h}^{R_f}\Bigg[\sum_{k=1}^{m_N}(-1)^{k+1}\binom{m_N}{k}e^{{  - f_1(z)}\frac{kb_Nr^{\alpha_N}{  \sigma^2_A}}{P_AC_N}}\nonumber\\
		&\quad\times L_{I_A}\left({  f_1(z)}\frac{kb_Nr^{\alpha_N}}{P_AC_N}\right)\Bigg]f_{\hat{R}_{N,0}}(r)dr,
	\end{align}
	and
	\begin{align}
		\bar{F}_{\gamma_{f,near}^N}(z)&\approx\int_{h}^{R_f}\Bigg[\sum_{k=1}^{m_N}(-1)^{k+1}\binom{m_N}{k}e^{{  -f_2(z)}\frac{kb_NR_{f}^{\alpha_N}{  \sigma^2_A}}{P_AC_N}}\nonumber\\
		&\quad\times L_{I_A}\left({  f_2(z)}\frac{kb_NR_{f}^{\alpha_N}}{P_AC_N}\right)\Bigg]f_{\hat{R}_{N,0}}(r)dr.
	\end{align}
	
\end{theorem}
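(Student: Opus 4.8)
The plan is to mirror the argument already carried out for the LoS near-UE theorem. First I would write the rate as $R_{near|NLoS}=\mathbb{E}\left[\log_2(1+\gamma_{t,near}^N)+\log_2(1+\gamma_{f,near}^N)\right]$, i.e.\ the sum of the throughput of the typical (near) UE decoding its own stream after SIC and of the fixed UE's stream. Applying the tail identity $\mathbb{E}[\log_2(1+X)]=\frac{1}{\ln 2}\int_0^\infty \frac{\bar{F}_X(z)}{1+z}\,dz$ to each term reduces the problem to computing the two complementary CDFs $\bar{F}_{\gamma_{t,near}^N}(z)$ and $\bar{F}_{\gamma_{f,near}^N}(z)$, exactly as in the LoS case.

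Next I would obtain the NLoS SINR expressions as the direct analogues of (\ref{t_near_LoS}) and (\ref{f_near_LoS}) with the LoS channel parameters $(\alpha_L,C_L,m_L,H_{L,\cdot})$ replaced by $(\alpha_N,C_N,m_N,H_{N,\cdot})$, since the model treats the two link types identically up to their parameters. For the typical UE I would rearrange $\gamma_{t,near}^N>z$ into $(a_n-\beta a_m z)P_AC_NR_{N,0}^{-\alpha_N}H_{N,0}>z(I_A+\sigma^2_A)$; positivity of the bracket forces $z<\frac{a_n}{\beta a_m}$ and yields $H_{N,0}>f_1(z)R_{N,0}^{\alpha_N}(I_A+\sigma^2_A)/(P_AC_N)$, so $\bar{F}_{\gamma_{t,near}^N}(z)$ vanishes for $z\geq\frac{a_n}{\beta a_m}$, which fixes the upper limit of the first integral. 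The fixed-UE inequality $\gamma_{f,near}^N>z$ rearranges to $(a_m-a_n z)P_AC_NR_f^{-\alpha_N}H_{N,f}>z(I_A+\sigma^2_A)$, giving the threshold $H_{N,f}>f_2(z)R_f^{\alpha_N}(I_A+\sigma^2_A)/(P_AC_N)$ and the cutoff $z<\frac{a_m}{a_n}$.

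With the thresholds in hand I would follow the same steps as Appendix \ref{Appendix_D}: since $H_{N,0},H_{N,f}\sim\Gamma(m_N,\frac{1}{m_N})$, Alzer's Lemma \cite{gamma_bound} replaces the Gamma tail by $\sum_{k=1}^{m_N}(-1)^{k+1}\binom{m_N}{k}e^{-kb_N(\cdot)}$ with $b_N=m_N(m_N!)^{-1/m_N}$. Taking the expectation over the aggregate interference $I_A$ turns each exponential into the Laplace transform $L_{I_A}(\cdot)$ evaluated at the relevant argument, using the NLoS-association configuration (the $B_N$ row of the integration-limit table in the Laplace-transform lemma). Finally I would average over the serving distance using the conditional PDF $f_{\hat{R}_{N,0}}$; because the near-UE regime is $R_{N,0}\leq R_f$ and the smallest admissible ABS distance is $h$, the distance integral runs over $[h,R_f]$, producing the stated expressions. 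Note that, unlike the LoS theorem, there is no sum over the two sub-cases $i\in\{1,2\}$: by Assumption 1 a NLoS ABS is associated only when no TBS lies in $\mathcal{B}(0,R_B)$, so the single conditional density $f_{\hat{R}_{N,0}}$ suffices. The main obstacle is the interference expectation: one must verify that the $B_N$-row limits are the correct ones for a typical UE served by a NLoS ABS (interfering LoS ABSs start at $\tau_{L|N}(r)$ while NLoS interferers start at the serving distance), and that the noise factor $e^{-kb_N(\cdot)\sigma^2_A}$ separates cleanly from the interference Laplace factor.
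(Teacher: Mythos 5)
Your proposal is correct and follows essentially the same route as the paper, which proves this theorem by invoking ``the same procedures'' as the LoS near-UE case (Theorem 5): the tail-integral identity for the ergodic rate, the SINR rearrangements giving the cutoffs $z<\frac{a_n}{\beta a_m}$ and $z<\frac{a_m}{a_n}$ via $f_1$ and $f_2$, Alzer's Lemma with $b_N=m_N(m_N!)^{-1/m_N}$, the Laplace transform $L_{I_A}$ with the $B_N$-row integration limits, and averaging over $f_{\hat{R}_{N,0}}$ on $[h,R_f]$. Your additional observation that no sum over $i\in\{1,2\}$ appears---because by Assumption 1 the NLoS association only occurs when no TBS lies in $\mathcal{B}(0,R_B)$, so the single density $f_{\hat{R}_{N,0}}$ from Lemma \ref{conditional_N0} suffices---is exactly the right justification for the difference from the LoS theorem.
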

\begin{theorem}
	When the typical UE is associated with an ABS, and	for the NLoS far UE case, the achievable ergodic rate can be expressed as
	\begin{equation}
		\small
		R_{far|NLoS}=\frac{1}{\ln 2}\int_{0}^{{  \frac{a_n}{\beta a_m}}}\frac{\bar{F}_{\gamma_{f,far}^N}(z)}{1+z}dz+\frac{1}{\ln 2}\int_{0}^{\frac{a_m}{a_n}}\frac{\bar{F}_{\gamma_{t,far}^N}(z)}{1+z}dz,
	\end{equation}
	where	
	\begin{align}
		\bar{F}_{\gamma_{t,far}^N}(z)&\approx \int_{R_f}^{\infty}\Bigg[\sum_{k=1}^{m_N}(-1)^{k+1}\binom{m_N}{k}e^{{  -f_2(z)}\frac{kb_Nr^{\alpha_N}{  \sigma^2_A}}{P_AC_N}}\nonumber\\
		&\quad\times L_{I_A}\left({  f_2(z)}\frac{kb_Nr^{\alpha_N}}{P_AC_N}\right)\Bigg]f_{\hat{R}_{N,0}}(r)dr,
	\end{align}
	and
	\begin{align}
		\bar{F}_{\gamma_{f,far}^N}(z)&\approx\int_{R_f}^{\infty}\Bigg[\sum_{k=1}^{m_N}(-1)^{k+1}\binom{m_N}{k}e^{{  -f_1(z)}\frac{kb_NR_f^{\alpha_N}{  \sigma^2_A}}{P_AC_N}}\nonumber\\
		&\quad\times L_{I_A}\left({  f_1(z)}\frac{kb_NR_f^{\alpha_N}}{P_AC_N}\right)\Bigg]f_{\hat{R}_{N,0}}(r)dr.
	\end{align}	
\end{theorem}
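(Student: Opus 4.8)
The plan is to reproduce the derivation of the LoS far UE case, replacing the LoS channel statistics by their NLoS counterparts and using the NLoS conditional distance distribution. First I would write the achievable ergodic rate in the far UE scenario as the sum of the two decoding rates: the typical (far) UE decodes its own message while treating the fixed (near) UE signal as interference, and the fixed UE decodes its own message after applying SIC to the typical UE's message, so that $R_{far|NLoS}=\mathbb{E}[\log_2(1+\gamma_{t,far}^N)+\log_2(1+\gamma_{f,far}^N)]$. Applying the identity $\mathbb{E}[\log_2(1+\gamma)]=\frac{1}{\ln 2}\int_0^\infty\frac{\bar{F}_\gamma(z)}{1+z}dz$ to each term reduces the task to computing the two complementary CDFs $\bar{F}_{\gamma_{t,far}^N}$ and $\bar{F}_{\gamma_{f,far}^N}$.

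Next I would fix the finite integration limits from the saturation of each SINR. Because $\gamma_{t,far}^N$ carries the competing near-UE power $a_n$ in its denominator, it saturates at $a_m/a_n$ as the channel gain grows, so $\bar{F}_{\gamma_{t,far}^N}(z)=0$ for $z\geq a_m/a_n$; likewise the residual self-interference term $\beta a_m$ in $\gamma_{f,far}^N$ caps it at $a_n/(\beta a_m)$, giving $\bar{F}_{\gamma_{f,far}^N}(z)=0$ for $z\geq a_n/(\beta a_m)$. These are exactly the upper limits appearing in the statement.

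Then, following \textbf{Appendix \ref{Appendix_D}}, I would derive each tail probability. Rearranging $\gamma_{t,far}^N>z$ isolates the Gamma-distributed gain as $H_{N,0}>f_2(z)\frac{R_{N,0}^{\alpha_N}(I_A+\sigma^2_A)}{P_AC_N}$, and rearranging $\gamma_{f,far}^N>z$ gives $H_{N,f}>f_1(z)\frac{R_f^{\alpha_N}(I_A+\sigma^2_A)}{P_AC_N}$, with $f_1$ and $f_2$ as defined before the theorem. Invoking Alzer's Lemma to approximate the complementary CDF of the $\Gamma(m_N,\frac{1}{m_N})$ channel gain and then averaging over the aggregate interference through the Laplace transform $L_{I_A}(\cdot)$ yields the binomial sums with the factor $b_N=m_N(m_N!)^{-\frac{1}{m_N}}$.

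Finally I would average over the serving distance using $f_{\hat{R}_{N,0}}(r)$ from \textbf{Lemma \ref{conditional_N0}}; since the typical UE is the far UE its distance exceeds $R_f$, so the distance integral runs over $[R_f,\infty)$, producing the stated expressions. The main obstacle is purely bookkeeping: correctly pairing each SINR with its power factor ($a_m$ for the far/typical UE, $a_n$ for the near/fixed UE) and its distance ($R_{N,0}$ versus the fixed $R_f$), and keeping the roles of $f_1$ and $f_2$ straight, since that is where an index or sign slip would propagate; the analytic machinery is otherwise identical to the LoS far UE case.
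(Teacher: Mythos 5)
Your proposal is correct and follows essentially the same route as the paper, which proves this theorem implicitly by stating that it follows the same procedure as the LoS near UE case (Theorem 5): the rate decomposition $R_{far|NLoS}=\mathbb{E}[\log_2(1+\gamma_{t,far}^N)+\log_2(1+\gamma_{f,far}^N)]$, the CCDF integral identity, the saturation cutoffs $a_m/a_n$ and $a_n/(\beta a_m)$ arising from $f_2$ and $f_1$ respectively, Alzer's Lemma with $b_N=m_N(m_N!)^{-\frac{1}{m_N}}$, the Laplace transform $L_{I_A}(\cdot)$, and averaging against $f_{\hat{R}_{N,0}}(r)$ over $[R_f,\infty)$. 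You have also correctly handled the one piece of bookkeeping that matters here — pairing $f_2$ with the typical (far) UE at distance $R_{N,0}$ and $f_1$ with the fixed UE at distance $R_f$ — so there is nothing to flag.
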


\begin{theorem}
	The achievable ergodic rate of the NOMA enabled tier can be expressed as
	\begin{equation}\label{rate_A}
		\small
		R_A = \frac{A_L}{A_A}\left(R_{near|LoS}+R_{far|LoS}\right)+\frac{A_N}{A_A}\left(R_{near|NLoS}+R_{far|NLoS}\right).
	\end{equation}
\end{theorem}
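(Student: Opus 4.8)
The plan is to obtain (\ref{rate_A}) by a pure conditioning argument, since all of the genuinely analytical work---the Laplace transforms of the aggregate interference, the Alzer-type approximation of the complementary CDFs, and the conditional serving-distance densities---has already been carried out in the preceding theorems. Let $B_A$ denote the event that the typical UE is associated with the NOMA enabled ABS tier, and let $B_L$, $B_N$ denote the mutually exclusive and exhaustive sub-events that this association is with a LoS ABS and a NLoS ABS, respectively, so that $B_A = B_L \cup B_N$. By \textbf{Lemma \ref{association_N}} and \textbf{Lemma \ref{association_L}} we have $\mathbb{P}(B_L) = A_L$ and $\mathbb{P}(B_N) = A_N$, hence $\mathbb{P}(B_A) = A_A = A_L + A_N$. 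The quantity $R_A$ is, by definition, the ergodic rate of the UE served by this tier, i.e. $R_A = \mathbb{E}[\log_2(1+\gamma)\mid B_A]$.

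First I would apply the law of total expectation to split $R_A$ across the LoS/NLoS partition of $B_A$,
\begin{equation}
R_A = \mathbb{P}(B_L\mid B_A)\,\mathbb{E}[\,\cdot\mid B_L] + \mathbb{P}(B_N\mid B_A)\,\mathbb{E}[\,\cdot\mid B_N],
\end{equation}
and, since $B_L,B_N\subset B_A$, the Bayesian weights reduce to $\mathbb{P}(B_L\mid B_A) = A_L/A_A$ and $\mathbb{P}(B_N\mid B_A) = A_N/A_A$, which already reproduces the prefactors in (\ref{rate_A}). It then remains to identify $\mathbb{E}[\,\cdot\mid B_L]$ with $R_{near|LoS} + R_{far|LoS}$ and $\mathbb{E}[\,\cdot\mid B_N]$ with $R_{near|NLoS} + R_{far|NLoS}$. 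Here I would use the fact that, conditioned on $B_L$, the NOMA decoding regime is fixed deterministically by the serving distance $\hat{R}_{L,0}$: the typical UE plays the role of near UE when $\hat{R}_{L,0}\leq R_f$ and of far UE when $\hat{R}_{L,0} > R_f$. Splitting the support of $\hat{R}_{L,0}$ at $R_f$ and integrating the per-distance ergodic rates against the conditional density therefore yields exactly the two integrals defining $R_{near|LoS}$ and $R_{far|LoS}$, whose sum is the full conditional rate $\mathbb{E}[\,\cdot\mid B_L]$; the identical argument applies verbatim to the NLoS branch.

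The main obstacle---really the only nontrivial bookkeeping point---is justifying the inner sum $\sum_{i=1}^{2}$ that appears inside $R_{near|LoS}$ and $R_{far|LoS}$. This index marginalizes over the further conditioning event of whether a TBS exists in $\mathcal{B}(0,R_B)$: by \textbf{Lemma \ref{conditional_L0}} the densities $f_{\hat{R}_{L,0},1}$ and $f_{\hat{R}_{L,0},2}$ are the LoS serving-distance densities restricted to the no-TBS and TBS cases, respectively. I would verify that $f_{\hat{R}_{L,0},1} + f_{\hat{R}_{L,0},2}$ recovers the density conditioned on $B_L$ alone, so that summing the two contributions correctly averages out the TBS-existence event and leaves a clean conditioning on $B_L$; note that the NLoS branch carries no such sum because Assumption~1 forces $B_N$ to coincide with the no-TBS case. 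Once this consistency is checked, substituting the near/far LoS and NLoS expressions from the preceding theorems into the total-expectation formula yields (\ref{rate_A}) directly.
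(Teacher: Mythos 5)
Your proposal is correct and takes essentially the same route as the paper, whose one-line proof likewise obtains (\ref{rate_A}) by averaging the conditional sum rates of \textbf{Theorems 5--8} over the LoS/NLoS association events with the weights $A_L/A_A$ and $A_N/A_A$; your elaboration of the near/far split at $R_f$ and of the sum over $i$ (TBS present or absent in $\mathcal{B}(0,R_B)$, with the NLoS branch exempt by Assumption~1) is exactly what those theorems encode. One minor imprecision worth noting: $R_A$ as assembled here is the NOMA pair's sum rate (each of $R_{near|\cdot}$ and $R_{far|\cdot}$ contains two log terms, for the typical and the fixed UE), rather than a single-UE rate $\mathbb{E}[\log_2(1+\gamma)\mid B_A]$, but this does not affect your conditioning argument.
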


\begin{proof}
	Based on \textbf{Theorems 5-8}, and  due to the fact that the typical UE is associated with an ABS, the result in (\ref{rate_A}) can be  obtained. 
\end{proof}

	\section{Simulation results}

	In this section, numerical results are provided to facilitate the performance evaluations of  the proposed UAV-aided HetNets.   100,000 times Monte Carlo simulations are  carried out to verify the accuracy of the  analytical expressions. All the horizontal locations of  ABSs and TBSs are distributed in a disc with radius  $5\times10^4$  m. Other  parameters are summarized in Table \ref{Table of Parameters}.
	
	\begin{table}[h!]
		\footnotesize
		\renewcommand\tabcolsep{2.0pt}
		\centering
		\caption{Table of Parameters}\label{Table of Parameters}
		\begin{tabular}{| l | l || l | l | }
			\hline
			\textbf{Parameter}  & \textbf{Value}  & \textbf{Parameter} & \textbf{Value} \\
			\hline
			$\lambda_T$ & $10^{-5}/m^2$ & $\lambda_A$ & $0.2\lambda_T$\\
			\hline
			$(a$, $b)$ & (12.08, 0.11)  & $h$,  $R_f$& 200 m, $1.1h$ \\
			\hline
			$R_B$  & 220 m  &  $\beta$  & 0.1\\
			\hline
			$(\alpha_N,\alpha_L,\alpha_T)$ & (3, 2.5, 2) & ($C_N, C_L, C_T$) & (10, 3, 3) dB\\
			\hline
			$(m_N, m_L, m_T)$ & (1, 2, 2) & ($a_m, a_n$) & (0.8, 0.2) \\
			\hline
			$(P_T, P_A)$ & (20, 59) dBm & ($\sigma^2_T, \sigma^2_A$) & ($-70$, $-104$) dBm\\
			\hline
			$N_T$ & $4$ & ($\theta_a, \theta_d$) & ($\sqrt{3/N_T}, \sqrt{3/N_T}$)\\
			\hline
			$G_M$  & $N_T$ & $G_m$ & $\frac{\sqrt{N_T}-\frac{\sqrt{3}N_T\sin\left(\frac{3\pi}{2\sqrt{N_T}}\right)}{2\pi}}{\sqrt{N_T}-\frac{\sqrt{3}\sin\left(\frac{3\pi}{2\sqrt{N_T}}\right)}{2\pi}}$\\
			\hline
		\end{tabular}
	\end{table}

	First, we evaluate the association performance of the UAV-aided HetNets. In Fig. \ref{fig_1}, for a given set of ABSs/TBSs densities, the solid curves and dashed curves are the association probability for TBSs and ABSs, respectively. It shows the association probability versus the altitude of ABSs, and the simulation results and analytical results match perfectly. We can observe that, at low altitudes, the probability that a typical UE is associated with an ABS is small due to the fact it experiences a large fraction of NLoS A2G links, which results in a large path loss. Then with the increase of the altitude of ABSs, the probability that a typical UE is associated with an ABS begins to increase.  This is because the channel condition between the typical UE and the ABSs becomes better, i.e., the fraction of the LoS A2G links increases. When the altitude of the ABSs further increases, although most of the A2G links experience the LoS condition, the huge distance between the typical UE and the associated ABS will result in a non-negligible path loss. We can also notice that when the ratio of densities of ABSs and TBSs are fixed,  a typical UE will tend to associate with a TBS when  their densities are increased. This is rather intuitive due to the fact that the path loss exponent in mmWave transmissions is small.
	\begin{figure}[!ht]
		\centering
		\scalebox{0.35}{\includegraphics{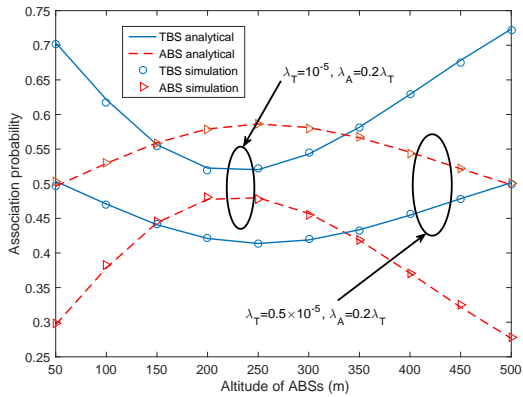}}
		\caption{Probability of association to an ABS/TBS versus altitude of ABSs for different densities }\label{fig_1}
	\end{figure}
	
	We then investigate the coverage performance of the UAV-aided HetNets. Fig. \ref{fig_3} shows the coverage probability achieved by the typical UE  when it is associated with a TBS. From the figure, we can observe that the analytical  results match the simulation results well. Moreover, the analytical results are slightly larger than that of the simulation results. This is because we adopt the Alzer's Lemma to provide an approximation  of a Gamma random variable when $m_T=2$. We also plot the approximated analytical results shown in (\ref{cov_T_app}) in \textbf{Corollary 1}.  The closed-form expression by adopting the Gaussian-Chebyshev quadrature has a relatively exact value compared with the simulation results. From Fig. \ref{fig_3}, we also notice that due to the use of mmWave, the typical UE can achieve a large coverage probability, and with the increase of the number of  antennas, the coverage probability will increase significantly even at a large SINR threshold.
	
	%
	
	\begin{figure}[!ht]
		\centering
		\scalebox{0.35}{\includegraphics{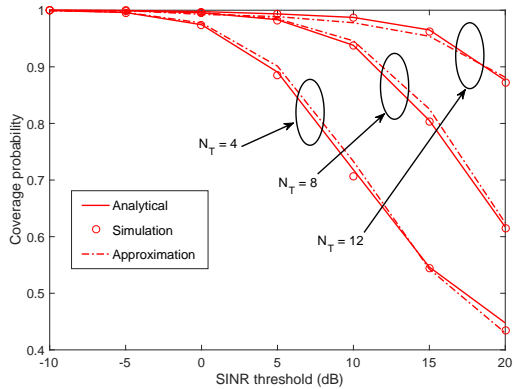}}
		\caption{TBS tier (mmWave tier) coverage probability versus SINR threshold, $h=200$ m}\label{fig_3}
	\end{figure}
	
	\begin{figure}[!ht]
		\centering
		\scalebox{0.35}{\includegraphics{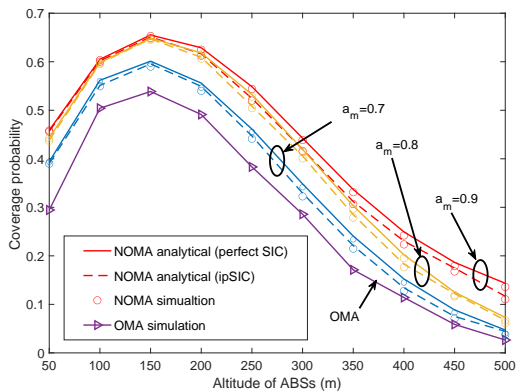}}
		\caption{NOMA enabled ABS tier coverage probability versus altitude of ABSs, $\epsilon_f=\epsilon_t=0$ dB}\label{fig_4}
	\end{figure}
	
	Fig. \ref{fig_4} plots the coverage probability of a typical UE versus the altitude of ABSs when it is associated with an ABS both in the NOMA  and OMA schemes, and the impact of imperfect SIC is studied. Note that the performance of the OMA scheme is only shown through the numerical approach, and it is adopted by dividing these two UEs in equal time/frequency slots.
	Similar to the results in Fig. \ref{fig_1}, the coverage probability will first increase then decrease due to the A2G  channel characteristics. We also demonstrate the superiority of NOMA over OMA  when the power allocation factors are selected appropriately under some SINR threshold. The imperfect SIC will reduce the performance the NOMA assisted network.
	It is worth mentioning that the power allocation between these two NOMA UEs can affect the coverage probability  significantly. However,  the optimization of the power allocation is beyond the scope of this paper.

	\begin{figure}[!ht]
		\centering
		\scalebox{0.35}{\includegraphics{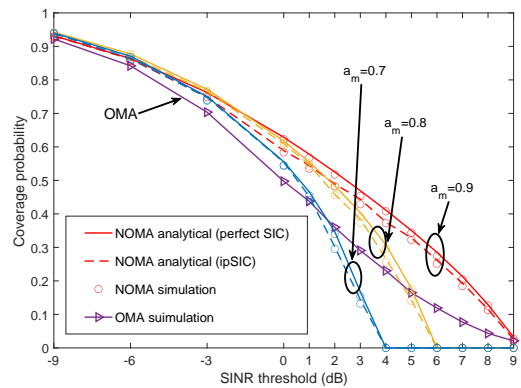}}
		\caption{NOMA enabled ABS tier  coverage probability versus SINR threshold, $h=200$ m}\label{fig_5}
	\end{figure}
	
	\begin{figure}[!ht]
		\centering
		\scalebox{0.38}{\includegraphics{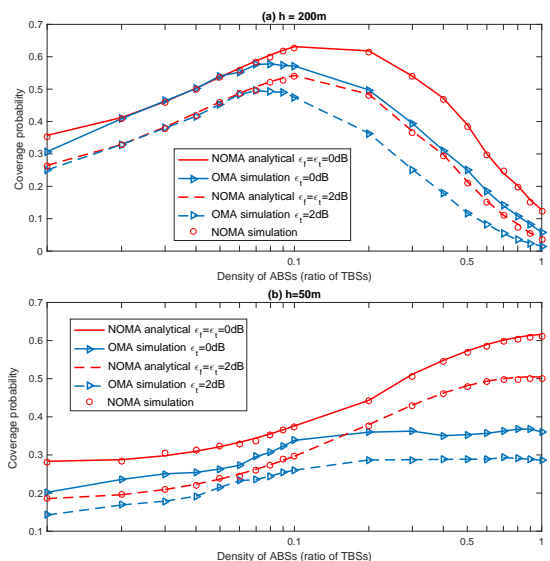}}
		\caption{NOMA enabled ABS tier coverage probability versus density of ABSs}\label{fig_7}
	\end{figure}
	
	\begin{figure}[!ht]
		\centering
		\scalebox{0.36}{\includegraphics{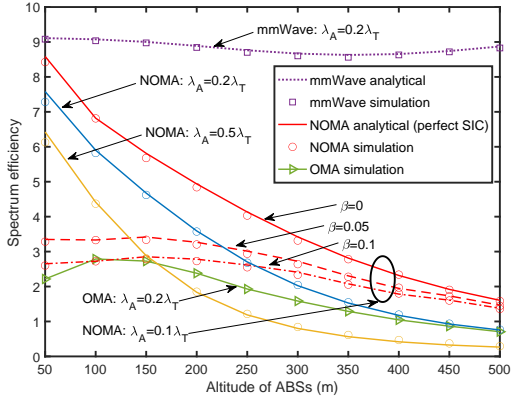}}
		\caption{Spectrum efficiency versus altitude of ABSs}\label{fig_6}
	\end{figure}
	Next, Fig. \ref{fig_5} plots the coverage probability of a typical UE versus the SINR threshold when it is associated with an ABS both in the NOMA and OMA schemes. The results show that  with the increase of the SINR threshold, the coverage probability decreases both for the NOMA and OMA schemes. It is also observed that the OMA scheme outperforms  some particular NOMA scheme when the SINR threshold is large. This again demonstrates  the importance of using power allocation. Besides,  the coverage probability is always zero in the case of inappropriate power allocation factors given the target SINR threshold of the typical UE and the fixed UE, which  coincides with \textbf{Remark 4}.
	
	We then study the impact of network density on the  coverage probability versus the density of ABSs when the typical UE is associated with an ABS. Fig. \ref{fig_7} shows the coverage behaviour when the targeted SINR threshold is 0 dB and 2 dB with $h=200$ m and $h=50$ m, respectively. The density of TBSs is fixed as $10^{-5}/m^2$. It is observed that  when $h=200$ m, with the increase of the density of the ABSs, the coverage probability of the typical UE first increases and then decreases. This is because as we increase the density of the ABS, the  probability that the ABSs can be selected as the serving BS will be increased. When the ABSs density is further increased, the interference from the non-serving ABSs will  become larger, causing the decrease of the coverage probability. On the other hand, when $h=50$ m, the coverage probability of the typical UE has an ascending trend. This is due to the fact that at low altitude interference is dominated by the NLoS signals. The results also show that under these settings, the NOMA scheme outperforms the OMA scheme, which again verifies the effectiveness of NOMA.

	Finally, we  verify the rate performance of the proposed UAV-aided HetNets. Fig. \ref{fig_6} shows the spectrum efficiency versus the altitude of ABSs.  One  can observe that due to the deployment of multiple antennas, mmWave can achieve a high spectrum efficiency regardless the change of ABSs altitude. In the NOMA scheme, the spectrum efficiency  decreases with the increase of the altitude of ABSs with the perfect SIC, which indicates that the interference increases at a larger rate than  the increase in the desired signals. Combined with the results in Fig. \ref{fig_1}, we notice that with the increase of  ABSs altitude, even the association probability connected to an ABS increases,  the spectrum efficiency will decrease. This phenomenon reveals that the interference from the LoS ABSs should be carefully considered when we design the networks. We also notice the spectrum efficiency is affected to a large extent by ipSIC coefficient. On the other hand, in the OMA scheme, the  spectrum efficiency will first increase and then decrease,  due to at low altitude, the probability of LoS A2G transmissions is small.  These results also verify the effectiveness of our proposed UAV-aided HetNets.
	
	\section{Conclusions}
	In this paper, a novel UAV-aided HetNets model consisting the NOMA-based ABSs and mmWave TBSs has been proposed and a flexible NOMA based UE association policy  has been investigated. By using the tools from stochastic geometry, we derived the analytical expressions for the distance distributions  given that the typical UE is associated with a TBS, a NLoS ABS or a LoS ABS. Additionally, new  analytical expressions for association  probability,  coverage probability and spectrum efficiency  have also been derived for  characterizing the performance of UAV-aided HetNets under  the realistic A2G/G2G channels.  Moreover, we provided the approximated expressions for the coverage probability and spectrum efficiency  to simplify the analytical results. Finally, in the numerical results, we provided insights  for the design of the  HetNets  i.e., the UEs tend to associate with the TBSs when the ABSs are deployed at low altitudes in a dense environment.We studied the impact of the ABSs altitude, network density, NOMA power allocation, SINR threshold and ipSIC coefficient in terms of the whole network performance. Analytical and  simulated results demonstrated that terrestrial mmWave small cells can offer high capacity and the ABSs with NOMA are capable of achieving superior performance compared with the  ABSs with OMA.

	\begin{appendices}
		\renewcommand{\thesectiondis}[2]{\Alph{section}:}
		\section{Proof of Lemma \ref{association_L}}\label{Appendix_A}
		When the typical UE is associated to a LoS ABS, and there does not exist a TBS in $\mathcal{B}(0,R_B)$,
		\begin{align}
			A_{L,1}&=(1-Q_T)\mathbb{P}(A_{L,1}^{N})\nonumber\\
			&=(1-Q_T)\mathbb{P}(\eta_LR_{L,0}^{-\alpha_L}>\eta_NR_{N,0}^{-\alpha_N})\nonumber\\
			&=(1-Q_T)\Bigg[\int_{h}^{l_{L,h}}f_{R_{L,0}}(r)dr+\int_{l_{L,h}}^{\infty}  \exp\Bigg(-2\pi\lambda_A\nonumber\\
			&\quad\times\int_{0}^{\sqrt{\tau^2_{N|L}(r)-h^2}}xP_N(x)dx\Bigg) f_{R_{L,0}}(r) dr\Bigg],
		\end{align}
		where $A_{L,1}^N$ denotes the probability that the power received from the closest  LoS ABS is stronger than that received from the closest  NLoS ABS in the above scenario.
		
		When there exists a TBS in $\mathcal{B}(0,R_B)$, the corresponding association probability is given by
		\begin{equation}
			A_{L,2}=Q_TQ_L\mathbb{P}(A_{L,2}^{N})\mathbb{P}(A_{L,2}^{T}),
		\end{equation}
		where $A_{L,2}^N$ and  $A_{L,2}^T$ denote the probability that the power received from the closest  LoS ABS is stronger than that received from the closest NLoS ABS  and TBS in the above scenario, respectively. Then we have
		\begin{align}
			\mathbb{P}(A_{L,2}^{T})&=\frac{\mathbb{P}\left(R_{T,0}>\left(\frac{\eta_T}{\eta_L}\right)^{\frac{1}{\alpha_T}}R_{L,0}^{\frac{\alpha_L}{\alpha_T}}|R_{T,0}\leq R_B\right)}{Q_LQ_T}\nonumber\\
			&\overset{(a)}{=}\frac{\int_{h}^{l_{L,T}}[\mathbb{P}(R_{T,0}>\tau_{T|L}(r))-(1-Q_T)]f_{R_{L,0}}(r)dr}{Q_LQ_T}\nonumber\\
			&=\frac{\int_{h}^{l_{L,T}}  \left[\exp\left(-\pi\lambda_T \tau^2_{T|L}(r) \right)-(1-Q_T)\right]{f_{R_{L,0}}(r)}}{Q_LQ_T},
		\end{align}
		where $(a)$ is due to the fact that there exists a TBS in $\mathcal{B}(0,R_B)$.
		
		Similarly, we can derive the expression of $\mathbb{P}(A_{L,2}^{N})$ shown in (\ref{A_L2_N}).

		
		
		\section{Proof of Lemma \ref{laplace_T}}\label{Appendix_C}
		When the typical UE is associated with a TBS, the interference from the TBSs can be expressed as $I_T=\sum_{x_{T,i}\in\Phi_T\cap\mathcal{B}(0,R_B)\backslash x_{T,0}} G_i P_TC_TR_{T,i}^{-\alpha_T}H_{T,i} $, then the expressions in (\ref{app_B_1}) which is shown at the top of next page can be obtained,
		\begin{figure*}[!t]
			\begin{equation}\label{app_B_1}
				\begin{split}
					L_{I_T}(s)&=\mathbb{E}_{I_T}\left[e^{-sI_T}\right]\\
					&\overset{(a)}{=}\mathbb{E}_{\Phi_T}\left[\prod_{x_{T,i}\in\Phi_T\cap\mathcal{B}(0,R_B)\backslash x_{T,0}}\mathbb{E}_{H_{T,i}}\left[e^{-sG_i P_TC_TR_{T,i}^{-\alpha_T}H_{T,i}}\right]\right]\\
					&\overset{(b)}{=}\exp\left(-2\pi\lambda_T\sum_{j\in\{M,m\}}p_j\int_{R_{T,0}}^{R_B}\left(1-\mathbb{E}_{H_{T,i}}\left[e^{-sG_j P_TC_Tr^{-\alpha_T}H_{T,i}}\right]\right)rdr\right)\\
					&\overset{(c)}{=}\exp\left(-2\pi\lambda_T\sum_{j\in\{M,m\}}p_j\int_{R_{T,0}}^{R_B}\left(1-\left(1+\frac{sP_TC_TG_j}{m_Tr^{\alpha_T}}\right)^{-m_T}\right)rdr\right)\\
					&\overset{(d)}{=}\exp\left(-2\pi\lambda_T\sum_{j\in\{M,m\}}\sum_{i=1}^{m_T}p_j\binom{m_T}{i}\left(\frac{sP_TC_TG_j}{m_T}\right)^i\int_{R_{T,0}}^{R_B}\frac{r^{-\alpha_Ti+1}}{\left(1+\frac{sP_TC_TG_j}{m_Tr^{\alpha_T}}\right)^{m_T}}dr\right)\\
					&\overset{(e)}{=}\exp\left(\frac{-2\pi\lambda_T}{\alpha_T}\sum_{j\in\{M,m\}}\sum_{i=1}^{m_T}p_j\binom{m_T}{i}\left(\frac{sP_TC_TG_j}{m_T}\right)^{\frac{2}{\alpha_T}}(-1)^{\frac{2}{\alpha_T}-i+1}\int_{t_{j,l}}^{t_{j,u}}\frac{t_j^{i-\frac{2}{\alpha_T}-1}}{\left(1-t_j\right)^{m_T}}dt_j\right).
				\end{split}
			\end{equation}
			\hrulefill
		\end{figure*}
		where $(a)$ is due to the properties of exponential terms, $(b)$ is obtained by using probability generating functional (PGFL) \cite{new_3},  $(c)$ is  obtained by computing the moment generating function of a Gamma random variable, $(d)$ follows from the binomial theorem, and $(e)$ is obtained by adopting $t_j=-\frac{sP_TC_TG_j}{m_Tr^{\alpha_T}}$.  Finally, from \cite{table}, $L_{I_T}(s)$ can be expressed as the form in (\ref{laplace_TBS}).
		
		\section{Proof of Lemma \ref{coverage_T}}\label{Appendix_D}
		The coverage probability of a typical UE when it is associated with a TBS can be expressed as (\ref{app_C_1}) which is shown at the top of next page,
		\begin{figure*}[htb]
			\begin{equation}\label{app_C_1}
				\begin{split}
					P_T^C&=\int_{0}^{R_B}\mathbb{P}\left(H_{T,0}\geq r^{\alpha_T}\frac{\nu_T}{G_MP_TC_T}(I_T+  \sigma^2_T)\right)f_{\hat{R}_{T,0}}(r)dr\\
					&\overset{(a)}{=}\int_{0}^{R_B}\left(\mathbb{E}_{I_T}\left[\exp\left(-m_Tr^{\alpha_T}\varepsilon_TI_T\right)\cdot\sum_{k=0}^{m_T-1}\frac{\left(m_Tr^{\alpha_T}\varepsilon_T(I_T+  \sigma^2_T)\right)^k}{k!}\right]\exp\left(-m_Tr^{\alpha_T}\varepsilon_T  \sigma^2_T\right)\right)f_{\hat{R}_{T,0}}(r)dr\\
					&\overset{(b)}{=}\int_{0}^{R_B}\Bigg(\sum_{k=0}^{m_T-1}\sum_{p=0}^{k}\binom{k}{p}\frac{r^{\alpha_Tk}(m_T\varepsilon_T)^k}{k!}\exp\left(-m_Tr^{\alpha_T}\varepsilon_T  \sigma^2_T\right)\left(\sigma^2_T\right)^p\mathbb{E}_{I_T}\left[\exp\left(-m_Tr^{\alpha_T}\varepsilon_TI_T\right)(I_T)^{k-p}\right]\Bigg)f_{\hat{R}_{T,0}}(r)dr\\
					&\overset{(c)}{=}\int_{0}^{R_B}\Bigg(\sum_{k=0}^{m_T-1}\sum_{p=0}^{k}\binom{k}{p}\frac{r^{\alpha_Tk}(m_T\varepsilon_T)^k}{k!}\exp\left(-m_Tr^{\alpha_T}\varepsilon_T  \sigma^2_T\right)\left(\sigma^2_T\right)^p\left[(-1)^{k-p}\frac{\partial^{k-p}}{\partial s^{k-p}}L_{I_T}(s)\right]_{s=m_T\varepsilon_Tr^{\alpha_T}}\Bigg)f_{\hat{R}_{T,0}}(r)dr.
				\end{split}
			\end{equation}
			\hrulefill
		\end{figure*}	
		where $(a)$ is obtained from the CCDF of Gamma distribution, i.e. $\bar{F}_G(g)=\frac{\Gamma_u(m_T,m_Tg)}{\Gamma(m_T)}=\exp(-m_Tg)\sum_{k=0}^{m_T-1}\frac{(m_Tg)^k}{k!}$, $(b)$ follows from the binomial theorem, and $(c)$ is from the fact that $\mathbb{E}_{I_T}[\exp(-sI_T)I_T^i]=(-1)^i\frac{\partial^i}{\partial s^i}L_{I_T}(s)$.
		
		Since the expression (\ref{app_C_1}) involves  higher order derivatives of the Laplace transform which needs the aid of Faa Di Bruno's formula\cite{Faa_Di_Bruno}. To reduce the high complexity, we adopt the Alzer's Lemma to give an approximation  of a Gamma random variable\cite{gamma_bound}, which is given by
		\begin{equation}
			\mathbb{P}(g<\tau)\approx\left[1-e^{-b_T\tau}\right]^{m_T},
		\end{equation}
		where $b_T=m_T(m_T!)^{-\frac{1}{m_T}}$, then we have
		\begin{align}
			&P_T^C=\int_{0}^{R_B}\left(1-\mathbb{P}\left(H_{T,0}< r^{\alpha_T}\varepsilon_T\left(I_T+\sigma^2_T\right)\right)\right)f_{\hat{R}_{T,0}}(r)dr\nonumber\\
			&{\approx}\int_{0}^{R_B}\left(1-\mathbb{E}_{I_T}\left[\left(1-e^{-b_Tr^{\alpha_T}\varepsilon_T\left(I_T+ \sigma^2_T\right)}\right)^{m_T}\right]\right)f_{\hat{R}_{T,0}}(r)dr\nonumber\\
			&\quad\overset{(d)}{=}\int_{0}^{R_B}\Bigg(\sum_{k=1}^{m_T}(-1)^{k+1}\binom{m_T}{k}e^{-kb_Tr^{\alpha_T}\varepsilon_T\sigma^2_T}\nonumber\\
			&\quad\quad\quad\times\mathbb{E}_{I_T}\left[e^{-kb_Tr^{\alpha_T}\varepsilon_TI_T}\right]\Bigg)f_{\hat{R}_{T,0}}(r)dr\nonumber\\
			&\quad\overset{(e)}{=}\int_{0}^{R_B}\Bigg(\sum_{k=1}^{m_T}(-1)^{k+1}\binom{m_T}{k}e^{-kb_Tr^{\alpha_T}\varepsilon_T\sigma^2_T}\nonumber\\
			&\quad\quad\quad\times L_{I_T}(kb_Tr^{\alpha_T}\varepsilon_T)\Bigg)f_{\hat{R}_{T,0}}(r)dr,
		\end{align}
		where  $(d)$ follows from the binomial theorem, and $(e)$ is obtained using the Laplace transform of interference.
	\end{appendices}
	
	\vspace{-1em}
	\bibliography{references}


\end{document}